\colorlet{dblue}{blue!40!black}
\spnewtheorem{convention}[theorem]{Convention}{\bfseries}{\itshape}
\spnewtheorem{notation}[definition]{Notation}{\bfseries}{\itshape}
\newcounter{environment}
\newcommand{\store}[4]{
  \if\relax\detokenize{#2}\relax
  \expandafter\def\csname store#3\endcsname{\begin{#1}\label{#3}#4\end{#1}}%
  \else
  \expandafter\def\csname store#3\endcsname{\begin{#1}[#2]\label{#3}#4\end{#1}}%
  \fi
  \csname store#3\endcsname%
}
\newcommand{\use}[1]{{
  \renewcommand{\label}[1]{}%
  \renewcommand{\qedappendix}{}%
  \renewcommand{\footnote}[1]{}%
  \setcounter{environment}{\arabic{theorem}}%
  \setcounterref{theorem}{#1}%
  \addtocounter{theorem}{-1}%
  \csname store#1\endcsname%
  \setcounter{lemma}{\arabic{environment}}%
}}
\newcommand{\pbpostrong}{PBPO$^{+}$\xspace}
\newcommand{\catname}[1]{{\normalfont\textbf{#1}}\xspace}
\newcommand{\Set}{\catname{Set}}
\newcommand{\SimpleGraph}{\catname{SimpleGraph}}
\newcommand{\Graph}{\catname{Graph}}
\newcommand{\GraphLattice}[1]{\normalfont\textbf{Graph}$^{#1}$\xspace}
\newcommand{\labels}{\mathcal{L}}
\newcommand{\id}[1]{\mathrm{1}_{#1}}
\newcommand{\obj}[1]{\mathrm{Obj}(#1)}
\newcommand{\mono}{\rightarrowtail}
\newcommand{\epi}{\twoheadrightarrow}
\newcommand{\leftto}{\leftarrow}
\newcommand{\iso}{\cong}
\newcommand{\MM}{\mathcal{M}}
\newcommand{\CC}{\catname{C}}
\tikzset{graphNode/.style={circle,draw=black,inner sep=.5mm,outer sep=1mm}}
\tikzset{np/.style={n,fill=black}}
\tikzset{nc/.style={n,minimum size=2mm,fill=none,densely dotted}}
\tikzset{ep/.style={-red>}}
\tikzset{ec/.style={-red>,densely dotted}}
\tikzset{default/.style={
  thick,
  every node/.style={circle},
  level distance=12mm, 
  inner sep=.5mm}}
\tikzset{smallCircle/.style={circle,fill=black,inner sep=0mm,outer sep=1mm,minimum size=1mm}}
\tikzset{paint/.style={very thick,draw=#1!50!black,fill=#1,opacity=.4}}
\tikzset{paintopaque/.style={very thick,draw=#1!50!black!60,fill=#1!60}}
\tikzset{loopl/.style={out=140,in=210,looseness=10}}
\tikzset{loopr/.style={out=-40,in=30,looseness=10}}
\tikzset{loopb/.style={out=-40-90,in=30-90,looseness=10}}
\tikzset{loopt/.style={out=-40+90,in=30+90,looseness=10}}
\tikzset{loop/.style={out=-30+#1,in=30+#1,distance=6mm}}
\tikzset{thinloop/.style={out=-20+#1,in=20+#1,looseness=20}}
\tikzset{emptystep/.style={-,dotted,line cap=round,dash pattern=on 0 off 3.00000}}
\tikzset{b/.style={anchor=north,at=(#1.south)}}
\tikzset{br/.style={anchor=north west,at=(#1.south east)}}
\tikzset{bl/.style={anchor=north east,at=(#1.south west)}}
\tikzset{bw/.style={anchor=north west,at=(#1.south west)}}
\tikzset{be/.style={anchor=north east,at=(#1.south east)}}
\tikzset{a/.style={anchor=south,at=(#1.north)}}
\tikzset{ar/.style={anchor=south west,at=(#1.north east)}}
\tikzset{al/.style={anchor=south east,at=(#1.north west)}}
\tikzset{aw/.style={anchor=south west,at=(#1.north west)}}
\tikzset{ae/.style={anchor=south east,at=(#1.north east)}}
\tikzset{r/.style={anchor=west,at=(#1.east)}}
\tikzset{l/.style={anchor=east,at=(#1.west)}}
\tikzset{rn/.style={anchor=north west,at=(#1.north east)}}
\tikzset{eta/.style={very thick,->,cblue!90!black}}
\tikzset{beta/.style={very thick,->,corange!80!white!90!black}}
\tikzset{devcirc/.style={circle,draw,fill=white,inner sep=0,minimum size=4\pgflinewidth}}
\tikzset{dev/.style={postaction={decorate},decoration={
  markings,
  mark=at position .5 with \node [devcirc] {};}}
}
\tikzset{medium tree/.style={
    level 1/.style={sibling distance=17mm},
    level 2/.style={sibling distance=9mm},
    level 3/.style={sibling distance=5mm},
    level 4/.style={sibling distance=4mm},
  }}
\tikzset{startAt/.style={inner sep=0mm,r=#1,xshift=-1.2mm,yshift=.8mm}}
\newcommand{\arrowTriangle}[2]{
  \pgfpathmoveto{\pgfpoint{-0.01#1+#2}{.6#1}}
  \pgfpathlineto{\pgfpoint{#1+#2}{0}}
  \pgfpathlineto{\pgfpoint{-0.01#1+#2}{-.6#1}}
  \pgfusepathqfill
}
\newdimen\prearrowsize
\newdimen\arrowsize
\newdimen\temparrowsize
\newcommand{\arrowscale}{5}
\newcommand{\setarrowsize}{
  \arrowsize=0.000000001pt
  \prearrowsize=\arrowscale\pgflinewidth
  \normalizearrowsize
}
\newcommand{\normalizearrowsize}{
  \ifdim\prearrowsize>2mm
    \addtolength{\arrowsize}{2mm}
    \addtolength{\prearrowsize}{-2mm}
    \temparrowsize=0.5\prearrowsize
    \prearrowsize=\temparrowsize
  \else
  \fi

  \addtolength{\arrowsize}{\prearrowsize}
}
  \arrowTriangle{\arrowsize}{-0.1\arrowsize}
  \arrowTriangle{\arrowsize}{-0.1\arrowsize}
  \arrowTriangle{\arrowsize}{-0.1\arrowsize+.8\arrowsize}
  \arrowTriangle{\arrowsize}{-0.1\arrowsize}
  \arrowTriangle{\arrowsize}{-0.1\arrowsize+.8\arrowsize}
  \arrowTriangle{\arrowsize}{-0.1\arrowsize+1.6\arrowsize}
\tikzstyle{gyellow}=[draw=black!80,top color=white!50,bottom color=black!20]
\tikzstyle{gblue}=[draw=blue!50,top color=white,bottom color=blue!60]
\tikzstyle{gred}=[draw=red!50,top color=white,bottom color=red!60]
\tikzstyle{ggreen}=[draw=blue!80!green!90!black,top color=white,bottom color=blue!80!green!60]
\tikzstyle{roundNode}=[gyellow,thick,circle,minimum size=4mm,inner sep=0.5mm]
\definecolor{cblue}{rgb}{0,0.4,0.7}
\definecolor{clighterblue}{rgb}{0,0.6,1.0}
\colorlet{cred}{red}
\colorlet{cgreen}{green!80!black}
\colorlet{corange}{orange!70!red}
\colorlet{cpureorange}{orange}
\colorlet{cpurple}{clighterblue!50!cred}
\colorlet{clightblue}{clighterblue!50!cblue!40}
\colorlet{clightred}{cred!40}
\colorlet{clightgreen}{cgreen!80!cblue!40}
\colorlet{clightyellow}{corange!40!yellow!50}
\colorlet{clightorange}{cred!50!orange!40}
\colorlet{clightpurple}{clighterblue!50!cred!50}
\colorlet{cdarkred}{cred!70!black}
\colorlet{cdarkgreen}{cgreen!60!black}
\colorlet{cdarkblue}{cblue!60!black}
\colorlet{chighlight}{orange!50!yellow!60}
\tikzset{pgnode/.style={smallCircle,fill=white,draw=black,minimum size=1.3mm,outer sep=0.5mm}}
\tikzset{pgnodecolor/.style={pgnode,minimum size=4mm,scale=0.9}}
\tikzset{pgnodebig/.style={roundNode,gyellow,outer sep=1mm}}
\tikzset{pgrelation/.style={ultra thick,cblue!80!black,decorate,decoration={snake,amplitude=.4mm,segment length=4mm}}}
\tikzset{exi/.style={densely dotted}}
\tikzset{decweak/.style={-,green!50!black,very thick,opacity=0.7}}
\tikzset{decstrict/.style={->,orange,very thick,opacity=0.7}}
\tikzset{graphNode/.style={circle,draw=black,inner sep=.5mm,outer sep=1mm}}
\tikzset{sloop/.style={looseness=7}}
\tikzset{interconnect/.style={dotted,cred}}
\tikzset{match/.style={cdarkgreen,very thick}}
\tikzset{jigsaw/.style={circle,draw=black,minimum size=2mm,fill=white,minimum size=3.5mm}}
\colorlet{myblue}{blue!80!black}
\colorlet{mygreen}{cdarkgreen}
\colorlet{myred}{cred}
\colorlet{myorange}{corange}
\colorlet{mypurple}{blue!40!cred}
\tikzset{smalljigsaw/.style={rectangle,rounded corners=3mm,inner sep=1mm}}
\tikzset{epattern/.style={}}
\tikzset{eset/.style={draw=black!50}}
\tikzset{npattern/.style={rectangle,rounded corners=2mm,draw=black,inner sep=0.5mm,outer sep=.5mm,minimum size=4.5mm}}
\tikzset{nset/.style={npattern,draw=black!50,fill=white}}
\tikzset{label/.style={scale=0.85,inner sep=0,outer sep=0.5mm}}
\tikzset{short/.style={node distance=10mm}}
\newcommand{\annotate}[2]{
  \node at (#1.north east) [anchor=south west,xshift=-.75mm,yshift=-.75mm,label,outer sep=0mm] {#2};
}
\newcommand{\graphbox}[8]{
  \begin{scope}[xshift=#2,yshift=#3]
    \draw [rounded corners=2mm] (0,0) rectangle (#4,-#5);
    \node at (0,0mm) [anchor=north west,inner sep=1mm] {#1};
    \begin{scope}[xshift=#4/2+#6,yshift=#7] 
    #8
    \end{scope}
  \end{scope}
}
\newcommand{\transparentgraphbox}[8]{
{
  \transparent{0.5}
  \graphbox{#1}{#2}{#3}{#4}{#5}{#6}{#7}{#8}
  }
}
\newcommand{\vertex}[2]{%
  \begin{tikzpicture}[baseline=-1ex]%
    \node [rectangle,rounded corners=2mm,inner sep=0.5mm,fill=#2] {$#1$};%
  \end{tikzpicture}%
}
\newcommand{\rulescale}{0.9}
\newcommand*{\saveparinfo}[1]{%
  \expandafter\xdef\csname my@#1@parindent\endcsname{\the\parindent}%
  \expandafter\xdef\csname my@#1@parskip\endcsname{\the\parskip}%
}
\newcommand*{\useparinfo}[1]{%
  \parindent=\csname my@#1@parindent\endcsname \relax
  \parskip=\csname my@#1@parskip\endcsname \relax
}
\newcommand{\qedappendix}{\hfill \textcolor{blue}{$\circledast$}}
\newcommand{\PB}{{\normalfont PB}}
\newcommand{\PO}{{\normalfont PO}}
\newcommand{\Graphleq}{{\GraphLattice{(\labels,\leq)}}}
\newcommand{\pbpostep}[5]{
{#1} \Rightarrow_{\mathrm{PBPO}}^{#3, (#4, #5)} {#2}
}
\newcommand{\pbposteprule}[3]{    {\Rightarrow_{\mathrm{PBPO}}^{#1, (#2, #3)}}
}
\newcommand{\pbpostepminimal}[1]{   {\Rightarrow_{\mathrm{PBPO}}^{#1}}
}
\newcommand{\pbporulematchonly}[2]{
\Rightarrow_{\mathrm{PBPO}}^{#1, #2}
}
\newcommand{\pbpostrongstep}[5]{
{#1} \Rightarrow_{\mathrm{PBPO}^{+}}^{#3, (#4, #5)} {#2}
}
\newcommand{\pbpostrongrulematchonly}[2]{
\Rightarrow_{\mathrm{PBPO}^{+}}^{#1, #2}
}
\newcommand{\pbpostrongsteprule}[3]{
{#1} \Rightarrow_{\mathrm{PBPO}^{+}}^{#3} {#2}
}
\newcommand{\pbpostrongstepminimal}[1]{
{\Rightarrow_{\mathrm{PBPO}^{+}}^{#1}}
}
\newcommand{\DPOstep}[4]{{#1} \Rightarrow_\mathrm{DPO}^{#3, #4} {#2}}
\newcommand{\DPOsteprule}[2]{
  {\Rightarrow_\mathrm{DPO}^{#1, #2}}
}
\newcommand{\materialization}[1]{{\langle #1 \rangle}}
\newcommand{\parclass}[2]{\varphi(#1,#2)}
\newcommand{\parclassid}[1]{\overline{#1}}
\ifnum\value{page}=1 
\begin{document}

\title{Graph Rewriting and Relabeling with \pbpostrong:\\ A Unifying Theory for Quasitoposes}

\titlerunning{Graph Rewriting and Relabeling with \pbpostrong (Extended Version)}

\author{Roy Overbeek \and J\"{o}rg Endrullis \and Alo\"{i}s Rosset}
\authorrunning{Roy Overbeek \and J\"{o}rg Endrullis \and Alo\"{i}s Rosset}

\institute{Vrije Universiteit Amsterdam, Amsterdam, The Netherlands \\
  \email{\{r.overbeek, j.endrullis, a.rosset\}@vu.nl}
}
\maketitle

\begin{abstract}%
    We extend the powerful Pullback-Pushout (PBPO) approach for graph rewriting with strong matching.
    Our approach, called \pbpostrong, allows more control over the embedding of the pattern in the host graph, which is important for a large class of rewrite systems. We argue that \pbpostrong{} can be considered a unifying theory in the general setting of quasitoposes, by demonstrating that \pbpostrong{} can define a strict superset of the rewrite relations definable by PBPO, AGREE and DPO. Additionally, we show that \pbpostrong{} is well suited for rewriting labeled graphs and some classes of attributed graphs, by introducing a lattice structure on the label set and requiring graph morphisms to be order-preserving.
\end{abstract}

\saveparinfo{parinfo}%

\section{Introduction}
\label{sec:introduction}

Injectively matching a graph pattern $P$ into a host graph $G$ induces a classification of $G$ into three parts: (i)~a \emph{match graph} $M$, the image of $P$; (ii)~a \emph{context graph} $C$, the largest subgraph disjoint from $M$; and (iii)~a \emph{patch} $J$, the set of edges that are in neither $M$ nor $C$.
For example, if $P$ and $G$ are
respectively
\begin{center}
    \begin{tikzpicture}[default,node distance=8mm,n/.style={graphNode}]
    \begin{scope}
    \node (3)[n] {};
    \node (4)[n] [right of=3] {};
    \node (5)[n] at ($(3)!.5!(4) + (0mm,8mm)$) {};
    \draw [->] (3) to node [below] {$b$} (4);
    \draw [->] (4) to node [right] {$a$} (5);
    \draw [->] (5) to node [above left] {$a$} (3);
    \end{scope}
    \node at (29.5mm,5mm) {and};
    \begin{scope}[xshift=50mm]
        \node (3)[n, match] {};
        \node (4)[n, match] [right of=3] {};
        \node (5)[n, match] at ($(3)!.5!(4) + (0mm,8mm)$) {};
        
        \node(6)[n] [right of=4] {};
        \node(7)[n] [above of=6] {};
        
        \draw [->, match] (3) to node [below] {$b$} (4);
        \draw [->, match] (4) to node [right] {$a$} (5);
        \draw [->, match] (5) to node [above left] {$a$} (3);
        \draw [->, myred, interconnect] (4) to node [below] {$b$} (6);
        \draw [->, myred, interconnect] (7) to node [above] {$a$} (5);
        \draw [->] (6) to node [left] {$b$} (7);
        \draw [->, myred, interconnect] (5) to[bend right=80] node [above left] {$c$} (3);
    \end{scope}
    \end{tikzpicture}
\end{center}
then $M$, $C$ and $J$ are indicated in green (bold), black and red (dotted), respectively.
We call this kind of classification a \emph{patch decomposition}.

Guided by the notion of patch decomposition, we recently introduced the expressive Patch Graph Rewriting (PGR) formalism~\cite{overbeek2020patch}. Like most graph rewriting formalisms, PGR rules specify a replacement of a left-hand side (lhs) pattern $L$ by a right-hand side (rhs) $R$. Unlike most rewriting formalisms, however, PGR rules allow one to (a)~constrain the permitted shapes of patches around a match for $L$, and (b)~specify how the permitted patches should be transformed, where transformations include rearrangement, deletion and duplication of patch edges. 

Whereas PGR is defined set theoretically, in this paper we propose a more elegant categorical approach, called \emph{\pbpostrong}, inspired by the same ideas. The name derives from the fact that the approach is obtained by strengthening the matching mechanism of the Pullback-Pushout (PBPO) approach by Corradini et al.~\cite{corradini2019pbpo}. Categorical approaches have at least three important advantages over set theoretic ones: (i)~the classes of structures the method can be applied to is vastly generalized, (ii)~typical meta-properties of interest (such as parallelism and concurrency) are more easily studied, and~(iii) it makes it easier to compare to existing categorical frameworks.

After discussing the preliminaries in Section~\ref{sec:preliminaries}, we introduce \pbpostrong in Section~\ref{sec:pbpostrong}, and then provide a detailed comparison with PBPO in Section~\ref{sec:relating}. We argue that \pbpostrong is preferable in situations where matching is not controlled, such as when specifying generative grammars or modeling execution.

Next, we study \pbpostrong in the setting of quasitoposes in Section~\ref{sec:quasitopos}. Quasitoposes generalize toposes, which can be described as capturing ``set-like'' categories~\cite[Preface]{wyler1991lecture}. Quasitoposes include not only the categories of sets, directed multigraphs~\cite{lack2004adhesive} and typed graphs~\cite{corradini2015agree} (all of which are toposes), but also a variety of structures such as Heyting algebras (considered as categories)~\cite{wyler1991lecture}, and the categories of simple graphs (equivalently, binary relations)~\cite{johnstone2007quasitoposes}, fuzzy sets~\cite{wyler1991lecture}, algebraic specifications~\cite{johnstone2007quasitoposes} and safely marked Petri Nets~\cite{johnstone2007quasitoposes}. Most importantly, we show that, using regular monic matching, \pbpostrong has enough expressive power in the quasitopos setting to generate any rewrite relation generated by the PBPO, AGREE~\cite{corradini2015agree,corradini2020algebraic}, or DPO~\cite{ehrig1973graph} rewrite formalisms, while the converse statements are not true. Thus, in this setting, \pbpostrong can be viewed as a unifying theory, additionally enabling the definition of new rewrite relations.

In Section~\ref{sec:category:graph:lattice}, we adopt a more applied perspective, and show that \pbpostrong{} easily lends itself for rewriting labeled graphs and certain attributed graphs. To this end, we define a generalization of the usual category of labeled graphs, \GraphLattice{(\labels,\leq)}, in which the set of labels forms a complete lattice $(\labels,\leq)$.
Not only does the combination of \pbpostrong and \GraphLattice{(\labels,\leq)} enable constraining and transforming the patch graph in flexible ways, it also allows naturally modeling notions of relabeling, variables and sorts in rewrite rules.
As we will clarify in the Discussion (Section~\ref{sec:discussion}), such mechanisms have typically been studied in the context of Double Pushout (DPO) rewriting~\cite{ehrig1973graph}, where the requirement to construct a pushout complement leads to technical complications and restrictions.

This paper extends the paper presented at ICGT2021~\cite{overbeek2021pbpo}. In the conference paper, we showed that in the setting of toposes, \pbpostrong can define all rewrite relations definable by PBPO~\cite[Section 4.3]{overbeek2021pbpo}. For the present extension, we have generalized this result to quasitoposes (additionally improving the proof), and integrated it into the wider quasitopos study provided in Section~\ref{sec:quasitopos}, which is completely new material. Additionally, we have improved the proofs and presentation in Section~\ref{sec:pbpostrong}, and we relate category \GraphLattice{(\labels,\leq)} to the category of fuzzy sets in Section~\ref{sec:quasitopos}.

\begin{remark}
    We have recently published a gentle tutorial on \pbpostrong{}~\cite{overbeek2023tutorial}, targeted especially to readers unacquainted with category theory.
\end{remark}

\section{Preliminaries}
\label{sec:preliminaries}

We assume familiarity with various basic categorical notions, notations and results, including morphisms $X \to Y$, pullbacks, pushouts, monomorphisms (monos) $X \mono Y$, epimorphisms (epis) $X \epi Y$, and identities $1_X : X \to X$~\cite{mac1971categories,awodey2006category}.

The following well-known lemma (also known as the pasting law for pullbacks) is used frequently throughout the paper.

\begin{lemma}[Pullback Lemma]\label{lemma:pullback:lemma}
    Suppose the right square of
    \begin{center}
        \begin{tikzcd}[row sep=3mm, column sep=5mm]
        A \arrow[r] \arrow[d] & B \arrow[d] \arrow[r]                        & C \arrow[d] \\
        D \arrow[r]           & E \arrow[r] \arrow[ru, "\mathrm{PB}", phantom] & F          
        \end{tikzcd}
    \end{center}
    is a pullback and the left square commutes. Then the outer square is a pullback iff the left square is a pullback. \qed
\end{lemma}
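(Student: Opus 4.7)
The plan is a standard diagram chase using the universal properties of the three pullbacks in question. Label the morphisms so that the top row is $f_1 : A \to B$, $f_2 : B \to C$, the bottom row is $g_1 : D \to E$, $g_2 : E \to F$, and the verticals are $a : A \to D$, $b : B \to E$, $c : C \to F$. Throughout, I will treat the right square as a pullback and the left square as commuting.

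For the forward direction (left pullback $\Rightarrow$ outer pullback), suppose I am given a test cone for the outer square, i.e.\ morphisms $u : X \to D$ and $v : X \to C$ with $g_2 g_1 u = c v$. First I apply the universal property of the right pullback to the pair $(g_1 u, v)$, which commutes over $F$ by assumption, obtaining a unique $w : X \to B$ with $b w = g_1 u$ and $f_2 w = v$. Now $(u, w)$ is a test cone for the left square, so by its pullback property I get a unique $t : X \to A$ with $a t = u$ and $f_1 t = w$. Then $f_2 f_1 t = f_2 w = v$, so $t$ mediates the outer cone. Uniqueness of $t$ follows by composing with $f_1$ and $a$ and using uniqueness at each step.

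For the reverse direction (outer pullback $\Rightarrow$ left pullback, with the right square still a pullback), suppose I have a test cone $(u : X \to D,\, w : X \to B)$ for the left square, i.e.\ $b w = g_1 u$. Set $v := f_2 w$; then $c v = c f_2 w = g_2 b w = g_2 g_1 u$, so $(u, v)$ is a test cone for the outer square. By its universal property there is a unique $t : X \to A$ with $a t = u$ and $f_2 f_1 t = v$. It remains to check $f_1 t = w$. Both of these are maps $X \to B$, and I compare them by postcomposing with the two legs of the right pullback: $f_2 (f_1 t) = v = f_2 w$, and $b (f_1 t) = g_1 a t = g_1 u = b w$ (using commutativity of the left square). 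By the uniqueness clause of the right pullback, $f_1 t = w$. Uniqueness of $t$ as a mediator for the left cone follows because any such mediator also mediates the outer cone $(u, f_2 w)$.

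There is no real obstacle here; the only care required is to keep track of which square's commutativity and which square's universal property are being invoked at each step, and in the second direction to remember that one must verify $f_1 t = w$ separately (this is precisely where the right square being a pullback, rather than merely commuting, is used).
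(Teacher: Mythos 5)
Your proof is correct and is the canonical diagram chase for the pasting law; the paper states this lemma as well known and gives no proof, so there is nothing to compare against beyond confirming that your argument is the standard one. You also correctly isolate the one subtle point, namely that in the reverse direction the equality $f_1 t = w$ must be checked separately and follows from the joint monicity of the two legs $(b, f_2)$ of the right-hand pullback.
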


\begin{corollary}
\label{lem:pullback:cube}
    If in a commutative cube
    \begin{center}
        \begin{tikzcd}[row sep={20,between origins},column sep={20,between origins}]
            & F  \ar[rr] \ar[dd] \ar[dl] & &  G \ar[dd] \ar[dl] \\
            E \ar[rr, crossing over] \ar[dd] & & H \\
              & B \ar[rr] \ar[dl] & &  C \ar[dl] \\
            A \ar[rr] && D \ar[from=uu,crossing over]
        \end{tikzcd}
    \end{center}
    all vertical faces except the back face ($FGBC$) are known to be pullbacks, then the back face is also a pullback. \qed
\end{corollary}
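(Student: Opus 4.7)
The plan is to invoke the pullback lemma (Lemma~\ref{lemma:pullback:lemma}) twice, exploiting the fact that the same outer rectangle can be obtained by pasting two different adjacent pairs of faces of the cube.

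First, I would paste the left face ($FEBA$) next to the front face ($EHAD$) along their common edge $E \to A$. This yields a commutative rectangle whose top edge is the composite $F \to E \to H$, whose bottom edge is $B \to A \to D$, and whose vertical sides are $F \to B$ and $H \to D$. Since the left and front faces are pullbacks by hypothesis, the pullback lemma (applied in the direction ``left PB and right PB $\Rightarrow$ outer PB'') gives that this outer rectangle is itself a pullback.

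Next, I would observe that the very same outer rectangle arises from pasting the back face ($FGBC$) on top of the right face ($GHCD$) along their common edge $G \to C$. Indeed, commutativity of the top face of the cube gives $F \to G \to H = F \to E \to H$, and commutativity of the bottom face gives $B \to C \to D = B \to A \to D$, so the two pasted rectangles coincide as commutative squares with identical outer data. Now the right face is a pullback by hypothesis, the outer rectangle is a pullback by the previous step, and the back face commutes. Applying the pullback lemma in the other direction (``right PB, outer PB, left commutes $\Rightarrow$ left PB'') concludes that the back face is a pullback.

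The only thing worth checking carefully is the identification of the two rectangular composites, and this is immediate from the commutativity of the top and bottom faces. So the argument is essentially routine bookkeeping once the two decompositions of the outer rectangle are spotted.
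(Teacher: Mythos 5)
Your proof is correct and matches the paper's intended argument: the paper states this corollary with the proof omitted as an immediate consequence of Lemma~\ref{lemma:pullback:lemma}, and your two applications of that lemma---pasting the left and front faces to show the outer rectangle is a pullback, then cancelling against the right face using commutativity of the top and bottom faces---are exactly the standard bookkeeping being elided.
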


\newcommand{\src}{\mathit{s}}
\newcommand{\tgt}{\mathit{t}}
\newcommand{\lbl}{\ell}
\newcommand{\lblv}{\lbl^V}
\newcommand{\lble}{\lbl^E}

\begin{definition}[Graph Notions]
    Let a label set $\labels$ be fixed.
    An ($\labels$-labeled) \emph{(multi)graph} $G$ consists of a set of vertices $V$, a set of edges $E$, source and target functions $\src,\tgt : E \to V$, and label functions $\lblv : V \to \labels$ and $\lble : E \to \labels$.
    
    A graph is \emph{unlabeled} if $\labels$ is a singleton.
    
    A \emph{premorphism} between graphs $G$ and $G'$ is a pair of maps
    \[
        \phi = (\phi_V : V_G \to V_{G'}, \phi_E : E_G \to E_{G'})
    \]
    satisfying $(s_{G'}, t_{G'}) \circ \phi_E = \phi_V \circ (s_G, t_G)$.
  
    A \emph{homomorphism} is a label-preserving premorphism $\phi$, i.e., a premorphism satisfying $\lblv_{G'} \circ \phi_V = \lblv_{G}$ and $\lble_{G'} \circ \phi_E = \lble_{G}$.
\end{definition}

\begin{definition}[Category $\Graph$~\cite{ehrig2006}]
    The category $\Graph$ has graphs as objects, parameterized over some global (and usually implicit) label set $\labels$, and homomorphisms as arrows.
\end{definition}

\begin{remark}
\label{remark:graph:as:presheaf}
    $\Graph$ can also be obtained as a slice of a presheaf category (see Corradini et al.~\cite[Section 2.1-2]{corradini2020algebraic} for details), by virtue of which one immediately obtains that it is a topos, and hence a quasitopos.
\end{remark}

The definition below formally fixes some graph terminology (see Section~\ref{sec:introduction}).

\begin{definition}[Patch Decomposition~\cite{overbeek2020patch}]
    Given a premorphism $x : X \to G$, we call the image $M = \mathit{im}(x)$ of $x$ the \emph{match graph} in $G$, $G - M$  the \emph{context graph} $C$ induced by $x$ (i.e., $C$ is the largest subgraph  disjoint from $M$), and the set of edges $E_G - E_M - E_C$ the set of \emph{patch edges} (or simply, \emph{patch}) induced by $x$. 
\end{definition}

\section{\pbpostrong{}}
\label{sec:pbpostrong}

We introduce \pbpostrong, which strengthens the matching mechanism of PBPO~\cite{corradini2019pbpo}. In Section~\ref{sec:relating}, we compare the two approaches in detail.

\begin{definition}[\pbpostrong Rewrite Rule]\label{def:pbpostrong:rewrite:rule}
    A \emph{\pbpostrong rewrite rule} $\rho$ is a diagram
    \begin{center}
      $\rho \ = \ $
        \begin{tikzcd}
        L \arrow[d, "t_L" description] & K \arrow[ld, "\mathrm{PB}", phantom] \arrow[d, "t_K" description] \arrow[l, "l" description] \arrow[r, "r" description] & R \\
        L'                             & K' \arrow[l, "l'" description]                                                                                          &  
        \end{tikzcd}.
    \end{center}
    $L$ is the \emph{lhs pattern} of the rule, $L'$ its \emph{(context) type} and $t_L$ the \emph{(context) typing} of $L$. Similarly for the \emph{interface} $K$. $R$ is the \emph{rhs pattern} or \emph{replacement for $L$}.
\end{definition}

We often depict the pushout $K' \xrightarrow{r'} R' \xleftarrow{t_R} R$ for span $K' \xleftarrow{t_K} K \xrightarrow{r} R$, because it shows the schematic effect of applying the rewrite rule. We reduce the opacity of $R'$ to emphasize that it is not part of the rule definition. 

\begin{example}[Rewrite Rule in \Graph]\label{ex:simple:rewrite:rule}
    A simple example of a rule for unlabeled graphs is the following:
    {

\newcommand{\nodexa}{\vertex{x_1}{cblue!20}}
\newcommand{\nodexb}{\vertex{x_2}{cblue!20}}
\newcommand{\nodey}{\vertex{y}{cgreen!20}}
\newcommand{\nodez}{\vertex{z}{cred!10}}
\newcommand{\nodeu}{\vertex{u}{cpurple!25}}

\begin{center}
  \scalebox{\rulescale}{
  \begin{tikzpicture}[->,node distance=12mm,n/.style={}]
    \graphbox{$L$}{0mm}{0mm}{35mm}{10mm}{-4mm}{-5mm}{
      \node [npattern] (xaxb)
      {\nodexa \ \nodexb};
      \node [npattern] (y) [right of=xaxb] {\nodey};
      \draw [epattern] (xaxb) to node {} (y);
    }
    \graphbox{$K$}{36mm}{0mm}{35mm}{10mm}{-8mm}{-5mm}{
      \node [npattern] (xa) {\nodexa};
      \node [npattern] (xb) [right of=xa, short] {\nodexb};
      \node [npattern] (y) [right of=xb, short] {\nodey};
    }
    \graphbox{$R$}{72mm}{0mm}{42mm}{10mm}{-8mm}{-5mm}{
      \node [npattern] (xay)
      {\nodexa \ \nodey};
      \node [npattern] (xb) [right of=xay] {\nodexb};
      \draw [epattern,loop=0,looseness=3] (xb) to node {} (xb);
      
      \node [npattern] (u) [right of=xb] {\nodeu};
      
    }
    \graphbox{$L'$}{0mm}{-11mm}{35mm}{22mm}{-4mm}{-7mm}{
      \node [npattern] (xaxb)
      {\nodexa \ \nodexb};
      \node [npattern] (y) [right of=xaxb] {\nodey};
      \node [nset] (z) [below of=xaxb, short] {\nodez};
      
      \draw [epattern] (xaxb) to node {} (y);
      
      \draw [eset] (y) to [bend right=50] node {} (xaxb);
      \draw [eset] (z) to node {} (xaxb);
      \draw [eset,loop=180,looseness=3] (z) to node {} (z);
      \draw [eset] (y) to node {} (z);
    }
    \graphbox{$K'$}{36mm}{-11mm}{35mm}{22mm}{-8mm}{-7mm}{
      \node [npattern] (xa) {\nodexa };
      \node [npattern] (xb) [right of=xa, short] {\nodexb};
      \node [npattern] (y) [right of=xb, short] {\nodey};
      \node [nset] (z) [below of=xa,short] {\nodez};
      
      \draw [eset] (y) to [bend right=40] node {} (xb);
      \draw [eset] (y) to [bend right=60] node {} (xb);
      \draw [eset,loop=180,looseness=3] (z) to node {} (z);
      \draw [eset] (z) to node {} (xa);
    }
    \transparentgraphbox{$R'$}{72mm}{-11mm}{42mm}{22mm}{-8mm}{-7mm}{
      \node [npattern] (xay)
      {\nodexa \ \nodey};
      \node [npattern] (xb) [right of=xay, xshift=3mm] {\nodexb};
      \draw [epattern,loop=0,looseness=3] (xb) to node {} (xb);
      \node [nset] (z) [below of=xay, short] {\nodez};
      
      \draw [eset] (xay) to [bend left=20] node {} (xb);
      \draw [eset] (xay) to [bend left=40] node {} (xb);
      \draw [eset,loop=180,looseness=3] (z) to node {} (z);
      
      \draw [eset] (z) to node {} (xay);
      \node [npattern] (u) [below right of=xb] {\nodeu};
    }
  \end{tikzpicture}
  }
\end{center}

}
    \noindent
    In this and subsequent examples
    a vertex is a  non-empty set $\{ x_1,\ldots,x_n \}$ represented by a box
    {
    \hspace{-3.4mm}
    \newcommand{\nodexa}{\vertex{x_1}{cblue!20}}
    \newcommand{\nodexb}{\vertex{x_n}{cblue!20}}
    \begin{tikzcd}[->,node distance=12mm,n/.style={}]
    \node [npattern] (xaxb)
          {\nodexa \ \raisebox{1mm}{$\cdots$} \  \nodexb};
    \end{tikzcd}
    }; and
    each morphism $\phi = (\phi_V, \phi_E) : G \to G'$ is the unique morphism satisfying $S \subseteq \phi(S)$ for all $S \in V_G$.
    For instance, for $\{ x_1 \}, \{ x_2 \} \in V_K$,  $l(\{x_1\}) = l(\{ x_2 \}) = \{ x_1,x_2 \} \in V_L$. For notational convenience, we will usually use examples that ensure uniqueness of 
    each $\phi$ (in particular, we ensure that $\phi_E$ is uniquely determined). Colors are purely supplementary, and elements in $L'$ and $K'$ have reduced opacity if they do not lie in the images of $t_L$ and $t_K$, respectively.
\end{example}

\newcommand{\picturematch}{
  \begin{tikzpicture}[node distance=13mm,l/.style={inner sep=1mm},baseline=-7.5mm,v/.style={node distance=11.5mm}]
    \node (G) {$G_L$};
    \node (L) [left of=G] {$L$};
      \draw [->] (L) to node [above,l] {$m$} (G);
    \node (L2) [below of=L,v] {$L$};
      \draw [>->] (L) to node [left,l] {$1_L$} (L2);
      \node at ([shift={(4mm,-4mm)}]L) {PB};
    \node (LP) [below of=G,v] {$L'$};
      \draw [->] (L2) to node [above,l] {$t_L$} (LP);
      \draw [->] (G) to node [left,l] {$\alpha$} (LP);
  \end{tikzpicture}
}

\begin{definition}[Strong Match]\label{def:pbpostrong:match}
    A \emph{match morphism} $m : L \to G_L$ and an \emph{adherence morphism} $\alpha : G_L \to L'$ form a \emph{strong match} for a context typing $t_L : L \to L'$ if square $t_L \circ \id{L} = \alpha \circ m$, i.e.,
    \begin{center}
        \begin{tikzcd}[column sep=7mm]
        L \arrow[d, equals] \arrow[r, "m" description] \arrow[rd, "\mathrm{PB}", phantom] & G_L \arrow[d, "\alpha" description] \\
        L \arrow[r, "t_L" description]                                                                 & L'                                 
        \end{tikzcd}%
    \end{center}
    is a pullback.
\end{definition}

\begin{remark}[Preimage Interpretation]
    \label{remark:preimage:interpretation}
    In both \Set and \Graph, and many other categories of structured sets, the strong match diagram of Definition~\ref{def:pbpostrong:match} states that the preimage of $t_L(L)$ under $\alpha : G_L \to L'$ is $L$ itself. So each element of $t_L(L)$ is the $\alpha$-image of exactly one element of $G_L$.
\end{remark}

\begin{proposition}
\label{prop:m:monic:iff:t_L:monic}
    Assume $\CC$ has pullbacks.
    Let a strong match as in Definition~\ref{def:pbpostrong:match} be given. Then $m$ is monic iff $t_L$ is monic.
\end{proposition}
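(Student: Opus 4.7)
The plan is to deduce both directions from a single application of the Pullback Lemma, using the standard characterization: a morphism $f : X \to Y$ is monic if and only if the square
\begin{center}
\begin{tikzcd}[column sep=6mm, row sep=4mm]
X \arrow[r, equals] \arrow[d, equals] & X \arrow[d, "f"] \\
X \arrow[r, "f"] & Y
\end{tikzcd}
\end{center}
is a pullback.

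Concretely, I would paste a trivial commuting square on the left of the strong-match square to obtain
\begin{center}
\begin{tikzcd}[column sep=8mm, row sep=6mm]
L \arrow[r, equals] \arrow[d, equals] & L \arrow[d, "m" description] \arrow[r, equals] & L \arrow[d, "t_L" description] \\
L \arrow[r, "m" description] & G_L \arrow[r, "\alpha" description] & L'
\end{tikzcd}
\end{center}
The left square commutes trivially, and the right square commutes (in fact, is a pullback) by the strong-match hypothesis. Since $\CC$ has pullbacks, Lemma~\ref{lemma:pullback:lemma} applies, giving that the left square is a pullback if and only if the outer square is a pullback.

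By the characterization above, the left square is a pullback iff $m$ is monic, and the outer square (which has $t_L$ on both verticals and identities on both horizontals) is a pullback iff $t_L$ is monic. Chaining these equivalences yields the result. I do not anticipate a main obstacle: the statement is essentially a packaging of the facts that monos are pullback-stable (for one direction) and that the pullback lemma lets us transport the mono property back through a pullback square (for the other direction); both are immediate from Lemma~\ref{lemma:pullback:lemma}.
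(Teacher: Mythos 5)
Your proof is correct, and it takes a genuinely different (and slicker) route than the paper's. One point worth making explicit: the right square of your pasted diagram is the \emph{transpose} of the strong-match square as drawn in Definition~\ref{def:pbpostrong:match} (the roles of $m$ and $\id{L}$ as horizontal versus vertical legs are swapped), which is harmless since a pullback square remains a pullback under this reflection. Granting the standard characterization that $f$ is monic iff its kernel-pair square (identities on two parallel sides, $f$ on the other two) is a pullback, the pasting law then yields both directions in one stroke: the outer square is a pullback iff the left square is, i.e., $t_L$ is monic iff $m$ is. The paper instead splits the equivalence: the direction ($t_L$ monic $\Rightarrow$ $m$ monic) is dispatched by pullback stability of monos, while the converse is proved by hand --- given $x,y : X \to L$ with $t_L \circ x = t_L \circ y$, it pastes the (trivially pullback) square with $x$ on two parallel sides onto the match square and uses the universal property of the composite to force $m \circ x = m \circ y$, whence $x = y$ by monicity of $m$. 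Your argument buys uniformity and brevity at the price of invoking the kernel-pair characterization (elementary, but not stated in the paper); the paper's is longer but self-contained, and its hard direction is essentially your argument specialized and unwound. A minor remark: Lemma~\ref{lemma:pullback:lemma} carries no existence hypotheses, so your aside ``since $\CC$ has pullbacks'' is not actually needed for the pasting step.
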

\begin{proof}
    If $t_L$ is monic, monicity of $m$ follows by pullback stability.

    For the other direction, assume $m$ is monic.
    Suppose $t_L \circ x = t_L \circ y$ for a parallel pair of morphisms $x,y : X \to L$. Then
    $\alpha \circ m \circ x = t_L \circ x = t_L \circ y = \alpha \circ m \circ y$. Then in diagram
    \begin{center}
        \begin{tikzcd}
        X \arrow[rd, equals] \arrow[rrr, "m \circ y" description, bend left] \arrow[r, "!z" description, dotted] & X \arrow[r, "x" description] \arrow[d, equals] \arrow[rd, "\mathrm{PB}", phantom] & L \arrow[r, "m" description, tail] \arrow[d, equals] \arrow[rd, "\mathrm{PB}", phantom] & G_L \arrow[d, "\alpha" description] \\
                                                                                                                              & X \arrow[r, "x" description]                                                                   & L \arrow[r, "t_L" description]                                                                       & L'                                 
        \end{tikzcd}
    \end{center}
    the two pullback squares compose by the pullback lemma. Hence there exist a unique $z$ such that $\id{X} = \id{X} \circ z = z$ and $m \circ y = m \circ x \circ z$. Hence $z$ can be canceled and $m \circ y = m \circ x$. By monicity of $m$, $x = y$. Thus $t_L$ is monic.
\qed
\end{proof}

Because of Proposition~\ref{prop:m:monic:iff:t_L:monic}, if match morphisms $m$ are required to be monic (as often is the case), rules with non-monic $t_L$ will not give rise to strong matches, and so will not give rise to rewrite steps.

\begin{proposition}[Unique First Factor]
\label{prop:unique:first:factor}
    In any category, if diagrams
    \begin{center}
        \begin{tikzcd}
            A \arrow[d, equals] \arrow[r, "h" description] \arrow[rd, "\mathrm{PB}", phantom] & B \arrow[d, "g" description] \\
            A \arrow[r, "f" description]                                                                   & C                           
        \end{tikzcd}
        \quad and \quad
        \begin{tikzcd}
        A \arrow[d, equals] \arrow[r, "h'" description] \arrow[rd, "=", phantom] & B \arrow[d, "g" description] \\
        A \arrow[r, "f" description]                                                          & C                           
        \end{tikzcd}    
    \end{center}
    hold then $h = h'$.
\end{proposition}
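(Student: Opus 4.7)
The approach is a direct application of the universal property of the pullback in the first diagram. The plan is to view the second (merely commutative) square as a cone over the cospan $B \xrightarrow{g} C \xleftarrow{f} A$, and then invoke uniqueness of the mediating morphism into the pullback object.

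In more detail: first I would observe that commutativity of the second square gives $g \circ h' = f \circ \id{A}$, so the pair $(\id{A}, h')$ is a cone from $A$ to the cospan underlying the first (pullback) square. Since the first square is a pullback with apex $A$ and projections $\id{A} : A \to A$ and $h : A \to B$, the universal property yields a unique morphism $u : A \to A$ with $\id{A} \circ u = \id{A}$ and $h \circ u = h'$. The first equation forces $u = \id{A}$, and substituting this into the second equation gives $h' = h \circ \id{A} = h$.

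There is no real obstacle here; the only subtlety is keeping track of which leg of the pullback is the identity. I do not need the second square to be a pullback at all — commutativity suffices, because it is only being used as the source of a cone into the pullback square. The argument works in any category, with no assumption on existence of other limits.
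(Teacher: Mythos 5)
Your argument is correct and is exactly the paper's proof, just spelled out in more detail: the paper's one-line proof ("the universal morphism obtained from the pullback square is $\id{A}$") is precisely your mediating morphism $u$, forced to be the identity by the left leg, whence $h' = h \circ \id{A} = h$. Nothing further is needed.
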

\begin{proof}
    The universal morphism obtained from the pullback square is $\id{A}$. \qed
\end{proof}

Proposition~\ref{prop:unique:first:factor} implies that because of the strong match property, $m$ is uniquely defined by $\alpha$ and $t_L$. However, in practice it is usually more natural to first fix a match $m$, and to subsequently verify whether it can be extended into a suitable adherence morphism. For certain choices of $t_L$, $\alpha$ may moreover be uniquely determined by $m$ (if it exists). See Theorem~\ref{thm:determinism} for an example.

\newcommand{\pictureruletikz}{
  \begin{tikzpicture}[node distance=11mm,l/.style={inner sep=.5mm},baseline=-6mm]
    \node (L) {$L$};
    \node (K) [right of=L] {$K$}; \draw [->] (K) to node [above] {$l$} (L);
    \node (LP) [below of=L] {$L'$};
      \draw [->] (L) to node [left] {$t_L$} (LP);
    \node (KP) [below of=K] {$K'$};
      \draw [->] (K) to node [right] {$t_K$} (KP);
      \draw [->] (KP) to node [below] {$l'$} (LP);
       \node at ([shift={(-4mm,-4mm)}]K) {\PB};
    \node (R) [right of=K] {$R$}; 
      \draw [->] (K) to node [above] {$r$} (R);
  \end{tikzpicture}
}

\newcommand{\picturerule}{
        \begin{tikzcd}[ampersand replacement=\&]
        L \arrow[d, "t_L" description] \& K \arrow[ld, "\mathrm{PB}", phantom] \arrow[d, "t_K" description] \arrow[l, "l" description] \arrow[r, "r" description] \& R \\
        L'                             \& K' \arrow[l, "l'" description]                                                                                          \&  
        \end{tikzcd}
}

\newcommand{\picturesteptikz}{
  \begin{tikzpicture}[node distance=18mm,l/.style={inner sep=1mm},baseline=2mm,v/.style={node distance=11mm}]
    \node (G) {$G_L$};
    \node (L) [left of=G] {$L$};
      \draw [->] (L) to node [above,l] {$m$} (G);
    \node (L2) [below of=L,v] {$L$};
      \draw [>->] (L) to node [left,l] {$1_L$} (L2);
      \node at ([shift={(4mm,-4mm)}]L) {\PB};
    \node (LP) [below of=G,v] {$L'$};
      \draw [->] (L2) to node [below,l] {$t_L$} (LP);
      \draw [->] (G) to node [left,l] {$\alpha$} (LP);
     \node (GKP) [right of=G] {$G_K$};
     \draw [->] (GKP) to node [above,l] {$g_L$} (G);
     \node (KP) [below of=GKP,v] {$K'$};
     \draw [->] (GKP) to node [right,l] {$u'$} (KP);
     \draw [->] (KP) to node [below,l] {$l'$} (LP);
     \node at ([shift={(-4mm,-4mm)}]GKP) {\PB};
     \node (K) [above of=GKP,v] {$K$};
     \draw [->, dotted] (K) to node [left,l] {$!u$} (GKP);
     \node (R) [right of=K] {$R$};
     \draw [->] (K) to node [above,l] {$r$} (R);
     \node (GP) [below of=R,v] {$G_R$};
     \draw [->] (GKP) to node [below,l,pos=0.7] {$g_R$} (GP);
     \draw [->] (R) to node [right,l] {$w$} (GP);
     \node at ([shift={(-4mm,4mm)}]GP) {\PO};
     
     \draw [-,draw=white,line width=1.5mm] (K) to[out=-30,in=30,looseness=0.8] (KP);
     \draw [->,draw=black] (K) to[out=-30,in=30,looseness=0.8] node[right,l,pos=0.85] {$t_K$} (KP);
  \end{tikzpicture}
}

\newcommand{\picturestep}{
\begin{tikzcd}[ampersand replacement=\&,baseline=0mm, column sep=12mm]
                                                            \&                                                                        \& K \arrow[d, "!u" description, dotted] \arrow[r, "r" description]                    \& R \arrow[d, "w" description] \\
L \arrow[d, equals] \arrow[r, "m" description] \& G_L \arrow[ld, "\mathrm{PB}", phantom] \arrow[d, "\alpha" description] \& G_K \arrow[l, "g_L" description] \arrow[r, "g_R" description] \arrow[d, "u'" description] \arrow[ru, "\mathrm{PO}", phantom] \& G_R                          \\
L \arrow[r, "t_L" description]                              \& L' \arrow[ru, "\mathrm{PB}", phantom]                                  \& K' \arrow[from=uu, crossing over, pos=0.75, "t_K" description, bend left=42] \arrow[l, "l'" description]                                                                                               \&                             
\end{tikzcd}
}

In reading the following definition, it may be helpful to refer to Example~\ref{ex:simple:rewrite:step} alongside it.

\begin{definition}[\pbpostrong Rewrite Step]
\label{def:pbpostrong:rewrite:step}
    A \pbpostrong rewrite rule $\rho$ (left), a match morphism $m : L \to G_L$
    and an adherence morphism $\alpha : G_L \to L'$ induce a rewrite step $\pbpostrongstep{G_L}{G_R}{\rho}{m}{\alpha}$
    on arbitrary $G_L$ and $G_R$ if the properties indicated by the commuting diagram (right)
    \begin{center}
      $\rho \ = \ $\picturerule\hspace{.7cm}
      \picturestep
    \end{center}
    hold, where $u : K \to G_K$ is the unique morphism satisfying $t_K = u' \circ u$ (Lemma~\ref{lemma:on:u}).
    
    We write $G_L \pbpostrongrulematchonly{\rho}{m} G_R$ if there exists an $\alpha$ such that $\pbpostrongstep{G_L}{G_R}{\rho}{m}{\alpha}$, and
    $\pbpostrongsteprule{G_L}{G_R}{\rho}$ if there exists an $m$ such that $G_L \pbpostrongrulematchonly{\rho}{m} G_R$.~\footnote{No assumptions are needed on the underlying category $\CC$: if not all (co)limits exist, this simply restricts the possible rewrite rules and steps.}
\end{definition}

It can be seen that the rewrite step diagram consists of a match square, a pullback square for extracting (and possibly duplicating) parts of $G_L$, and finally a pushout square for gluing these parts along pattern $R$.

We must prove that there indeed exists a unique $u$ such that $t_K = u' u$ in the rewrite step diagram. The following lemma establishes this and two other facts.

\begin{lemma}[On $u$]
\label{lemma:on:u}
    In any category, let the pullback of the rewrite rule and the pullbacks of the rewrite step be given. Then there exists a unique morphism $u : K \to G_K$ satisfying $t_K = {u' \circ u}$.
    Moreover, the following properties hold:
        \begin{enumerate}
            \item squares
                \begin{center}
                    \begin{tikzcd}
                    L \arrow[d, "m" description] & K \arrow[l, "l" description] \arrow[d, "u" description] & {} \arrow[d, "\textit{and}", phantom] & K \arrow[d, equals] \arrow[r, "u" description] & G_K \arrow[d, "u'" description] \\
                    G_L                          & G_K \arrow[l, "g_L" description]                        & {}                                    & K \arrow[r, "t_K" description]                              & K'                             
                    \end{tikzcd}
                \end{center}
            are pullbacks; and
            \item if $m$ or $t_K$ is monic, then $u$ is monic.
        \end{enumerate}
\end{lemma}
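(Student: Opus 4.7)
The plan is to first construct $u$ using the universal property of the extraction pullback, then prove the two pullback claims in (1), and finally derive (2) from pullback-stability of monomorphisms.

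\emph{Construction of $u$.} The match pullback yields $\alpha \circ m = t_L$ and the rule pullback yields $t_L \circ l = l' \circ t_K$, so $\alpha \circ (m \circ l) = l' \circ t_K$. Hence $(m \circ l,\ t_K)$ is a cone from $K$ over the cospan $\alpha, l'$, and the extraction pullback produces a unique $u : K \to G_K$ with $g_L \circ u = m \circ l$ and $u' \circ u = t_K$; this is the $u$ of the statement (the additional equation $g_L u = m l$ is forced by the universal property).

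\emph{Property (1a).} I would paste the target square to the left of the extraction pullback, forming the rectangle with $K \to G_K \to K'$ on top and $L \to G_L \to L'$ on the bottom. Because $u' u = t_K$ and $\alpha m = t_L$, the outer rectangle is exactly the rule pullback; the right square is a pullback by hypothesis, so the Pullback Lemma (Lemma~\ref{lemma:pullback:lemma}) gives the target square as a pullback.

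\emph{Property (1b) (the main obstacle).} No direct pasting of two squares yields this result, so I would invoke Corollary~\ref{lem:pullback:cube}. Form a commutative cube whose back face is the target square; then take the front face to be the match pullback, the right face the extraction pullback, the bottom face the rule pullback, the top face the pullback from (1a), and the left face the trivial square on $(K, L, K, L)$ whose two horizontal edges are both $l$ and whose two vertical edges are identities (clearly a pullback). Verifying the six commutative faces is routine (every path from the top-back-left $K$ to the bottom-front-right $L'$ reduces to $t_L \circ l$); three vertical faces (front, left, right) are pullbacks, and the corollary delivers the back face. Property (2) then follows at once: by (1a), $u$ is the pullback of $m$ along $g_L$; by (1b), $u$ is the pullback of $t_K$ along $u'$. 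Pullback-stability of monomorphisms therefore gives that $u$ is monic whenever either $m$ or $t_K$ is.
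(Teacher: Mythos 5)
Your construction of $u$, your proofs of the two pullback squares in~(1), and your derivation of~(2) follow essentially the same route as the paper: $u$ is induced by the universal property of the extraction pullback, square~(1a) comes from the pullback lemma applied to the decomposition of the rule pullback, and square~(1b) comes from Corollary~\ref{lem:pullback:cube} with the same five known faces. (For~(2), the paper obtains the $t_K$ case slightly more directly from left-cancellability of monos applied to $u' \circ u = t_K$, whereas you go through~(1b) and pullback stability; both are fine.)

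The one genuine gap is the uniqueness claim. The lemma asserts that $u$ is the unique morphism satisfying the \emph{single} equation $t_K = u' \circ u$, whereas the universal property of the extraction pullback only gives uniqueness among morphisms satisfying \emph{both} $u' \circ u = t_K$ and $g_L \circ u = m \circ l$. Your parenthetical remark that the additional equation $g_L \circ u = m \circ l$ is ``forced by the universal property'' is not justified: for an arbitrary $v$ with $u' \circ v = t_K$ one only gets $\alpha \circ g_L \circ v = l' \circ t_K = \alpha \circ m \circ l$, and since $\alpha$ need not be monic this does not yield $g_L \circ v = m \circ l$. The point is not cosmetic: Remark~\ref{remark:pbpo:u:not:unique} exhibits a PBPO step in which several distinct $v$ satisfy $u' \circ v = t_K$, so the stronger uniqueness genuinely depends on the strong match square. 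The repair is short and you already have the needed ingredient: once~(1b) is known to be a pullback, Proposition~\ref{prop:unique:first:factor} (with $f = t_K$, $g = u'$, $h = u$, $h' = v$) gives $v = u$ for every $v$ with $t_K = u' \circ v$. This is exactly how the paper closes its proof.
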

\begin{proof}
    Consider the commuting diagram
    \begin{center}
        \begin{tikzcd}
        L \arrow[d, "m" description] \arrow[dd, "t_L" description, bend right=49] & K \arrow[dd, "t_K" description, bend left=49] \arrow[l, "l" description] \arrow[d, "!u" description, dotted] \\
        G_L \arrow[d, "\alpha" description] \arrow[rd, "\text{PB}", phantom]                  & G_K \arrow[l, "g_L" description] \arrow[d, "u'" description]                                                             \\
        L'                                                                                    & K' \arrow[l, "l'" description]                                                                                          
        \end{tikzcd}
    \end{center}
    where morphism $u$ satisfying $t_K = {u' \circ u}$ and ${m \circ l} = {g_L \circ u}$ is inferred by the universal property of the bottom square. Because the outer square is the pullback of the rewrite rule, ${m \circ l} = {g_L \circ u}$ is a pullback by the pullback lemma. Moreover, this shows that monicity of $u$ follows from monicity of $t_K$ ($g \circ f$ is monic $\Rightarrow$ $f$ is monic) or monicity of $m$ and pullback stability.
    
    The accumulated squares can be represented as the commutative cube
    \begin{center}
        \begin{tikzcd}[row sep={30,between origins},column sep={30,between origins}]
            & K  \ar[rr, "u" description] \ar[dd, equals] \ar[dl, "l" description] & &  G_K \ar[dd, "u'" description] \ar[dl, "g_L" description] \\
            L \ar[rr, crossing over, "m" description, pos=0.25] \ar[dd, equals] & & G_L \\
              & K \ar[rr, "t_K" description, pos=0.27] \ar[dl, "l" description, hook'] & &  K' \ar[dl, "l'" description] \\
            L \ar[rr, "t_L" description] && L' \ar[from=uu,crossing over, "\alpha" description, pos=0.75]
         \end{tikzcd} .
    \end{center}
    which shows that square $t_K \circ \id{K} = t_K = u \circ u'$ is a pullback square by Corollary~\ref{lem:pullback:cube}. Finally, because it is a pullback square, it follows from Proposition~\ref{prop:unique:first:factor} that for any $v$ with $t_K = u' \circ v$, $v = u$.
    \qed
\end{proof}

\begin{proposition}[Bottom-Right Pushout]
\label{prop:bottomright:pushout}
    Let cospan $K' \xrightarrow{r'} R' \xleftarrow{t_R} R$ be a pushout for span $R \xleftarrow{r} K \xrightarrow{t_K} K'$ of rule $\rho$ in Definition~\ref{def:pbpostrong:rewrite:step}. Then in the rewrite step diagram, there exists a morphism $w' : G_R \to R'$ such that
    $t_R = w' \circ w$, and
    $K' \xrightarrow{r'} R' \xleftarrow{w'} G_R$ is a pushout for $K' \xleftarrow{u'} G_K \xrightarrow{g_R} G_R$.
\end{proposition}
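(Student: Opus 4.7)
The plan is to apply the vertical pasting lemma for pushouts to a rectangle built from two stacked squares, using the factorization $t_K = u' \circ u$ supplied by Lemma~\ref{lemma:on:u}. The top square is the pushout $(r, u, g_R, w)$ appearing in the rewrite step diagram. The bottom square, yet to be completed with a morphism $w' : G_R \to R'$, has sides $g_R$, $u'$, $r'$, $w'$. Because $u' \circ u = t_K$, the outer rectangle of this stack is precisely the cospan $R \xrightarrow{t_R} R' \xleftarrow{r'} K'$ sitting over the span $R \xleftarrow{r} K \xrightarrow{t_K} K'$, which is a pushout by the hypothesis of the proposition.

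To construct $w'$, I would invoke the universal property of the top pushout $G_R$. The two candidate legs are $t_R : R \to R'$ and $r' \circ u' : G_K \to R'$; they agree on $K$ since $t_R \circ r = r' \circ t_K = r' \circ u' \circ u$, where the first equality holds by the pushout defining $R'$. The universal property then yields a unique mediating $w' : G_R \to R'$ satisfying both $w' \circ w = t_R$ (the desired factorization in the proposition) and $w' \circ g_R = r' \circ u'$ (so the bottom square commutes).

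With the bottom square commuting, the rectangle genuinely decomposes as top-on-bottom, and the pasting lemma for pushouts delivers the conclusion: since the top square is a pushout and the outer rectangle is a pushout, the bottom square is a pushout, which is exactly the statement that $K' \xrightarrow{r'} R' \xleftarrow{w'} G_R$ is a pushout of $K' \xleftarrow{u'} G_K \xrightarrow{g_R} G_R$. There is no real obstacle here; the only care required is correctly identifying the outer rectangle with the hypothesized pushout $R'$ via the equality $t_K = u' \circ u$, which then lets both the existence of $w'$ and the pushout property of the bottom square fall out of standard categorical machinery.
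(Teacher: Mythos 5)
Your proof is correct and is essentially the paper's argument: the paper's proof simply says to dualize the initial part of the proof of Lemma~\ref{lemma:on:u}, which is exactly what you do --- obtaining $w'$ from the universal property of the top pushout via the cocone $(t_R,\, r' \circ u')$ agreeing on $K$ through $t_K = u' \circ u$, and then concluding with the dual pasting lemma (outer and top pushouts imply the bottom square is a pushout). No gaps.
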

\begin{proof}
    The argument is similar to the initial part of the proof of Lemma~\ref{lemma:on:u}, but now uses the dual statement of the pullback lemma. \qed
\end{proof}

Lemma~\ref{lemma:on:u} and Proposition~\ref{prop:bottomright:pushout} show that a \pbpostrong step defines a commuting diagram similar to the PBPO definition (Definition~\ref{def:pbpo:rewrite:step}):
\begin{center}
    \begin{tikzcd}[column sep=80]
                                                                                                   & \color{gray}{L} \arrow[d, "m" description, gray]  & K \arrow[d, "u" description] \arrow[r, "r" description] \arrow[l, "l" description, gray] \arrow[ld, "\mathrm{PB}", phantom, gray] \arrow[rd, "\mathrm{PO}", phantom] & R \arrow[d, "w" description] \arrow[dd, "t_R" description, bend left=55, pos=0.7, gray] \\
    L \arrow[r, "m" description] \arrow[d, equals] \arrow[rd, "\mathrm{PB}", phantom] & G_L \arrow[d, "\alpha" description]                                   & G_K \arrow[d, "u'" description] \arrow[l, "g_L" description] \arrow[r, "g_R" description] \arrow[ld, "\mathrm{PB}", phantom] \arrow[rd, "\mathrm{PO}", phantom, gray]                                   & G_R \arrow[d, "w'" description, gray]                                       \\
    L \arrow[r, "t_L" description]                                                                 & L'  \arrow[from=uu, "t_L" description, bend left=55, crossing over, pos=0.7, gray]                                                                  & K' \arrow[l, "l'" description] \arrow[r, "r'" description, gray] \arrow[from=uu, "t_K" description, bend left=55, crossing over, pos=0.7]                                                                                                                               & \color{gray}{R'}                                                            
    \end{tikzcd}
\end{center}
which will guide our depictions of rewrite steps. We will omit the match diagram in such depictions.

\begin{example}[Rewrite Step]
\label{ex:simple:rewrite:step}
    Applying the rule given in Example~\ref{ex:simple:rewrite:rule} to $G_L$ (as depicted below) has the following effect:
    {

\newcommand{\nodexa}{\vertex{x_1}{cblue!20}}%
\newcommand{\nodexb}{\vertex{x_2}{cblue!20}}%
\newcommand{\nodey}{\vertex{y}{cgreen!20}}%
\newcommand{\nodez}{\vertex{z}{cred!10}}%
\newcommand{\nodeza}{\vertex{z_1}{cred!10}}%
\newcommand{\nodezb}{\vertex{z_2}{cred!10}}%
\newcommand{\nodezc}{\vertex{z_3}{cred!10}}%
\newcommand{\nodeu}{\vertex{u}{cpurple!25}}%
\begin{center}\vspace{-.25ex}
  \scalebox{\rulescale}{
  \begin{tikzpicture}[->,node distance=12mm,n/.style={}]
      \graphbox{$L$}{0mm}{0mm}{30mm}{10mm}{-2mm}{-5mm}{
      \node [npattern] (xaxb)
      {\nodexa \ \nodexb};
      \node [npattern] (y) [right of=xaxb] {\nodey};
      \draw [epattern] (xaxb) to node {} (y);
    }
    \graphbox{$G_L$}{0mm}{-11mm}{30mm}{24mm}{-2mm}{-9mm}{
      \node [npattern] (xaxb)
      {\nodexa \ \nodexb};
      \node [npattern] (y) [right of=xaxb] {\nodey};
      \draw [epattern] (xaxb) to node {} (y);
      
      \node [npattern] (za) [below of=xaxb, short] {\nodeza};
      \node [npattern] (zb) [right of=za] {\nodezb};
      \node [npattern] (zc) [left of=za,node distance=7mm] {\nodezc};
      
      \draw [epattern] (za) to node {} (xaxb);
      \draw [epattern] (zb) to node {} (za);
      \draw [epattern] (zb) to node {} (xaxb);
      \draw [epattern] (za) to [bend right=20] node {} (zb);
      
      \draw [epattern] (y) to [bend right=50] node {} (xaxb);
      \draw [epattern] (y) to [bend right=80] node {} (xaxb);
      \draw [epattern] (y) to (zb);
    }
    \graphbox{$G_K$}{31mm}{-11mm}{38mm}{24mm}{-2mm}{-9mm}{
      \node [npattern, xshift=3mm] (xaxb)
      {\nodexb};
      \node [npattern] (xa) [left of=xaxb]
      {\nodexa};
      \node [npattern] (y) [right of=xaxb] {\nodey};
      
      \node [npattern] (za) [below of=xaxb, short] {\nodeza};
      \node [npattern] (zb) [right of=za] {\nodezb};
      \node [npattern] (zc) [left of=za,node distance=10mm] {\nodezc};
      
      \draw [epattern] (zb) to [bend left=20] node {} (za);
      \draw [epattern] (za) to node {} (zb);
      
      \draw [epattern] (y) to [bend right=50] node {} (xaxb);
      \draw [epattern] (y) to [bend right=70] node {} (xaxb);
      \draw [epattern] (y) to [bend right=30] node {} (xaxb);
      \draw [epattern] (y) to [bend right=10] node {} (xaxb);
      \draw [epattern] (za) to node {} (xa);
      \draw [epattern] (zb) to node {} (xa);
      
    }
    \graphbox{$K$}{31mm}{0mm}{38mm}{10mm}{-8mm}{-5mm}{
      \node [npattern] (xa) {\nodexa};
      \node [npattern] (xb) [right of=xa, short] {\nodexb};
      \node [npattern] (y) [right of=xb, short] {\nodey};
    }
    
    \graphbox{$R$}{70mm}{0mm}{42mm}{10mm}{-8mm}{-5mm}{
      \node [npattern] (xay)
      {\nodexa \ \nodey};
      \node [npattern] (xb) [right of=xay] {\nodexb};
      \draw [epattern,loop=0,looseness=3] (xb) to node {} (xb);
      
      \node [npattern] (u) [right of=xb] {\nodeu};
    }
    
    \graphbox{$G_R$}{70mm}{-11mm}{42mm}{24mm}{-8mm}{-9mm}{
      \node [npattern] (xaxb)
      {\nodexa \ \nodey};
      \node [npattern] (y) [right of=xaxb, xshift=3mm] {\nodexb};
      
      \node [npattern] (za) [below of=xaxb, short] {\nodeza};
      \node [npattern] (zb) [right of=za] {\nodezb};
      \node [npattern] (zc) [left of=za,node distance=7mm] {\nodezc};

      \node [npattern] (u) [below right of=y] {\nodeu};
      
      \draw [epattern] (zb) to [bend left=20] node {} (za);
      \draw [epattern] (za) to node {} (zb);
      
      \draw [epattern] (xaxb) to [bend left=10] node {} (y);
      \draw [epattern] (xaxb) to [bend left=30] node {} (y);
      \draw [epattern] (xaxb) to [bend left=50] node {} (y);
      \draw [epattern] (xaxb) to [bend left=70] node {} (y);
      \draw [epattern,loop=0,looseness=3] (y) to node {} (y);
      
      \draw [epattern] (za) to node {} (xaxb);
      \draw [epattern] (zb) to node {} (xaxb);
    }
    \graphbox{$L'$}{0mm}{-36mm}{30mm}{23mm}{-2mm}{-7mm}{
      \node [npattern] (xaxb)
      {\nodexa \ \nodexb};
      \node [npattern] (y) [right of=xaxb] {\nodey};
      \node [nset] (z) [below of=xaxb, short]
      {\nodeza \ \nodezb \ \nodezc};
      
      \draw [epattern] (xaxb) to node {} (y);
      
      \draw [eset] (y) to [bend right=50] node {} (xaxb);
      \draw [eset] (z) to node {} (xaxb);
      \draw [eset,loop=180,looseness=3] (z) to node {} (z);
      
      \draw[eset] (y) to node {} (z);
    }
    \graphbox{$K'$}{31mm}{-36mm}{38mm}{23mm}{-8mm}{-7mm}{
      \node [npattern] (xa) {\nodexa };
      \node [npattern] (xb) [right of=xa, short] {\nodexb};
      \node [npattern] (y) [right of=xb, short] {\nodey};
      \node [nset] (z) [below of=xa,short, xshift=3mm] {\nodeza \ \nodezb \ \nodezc};
      
      \draw [eset] (y) to [bend right=40] node {} (xb);
      \draw [eset] (y) to [bend right=60] node {} (xb);
      \draw [eset,loop=180,looseness=3] (z) to node {} (z);
      \draw [eset] (z) to node {} (xa);
    }
    \transparentgraphbox{$R'$}{70mm}{-36mm}{42mm}{23mm}{-8mm}{-7mm}{
      \node [npattern] (xay)
      {\nodexa \ \nodey};
      \node [npattern] (xb) [right of=xay, xshift=3mm] {\nodexb};
      \draw [epattern,loop=0,looseness=3] (xb) to node {} (xb);
      \node [nset] (z) [below of=xay, short] {\nodeza \ \nodezb \ \nodezc};
      
      \draw [eset] (xay) to [bend left=20] node {} (xb);
      \draw [eset] (xay) to [bend left=40] node {} (xb);
      \draw [eset,loop=180,looseness=3] (z) to node {} (z);
      
      \draw [eset] (z) to node {} (xay);
      \node [npattern] (u) [below right of=xb] {\nodeu};
    }
  \end{tikzpicture}
  }\vspace{-1ex}
\end{center}

}
    This example illustrates (i)~how permitted patches can be constrained (e.g., $L'$ forbids patch edges \emph{targeting} $y$), (ii)~how patch edge endpoints that lie in the image of $t_L$ can be redefined, (iii)~how patch edges can be deleted, and (iv)~how patch edges can be duplicated.
\end{example}

In the examples of this section, we have restricted our attention to unlabeled graphs. In Section~\ref{sec:category:graph:lattice}, we introduce a new category \GraphLattice{(\labels,\leq)}, and show that is more suitable than \Graph for rewriting labeled graphs using \pbpostrong{}. Section~\ref{sec:category:graph:lattice} can largely be read independently of Sections~\ref{sec:relating} and~\ref{sec:quasitopos}, which situate \pbpostrong and are more foundational in character.

\section{Relating \pbpostrong and PBPO}
\label{sec:relating}

In Section~\ref{sec:relating:rule:match:step}, we recall and compare the PBPO definitions for rule, match and step, clarifying why \pbpostrong is shorthand for \emph{PBPO with strong matching}. We then argue why strong matching is usually desirable in Section~\ref{sec:relating:case:for:strong:matching}.

\subsection{PBPO: Rule, Match \& Step}
\label{sec:relating:rule:match:step}

\smallskip

\noindent
\begin{minipage}{0.68\textwidth}
    \begin{definition}[PBPO Rule~\cite{corradini2019pbpo}]
        \label{def:pbpo:rule}
        A \emph{PBPO rule} $\rho$ is a commutative diagram as shown on the right.
        The bottom span can be regarded as a typing for the top span. The rule is in \emph{canonical form} if the left square is a pullback and the right square is a pushout.
    \end{definition}
\end{minipage}\hfill%
\begin{minipage}{0.25\textwidth}
    \hspace*{-6.2mm}
    \begin{tikzcd}
    L \arrow[d, "t_L" description] & K \arrow[l, "l" description] \arrow[r, "r" description] \arrow[d, "t_K" description] \arrow[ld, "=", phantom] \arrow[rd, "=", phantom] & R \arrow[d, "t_R" description] \\
    L'                             & K' \arrow[l, "l'" description] \arrow[r, "r'" description]                                                                             & R'                            
    \end{tikzcd}
\end{minipage}
\medskip

Every PBPO rule is equivalent to a rule in canonical form~\cite{corradini2019pbpo}, and in \pbpostrong, rules are limited to those in canonical form. 

\begin{definition}[PBPO Match~\cite{corradini2019pbpo}]
    A \emph{PBPO match} for a typing $t_L : L \to L'$ is a pair of morphisms $(m : L \to G,\alpha : G \to L')$ such that $t_L = \alpha \circ m$.
\end{definition}

The pullback construction used to establish a match in \pbpostrong (Definition~\ref{def:pbpostrong:match}) implies $t_L = \alpha \circ m$. Thus PBPO matches are more general than the strong match used in \pbpostrong (Definition~\ref{def:pbpostrong:match}). More specifically for \Graph, PBPO allows mapping elements of the host graph $G_L$ not in the image of $m : L \to G_L$ onto the image of $t_L$, whereas \pbpostrong forbids this. In the next subsection, we will argue why it is often desirable to forbid such mappings.

\begin{definition}[PBPO Rewrite Step~\cite{corradini2019pbpo}]\label{def:pbpo:rewrite:step}
    A PBPO rule $\rho$ (as in Definition~\ref{def:pbpo:rule}) induces a \emph{PBPO step} $\pbpostep{G_L}{G_R}{\rho}{m}{\alpha}$ if there exists a diagram
    \begin{center}
        \begin{tikzcd}[column sep=80]
        L \arrow[d, "m" description]  & K \arrow[d, "u" description] \arrow[r, "r" description] \arrow[l, "l" description] \arrow[ld, "\mathrm{=}", phantom] \arrow[rd, "\mathrm{PO}", phantom] & R \arrow[d, "w" description] \arrow[dd, "t_R" description, bend left=55, pos=0.7] \\
         G_L \arrow[d, "\alpha" description]                                   & G_K \arrow[d, "u'" description] \arrow[l, "g_L" description] \arrow[r, "g_R" description] \arrow[ld, "\mathrm{PB}", phantom] \arrow[rd, "=", phantom]                                   & G_R \arrow[d, "w'" description]                                       \\
         L'  \arrow[from=uu, "t_L" description, bend left=55, crossing over, pos=0.7]                                                                  & K' \arrow[l, "l'" description] \arrow[r, "r'" description] \arrow[from=uu, "t_K" description, bend left=55, crossing over, pos=0.7]                                                                                                                               & R'                                                            
        \end{tikzcd}
    \end{center}
    where (i)~$u : K \to G_K$ is uniquely determined by the universal property of pullbacks and makes the top-left square commuting, and (ii)~$w' : G_R \to R'$ is uniquely determined by the universal property of pushouts and makes the bottom-right square commuting, and (iii)~$t_L = \alpha \circ m$.
\end{definition}

The strong match square of \pbpostrong{} allows simplifying the characterization of $u$, as shown in the proof to Lemma~\ref{lemma:on:u}. This simplification is not possible for PBPO (see Remark~\ref{remark:pbpo:u:not:unique}). The bottom-right square is omitted in the definition of a \pbpostrong{} rewrite step, but can be reconstructed through a pushout (modulo isomorphism). So this difference is not essential.
\\[-1.5ex] 

\begin{remark}
\label{remark:pbpo:u:not:unique}
    In a PBPO rewrite step, not every morphism $u : K \to G_K$ satisfying $u' \circ u = t_K$ corresponds to the arrow uniquely determined by the top-left pullback. This can be seen in the example of a (canonical) PBPO rewrite rule and step depicted in:
    \begin{center}
        {

\newcommand{\nodexa}{\vertex{x_1}{cblue!20}}
\newcommand{\nodexb}{\vertex{x_2}{cblue!20}}
\newcommand{\nodexc}{\vertex{x_3}{cblue!20}}
\newcommand{\nodexd}{\vertex{x_4}{cblue!20}}

\newcommand{\nodea}{\vertex{a}{cgreen!20}}
\newcommand{\nodeb}{\vertex{b}{cpurple!25}}

\newcommand{\nodeaa}{\vertex{a_1}{cgreen!20}}
\newcommand{\nodeab}{\vertex{a_2}{cgreen!20}}
\newcommand{\nodeba}{\vertex{b_1}{cpurple!25}}
\newcommand{\nodebb}{\vertex{b_2}{cpurple!25}}

\newcommand{\nodex}{\vertex{x}{cblue!20}}

  \scalebox{\rulescale}{
  \begin{tikzpicture}[->,node distance=12mm,n/.style={}]
    \graphbox{$R$}{68mm}{0mm}{40mm}{8mm}{-8mm}{-4mm}{
      \node [npattern] (xa)
      {\nodexa};
      \node [npattern] (xb) [right of=xa, short,xshift=-2.5mm]
      {\nodexb};
    }
    \graphbox{$G_L$}{0mm}{-9mm}{26mm}{9mm}{-1mm}{-4.8mm}{
      \node [npattern] (a)
      {\nodea};
      \node [npattern] (b) [right of=a, short, xshift=-2.5mm] {\nodeb};
    }
      \graphbox{$L$}{0mm}{0mm}{26mm}{8mm}{-1mm}{-4mm}{
      \node [npattern] (x)
      {\nodex};
       \draw [epattern, densely dotted] (x) to node {} (a);
    }
    \graphbox{$G_K$}{27mm}{-9mm}{40mm}{9mm}{-8mm}{-4.8mm}{
      \node [npattern] (aa)
      {\nodeaa};
      \node [npattern] (ab) [right of=aa, short, xshift=-2.5mm] {\nodeab};
      \node [npattern] (ba) [right of=ab, short, xshift=-2.5mm]
      {\nodeba};
      \node [npattern] (bb) [right of=ba, short, xshift=-2.5mm] {\nodebb};
    }
    \graphbox{$K$}{27mm}{0mm}{40mm}{8mm}{-8mm}{-4mm}{
      \node [npattern] (xa)
      {\nodexa};
      \node [npattern] (xb) [right of=xa, short, xshift=-2.5mm]
      {\nodexb};
      \draw [epattern, densely dotted] (xa) to node {} (aa);
      \draw [epattern, densely dotted] (xb) to node {} (ab);
    }
    \graphbox{$G_R$}{68mm}{-9mm}{40mm}{9mm}{-8mm}{-4.8mm}{
      \node [npattern] (aa)
      {\nodeaa};
      \node [npattern] (ab) [right of=aa, short, xshift=-2.5mm] {\nodeab};
      \node [npattern] (ba) [right of=ab, short, xshift=-2.5mm]
      {\nodeba};
      \node [npattern] (bb) [right of=ba, short, xshift=-2.5mm] {\nodebb};
    }
    \graphbox{$L'$}{0mm}{-19mm}{26mm}{8mm}{-1mm}{-4mm}{
      \node [npattern] (x)
      {\nodex};
    }
    \graphbox{$K'$}{27mm}{-19mm}{40mm}{8mm}{-8mm}{-4mm}{
      \node [npattern] (xa)
      {\nodexa};
      \node [npattern] (xb) [right of=xa, short, xshift=-2.5mm]
      {\nodexb};
    }
    \transparentgraphbox{$R'$}{68mm}{-19mm}{40mm}{8mm}{-8mm}{-4mm}{
      \node [npattern] (xa)
      {\nodexa};
      \node [npattern] (xb) [right of=xa, short, xshift=-2.5mm]
      {\nodexb};
    }
  \end{tikzpicture}
  }

}
    \end{center}
    Because our previous notational convention breaks for this example, we indicate two morphisms by dotted arrows. The others can be inferred.
    
    Morphism $u : K \to G_K$ (as determined by the top-left pullback) is indicated. However, it can be seen that three other morphisms $v : K \to G_K$ satisfy $u' \circ v = t_K$, because every $x \in V_{K'}$ has two elements in its preimage in $G_{K}$.
\end{remark}

\subsection{The Case for Strong Matching}
\label{sec:relating:case:for:strong:matching}

The two following examples serve to illustrate why we find it necessary to strengthen the matching criterion when matching is not controlled. 

\begin{example}
    \label{ex:pbpo:remove:loop}
    In \pbpostrong, an application of the rule
    \begin{center}

{
\newcommand{\nodexa}{\vertex{x_1}{cblue!20}}%
\newcommand{\nodexb}{\vertex{x_2}{cblue!20}}%
\newcommand{\nodey}{\vertex{y}{cgreen!20}}%
\newcommand{\nodez}{\vertex{z}{cred!10}}%
\newcommand{\nodeu}{\vertex{u}{cpurple!25}}%
\newcommand{\nodex}{\vertex{x}{cblue!20}}%
\begin{center}\vspace{0ex}
  \scalebox{\rulescale}{
  \begin{tikzpicture}[->,node distance=12mm,n/.style={}]
    \graphbox{$L$}{0mm}{0mm}{35mm}{8mm}{-4mm}{-4mm}{
      \node [npattern] (x)
      {\nodex};
      \draw [epattern,loop=180,looseness=3] (x) to node {} (x);
    }
    \graphbox{$K$}{36mm}{0mm}{35mm}{8mm}{-4mm}{-4mm}{
      \node [npattern] (x)
      {\nodex};
    }
    \graphbox{$R$}{72mm}{0mm}{35mm}{8mm}{-8mm}{-4mm}{
      \node [npattern] (x)
      {\nodex};
    }
    \graphbox{$L'$}{0mm}{-9mm}{35mm}{8mm}{-4mm}{-4mm}{
      \node [npattern] (x)
      {\nodex};
      \draw [epattern,loop=180,looseness=3] (x) to node {} (x);
      \node [nset] (y) [right of=x] {\nodey};
      \draw [eset,loop=180,looseness=3] (y) to node {} (y);
    }
    \graphbox{$K'$}{36mm}{-9mm}{35mm}{8mm}{-4mm}{-4mm}{
      \node [npattern] (x)
      {\nodex};
      \node [nset] (y) [right of=x] {\nodey};
      \draw [eset,loop=180,looseness=3] (y) to node {} (y);
    }
    \transparentgraphbox{$R'$}{72mm}{-9mm}{35mm}{8mm}{-8mm}{-4mm}{
      \node [npattern] (x)
      {\nodex};
      \node [nset] (y) [right of=x] {\nodey};
      \draw [eset,loop=180,looseness=3] (y) to node {} (y);
    }
  \end{tikzpicture}
  }\vspace{0ex}
\end{center}%
}
    \end{center}
    in an unlabeled graph $G_L$ removes a loop from an isolated vertex that has a single loop, and preserves everything else. In PBPO, a match is allowed to map all of $G_L$ into the component determined by vertex $\{ x \}$, so that the rule deletes all of $G_L$'s edges at once.
    (Before studying the next example, the reader is invited to consider what the effect of the PBPO rule is if $R$ and $R'$ are replaced by $L$ and $L'$, respectively.)
\end{example}

\begin{example}
    \label{example:pbpo:spiral}
    Consider the following PBPO rule application
    \begin{center}
        {

\newcommand{\nodex}{\vertex{x}{cblue!20}}
\newcommand{\nodexa}{\vertex{x_1}{cblue!20}}
\newcommand{\nodexb}{\vertex{x_2}{cblue!20}}
\newcommand{\nodexp}{\vertex{x'}{cblue!20}}
\newcommand{\nodexap}{\vertex{x_1'}{cblue!20}}
\newcommand{\nodexbp}{\vertex{x_2'}{cblue!20}}
\newcommand{\nodey}{\vertex{y}{cgreen!20}}
\newcommand{\nodeya}{\vertex{y_1}{cgreen!20}}
\newcommand{\nodez}{\vertex{z}{cred!10}}
\newcommand{\nodeza}{\vertex{z_1}{cred!10}}
\newcommand{\nodezb}{\vertex{z_2}{cred!10}}
\newcommand{\nodezc}{\vertex{z_3}{cred!10}}
\newcommand{\nodeu}{\vertex{u}{cpurple!25}}

  \scalebox{\rulescale}{
  \begin{tikzpicture}[->,node distance=12mm,n/.style={}]
      \graphbox{$L$}{0mm}{0mm}{26mm}{10mm}{-3mm}{-4.5mm}{
      \node [npattern] (x)
      {\nodex};
      \node [npattern] (y) [right of=x, short] {\nodey};
      \draw [epattern] (x) to[bend left=20] node {} (y);
      \draw [epattern] (y) to[bend left=20] node {} (x);
    }
    \graphbox{$G_L$}{0mm}{-11mm}{26mm}{25mm}{-3mm}{-5mm}{
      \node [npattern] (x) {\nodex};
      \node [npattern] (y) [right of=x, short] {\nodey};
      \draw [epattern] (x) to[bend left=20] node {} (y);
      \draw [epattern] (y) to[bend left=20] node {} (x);
      
      
      \node [npattern] (xa) [below of=x, short, yshift=2mm] {\nodexa};
      \node [npattern] (ya) [right of=xa, short] {\nodeya};
      \node [npattern] (xb) [below of=xa, short, yshift=2mm] {\nodexb};
      
      \draw [epattern] (y) to (xa);
      \draw [epattern] (xa) to (ya);
      \draw [epattern] (ya) to (xb);
    }
    \graphbox{$G_K$}{27mm}{-11mm}{36mm}{25mm}{-7mm}{-5mm}{
      \node [npattern] (x)
      {\nodex};
      \node [npattern] (y) [right of=x, short] {\nodey};
      \draw [epattern] (x) to[bend left=20] node {} (y);
      \draw [epattern] (y) to[bend left=20] node {} (x);
      
      
      \node [npattern] (xa) [below of=x, short, yshift=2mm] {\nodexa};
      \node [npattern] (ya) [right of=xa, short] {\nodeya};
      \node [npattern] (xb) [below of=xa, short, yshift=2mm] {\nodexb};
      
      \draw [epattern] (y) to (xa);
      \draw [epattern] (xa) to (ya);
      \draw [epattern] (ya) to (xb);
      
      
      \node [npattern] (xp) [right of=y, short] {\nodexp};
      \node [npattern] (xpa) [right of=ya, short] {\nodexap};
      \node [npattern] (xpb) [below of=xpa, short, yshift=2mm] {\nodexbp};
    
      \draw [epattern] (xp) to (y);
      \draw [epattern] (xpa) to (ya);
    }
    \graphbox{$K$}{27mm}{0mm}{36mm}{10mm}{-7mm}{-5mm}{
      \node [npattern] (x)
      {\nodex};
      \node [npattern] (y) [right of=x, short] {\nodey};
      \draw [epattern] (x) to[bend left=20] node {} (y);
      \draw [epattern] (y) to[bend left=20] node {} (x);
      
      \node [npattern] (xp) [right of=y, short] {\nodexp};
      \draw [epattern] (xp) to (y);
    }
    
    \graphbox{$R$}{64mm}{0mm}{36mm}{10mm}{-8mm}{-5mm}{
      \node [npattern] (x) {\nodex};
      \node [npattern] (y) [right of=x] {\nodey \ \nodexp};
      \draw [epattern] (x) to[bend left=20] node {} (y);
      \draw [epattern] (y) to[bend left=20] node {} (x);
      
      \draw [epattern,loop=0,looseness=3] (y) to node {} (y);
    }
    
    \graphbox{$G_R$}{64mm}{-11mm}{36mm}{25mm}{-7mm}{-5mm}{
            \node [npattern] (x)
      {\nodex};
      \node [npattern] (y) [right of=x] {\nodey \ \nodexp};
      \draw [epattern] (x) to[bend left=20] node {} (y);
      \draw [epattern] (y) to[bend left=20] node {} (x);
      
      
      \node [npattern] (xa) [below of=x, short, yshift=2mm] {\nodexa};
      \node [npattern] (ya) [right of=xa, short] {\nodeya};
      \node [npattern] (xb) [below of=xa, short, yshift=2mm] {\nodexb};
      
      \draw [epattern] (y) to (xa);
      \draw [epattern] (xa) to (ya);
      \draw [epattern] (ya) to (xb);
      
      
      \node [npattern] (xpa) [right of=ya, short] {\nodexap};
      \node [npattern] (xpb) [below of=xpa, short, yshift=2mm] {\nodexbp};
    
      \draw [epattern] (xpa) to (ya);
      
      
      \draw [epattern,loop=0,looseness=3] (y) to node {} (y);
      
    }
    \graphbox{$L'$}{0mm}{-37mm}{26mm}{10mm}{-3mm}{-5mm}{
      \node [npattern] (x)
      {\nodex};
      \node [npattern] (y) [right of=x, short] {\nodey};
      \draw [epattern] (x) to[bend left=20] node {} (y);
      \draw [epattern] (y) to[bend left=20] node {} (x);
    }
    \graphbox{$K'$}{27mm}{-37mm}{36mm}{10mm}{-7mm}{-5mm}{
      \node [npattern] (x)
      {\nodex};
      \node [npattern] (y) [right of=x, short] {\nodey};
      \draw [epattern] (x) to[bend left=20] node {} (y);
      \draw [epattern] (y) to[bend left=20] node {} (x);
      
      \node [npattern] (xp) [right of=y, short] {\nodexp};
      \draw [epattern] (xp) to (y);
    }
    \transparentgraphbox{$R'$}{64mm}{-37mm}{36mm}{10mm}{-7mm}{-5mm}{
      \node [npattern] (x)
      {\nodex};
      \node [npattern] (y) [right of=x] {\nodey \ \nodexp};
      \draw [epattern] (x) to[bend left=20] node {} (y);
      \draw [epattern] (y) to[bend left=20] node {} (x);
      
      \draw [epattern,loop=0,looseness=3] (y) to node {} (y);
    }
  \end{tikzpicture}
  }

}
    \end{center}
    to a host graph $G_L$ (the morphisms are defined in the obvious way). Intuitively, host graph $G_L$ is spiralled over the pattern of $L'$. The pullback then duplicates all elements mapped onto $x \in V_{L'}$ and any incident edges directed at a node mapped into $y \in V_{L'}$. The pushout, by contrast, affects only the image of $u : K \to G_K$.
\end{example}

The two examples show how locality of transformations cannot be enforced using PBPO. They also illustrate how it can be difficult to characterize the class of host graphs $G_L$ and adherences $\alpha$ that establish a match, even for trivial left-hand sides. Finally, Example \ref{example:pbpo:spiral} in particular highlights an asymmetry that we find unintuitive: if one duplicates and then merges/extends pattern elements of $L'$, the duplication affects all elements in the $\alpha$-preimage of $t_L(L)$ (which could even consist of multiple components), whereas the pushout affects only $u(K) \subseteq G_K$. In \pbpostrong, by contrast, transformations of the pattern affect the pattern only, and the overall applicability of a rule is easy to understand if the context graph is relatively simple (e.g., as in Example~\ref{ex:simple:rewrite:step}).

\begin{remark}[$\Gamma$-preservation]
    A locality notion has been defined for PBPO called \emph{$\Gamma$-preservation}~\cite{corradini2019pbpo}. $\Gamma$ is some subobject of $L'$, and a rewrite step
    $\pbpostep{G_L}{G_R}{\rho}{m}{\alpha}$ is said to be \emph{$\Gamma$-preserving} if the $\alpha : G_L \to L'$ preimage of $\Gamma \subseteq L'$ is preserved from $G_L$ to $G_R$ (roughly meaning that this preimage is neither modified nor duplicated). Similarly, a rule is $\Gamma$-preserving if the rewrite steps it gives rise to are $\Gamma$-preserving. If one chooses $\Gamma$ to be the context graph (the right component) of $L'$ in Example~\ref{ex:pbpo:remove:loop}, then the rule, interpreted as a PBPO rule, is $\Gamma$-preserving. Nonetheless,
    PBPO does not prevent the mapping of arbitrarily large parts of the context graph of $G_L$ onto the image of $t_L$ (in Example~\ref{ex:pbpo:remove:loop}, the left component of $L'$) which usually \emph{is} modified. In this sense, the rule can still give rise to nonlocal effects.
\end{remark}

\section{\pbpostrong as a Unifying Theory for Quasitoposes}
\label{sec:quasitopos}

\newcommand{\subsumes}{\subseteq}
\renewcommand{\models}{\prec}
\newcommand{\simulates}{<}

\newcommand{\framework}[1]{\mathcal{#1}}
\newcommand{\FF}{\framework{F}}
\newcommand{\GG}{\framework{G}}

\newcommand{\MMmono}{\hookrightarrow}
\newcommand{\leftMMmono}{\hookleftarrow}

For this section, we need the following vocabulary.

\begin{definition}
    Let $\FF$, $\GG$ be rewriting formalisms. We write $\FF \prec_\CC \GG$ (or $\FF \prec \GG$ if $\CC$ is clear from context) to denote that in category $\CC$, for any $\FF$ rule $\rho$, there exists a $\GG$ rule $\tau$ such that ${\Rightarrow^\rho_\FF} = {\Rightarrow^\tau_\GG}$. If $\FF \prec_\CC \GG$ holds, then we say that \emph{$\GG$ models $\FF$ in $\CC$}. Similarly, we say that an $\FF$ rule $\rho$ can be \emph{modeled by a class} of $\GG$-rules $S$ if equation
    \[
        {\Rightarrow_\FF^\rho} \; = \; {\bigcup_{\tau \in S} \, {\Rightarrow_\GG^\tau}} 
    \]
    holds.
\end{definition}

Some of the most well-known graph rewriting formalisms include DPO~\cite{ehrig1973graph}, SPO~\cite{lowe1993algebraic}, SqPO~\cite{corradini2006sesqui} and AGREE~\cite{corradini2015agree, corradini2020algebraic}. In the conference version of this paper~\cite{overbeek2021pbpo}, we conjectured that in $\Graph$ and with monic matching, which implies conflict-freeness of SPO matches,
\begin{equation}
    \label{eq:modeling:claims}
    \begin{tikzpicture}[default,nodes={rectangle,inner sep=3mm},baseline=(l.base)]
        \node (t) {\pbpostrong};
        \node (l) at (t.west) [anchor=east,yshift=-3.5mm]  
          {$\text{SPO} \models \text{SqPO} \models \text{AGREE}$};
        \node (r) at (t.east) [anchor=west,xshift=-2mm,yshift=-3.5mm]  
          {$\text{DPO}$};
        \node (b) at ($(t)+(0,-7mm)$) {PBPO};
        \node at ($(l.north east)!0.5!(t.south west)$) [rotate=35] {$\models$};
        \node at ($(r.north west)!0.5!(t.south east) + (-.5mm,0mm)$) [rotate=180-35] {$\models$};
        \node at ($(t)!0.5!(b)$) [rotate=90] {$\models$};
    \end{tikzpicture}
\end{equation}
holds, and that the other comparisons do not hold. In this diagram, the claims AGREE $\models$ \pbpostrong and DPO $\models$ \pbpostrong were the two open ones: the claim PBPO $\models$ \pbpostrong was established in the conference paper for toposes~\cite[Lemma 32]{overbeek2021pbpo} (and thus in particular for \Graph), and SPO $\models$ SqPO~\cite[Proposition 13]{corradini2006sesqui} and SqPO $\models$ AGREE~\cite[Theorem 2]{corradini2015agree} were known for categories more general than \Graph in the literature.

The main contribution of this section is that we establish Diagram~\eqref{eq:modeling:claims} (but without SPO: see Remark~\ref{remark:modeling:SPO} below) much more generally: namely, for $\CC$ any quasitopos, and assuming regular monic matching (Theorem~\ref{thm:pbpostrong:models:others}). Of these claims, only SqPO $\prec$ AGREE follows as a corollary of the previously cited theorem~\cite[Theorem 2]{corradini2015agree}.
Quasitoposes include not only the categories of sets, directed multigraphs~\cite{lack2004adhesive} and typed graphs~\cite{corradini2015agree} (all of which are toposes), but also a variety of structures such as Heyting algebras (considered as categories)~\cite{wyler1991lecture}, and the categories of simple graphs (equivalently, binary relations)~\cite{johnstone2007quasitoposes}, fuzzy sets~\cite{wyler1991lecture}, algebraic specifications~\cite{johnstone2007quasitoposes} and safely marked Petri Nets~\cite{johnstone2007quasitoposes}. Behr et al.~\cite{behr2021concurrency} recently proposed quasitoposes as a natural setting for non-linear rewriting.

This section is structured as follows. In Section~\ref{sec:quasitopos:preliminaries} we provide all the required definitions and results pertaining to quasitoposes. We then first prove a useful sufficient condition for determinism of \pbpostrong rules in Section~\ref{sec:determinism}. This result is independent of the main result of this section. Next, we work towards the main result, establishing PBPO $\prec$ \pbpostrong, AGREE $\prec$ \pbpostrong and DPO $\prec$ \pbpostrong for quasitoposes in Sections \ref{sec:pbpostrong:models:pbpo}, \ref{sec:pbpostrong:models:agree} and \ref{sec:pbpostrong:models:dpo}, respectively.

\begin{remark}[Modeling SPO]
\label{remark:modeling:SPO}
    Deciding SPO $\models$ \pbpostrong for quasitoposes (with regular monic matching) appears relatively involved due to the fact that SPO is defined for partial morphisms, rather than total ones. For this reason, we consider the question to be beyond the scope of the present paper. However, we conjecture that for  any SPO rule, which is a partial morphism $\rho : L \rightharpoonup R$, the \pbpostrong rule
    \begin{center}
        \begin{tikzcd}[column sep=20mm]
        \pmb{L} \arrow[d, "\eta_L" description, hook] & \pmb{K} \arrow[d, "\eta_K" description, hook] \arrow[l, "\pmb{l}" description, hook', thick] \arrow[r, "\pmb{r}" description, thick] \arrow[ld, "\mathrm{PB}", phantom] & \pmb{R} \\
        T(L)                                    & T(K) \arrow[l, "\parclass{\eta_K}{l}" description]                                                                               &  
        \end{tikzcd}
    \end{center}
    models it, where the top span is the span representation of $\rho$ (Definition~\ref{def:m:partial:map}), and the hooked arrows $\MMmono$ are regular monomorphisms. For the meaning of the given pullback square, see the $\MM$-partial map classifier definition (Definition~\ref{def:partial:map:classifier}).
\end{remark}

\begin{remark}[Double Pullback Rewriting]
    There also exists the double pullback rewriting (DPU) approach by Bauderon~\cite{bauderon1995uniform} and Bauderon and Jacquet~\cite{bauderon2001pullback}. As a first approximation, a double pullback rule $\rho$ is of the form $L \stackrel{l}{\to} A \stackrel{r}{\leftto} R$; a match is a morphism $m : G \to L$; and a step from $G$ to $H$ is given by a diagram
    \begin{center}
        \begin{tikzcd}
        G \arrow[d, "m" description] \arrow[r, "q" description] \arrow[rd, "\mathrm{PB}", phantom] & D \arrow[d, "u" description] & H \arrow[d] \arrow[l] \arrow[ld, "\mathrm{PB}", phantom] \\
        L \arrow[r, "l" description]                                              & A           & R \arrow[l, "r" description]                            
        \end{tikzcd}
    \end{center}
    where $(q, u)$ is a \emph{pullback complement}.  On top of this, constraints on rules, occurrences and steps are added in order to make the method well-behaved. The constraints are rather technical and vary slightly between~\cite{bauderon1995uniform} and~\cite{bauderon2001pullback}.
    
    We have provided a detailed comparison between \pbpostrong{} and DPU in a recent \pbpostrong{} tutorial~\cite{overbeek2023tutorial}. The most relevant conclusion is that although DPU uses pullbacks (which are generic), the details are defined for slice categories of simple graphs~\cite[Definition 3]{bauderon2001pullback}, and it is not clear to what extent these can be generalized (let alone be generalized to (quasi)toposes). So it is not evident how DPU can be fit into the general picture.
\end{remark}

\subsection{Quasitoposes}
\label{sec:quasitopos:preliminaries}

\newcommand{\interpretas}[2]{{\llbracket {#1} \rrbracket}_{#2}}

The following definitions on $\MM$-partial map classification derive from work by Cockett and Lack on restriction categories \cite{cockett2002restrictionI, cockett2003restrictionII}, but we follow the presentations of Corradini et al.~\cite{corradini2015agree} and Behr et al.~\cite{behr2021concurrency}.

\begin{definition}[Stable System of Monics~\cite{cockett2002restrictionI}]
    A \emph{stable system of monics} $\MM$ in $\CC$ is a class of monos in $\CC$ that includes all isomorphisms, is closed under composition (${m,m' \in \MM} \implies {m\circ m' \in \MM}$) and is stable under pullback, i.e.,
    if
    \begin{center}
        \begin{tikzcd}
        A \arrow[r, "m'" description] \arrow[d, "f'" description] & C \arrow[d, "f" description] \\
        B \arrow[r, "m" description]                              & D                           
        \end{tikzcd}
    \end{center}
    is a pullback and $m \in \MM$, then $m' \in \MM$.
    
    We use $\MMmono$ to denote $\MM$-monos.
\end{definition}

Examples of stable systems of monics in any category include the class of all monos and the class of all isomorphisms.

Consider how in set theory, any partial function $g: A \rightharpoonup B$ with domain $A' \subseteq A$ can be represented by a total injective function $m : X \mono A$ (typically an inclusion) and total function $f : X \to B$ such that
$m(X) = A'$ and $f(x) = g(m(x))$ for all $x \in X$. The following definition is the categorical generalization of this idea.

\begin{definition}[$\MM$-Partial Map]
\label{def:m:partial:map}
    Let $\MM$ be a stable system of monics. An \emph{$\MM$-partial map} is a span $A \stackrel{m}{\leftMMmono} X \stackrel{f}{\to} B$ where $m \in \MM$.
\end{definition}

Alternatively in set theory, one can extend set $B$ to $B_\star = B \uplus \{ \star \}$, where $\star$ represents undefined elements. Then any partial map $g : A \rightharpoonup B$ with domain $A' \subseteq A$ is represented by the total function $g_\star : A \to B_\star$ satisfying $g_\star(x) = g(x)$ for $x \in A'$, and $g_\star(x) = \star$ for $x \notin A'$. This extension is minimal in the sense that $g_\star : A \to B_\star$ is uniquely defined by $g_\star(x) = g(x)$ for $x \in A'$, and $g_\star(x) \notin B$ for $x \notin A'$. The following definition is the categorical generalization of this idea.

\begin{definition}[$\MM$-Partial Map Classifier~\cite{corradini2015agree}]
\label{def:partial:map:classifier}
    Let $\MM$ be a stable system of monics in $\CC$. An
    \emph{$\MM$-partial map classifier} $(T, \eta)$ in $\CC$ consists of a functor $T : \CC \to \CC$ and a natural transformation $\eta : \id{\CC} \to T$ such that 
    \begin{itemize}
        \item for all objects $X \in \obj{\CC}$, $\eta_X : X \to T(X)$ is in $\MM$; and
        \item for each \emph{$\MM$-partial map}, i.e., span $A \stackrel{m}{\leftMMmono} X \stackrel{f}{\to} B$ with $m \in \MM$, there exists a unique morphism $\parclass{m}{f} : A \to T(B)$ making
        \begin{center}
            \begin{tikzcd}[column sep=15mm]
            X \arrow[d, "m" description, hook] \arrow[r, "f" description] & B \arrow[d, "\eta_B" description, hook] \\
            A \arrow[r, "\parclass{m}{f}" description]                    & T(B)                                   
            \end{tikzcd}
        \end{center}
    a pullback square.
\end{itemize}

\noindent
For $\MM$-morphisms $m : X \MMmono A$, we let $\parclassid{m}$ denote $\parclass{m}{\id{X}} : A \to T(X)$.
\end{definition}

\begin{example}
\label{example:partial:map:classifier}
    As explained in the running text above, in $\Set$ there exists a mono-partial map classifier $(T, \eta)$ with $T(B) = B \uplus \{ \star \}$ and $\eta_B$ the inclusion.
    
    In the category of unlabeled directed graphs, there exists a mono-partial map classifier $(T, \eta)$, for graphs $G = (V,E,s,t)$ so that $T(G) = (V_\star, E_\star, s_\star, t_\star)$ where $V_\star = V \uplus \{ \star \}$;  $E_\star = E \uplus (V_\star \times V_\star)$; $s_\star(e) = s(e)$ if $e \in E$ and $\pi_1(e)$ otherwise; and $t_\star(e) = t(e)$ if $e \in E$ and $\pi_2(e)$ otherwise ($\pi_i$ the $i$'th projection). An example is given by
    \begin{center}
        $G \; = $
        \begin{tikzcd}
        v \arrow[loop, distance=2em, in=215, out=145] \arrow[rr] &  & w
        \end{tikzcd}
        \quad and \quad 
        $T(G) \; = \hspace{-3mm} $ 
        \begin{tikzcd}
        v \arrow[loop, distance=2em, in=215, out=145] \arrow[rr] \arrow[rd, dotted, bend right] \arrow[dotted, loop, distance=4em, in=240, out=140, looseness=3] \arrow[rr, dotted, bend left] &                                                                                                & w \arrow[dotted, loop, distance=2em, in=35, out=325] \arrow[ll, dotted, bend left] \arrow[ld, dotted, bend left] \\
                                                                                                                                                                                  & \star \arrow[ru, dotted] \arrow[dotted, loop, distance=2em, in=305, out=235] \arrow[lu, dotted] &                                                                                                                 
        \end{tikzcd},
    \end{center}
    where the dotted edges represent the edges $e \in V_\star \times V_\star$.
    It can be seen that for any partial homomorphism $\psi: H \to G$ defined on subgraph $H' \subseteq H$, there exists exactly one homomorphism $\psi_\star : H \to T(G)$ such that $\psi_\star(x) = \psi(x)$ for $x \in V_{H'} \cup E_{H'}$ and $\psi_\star(x) \notin V_{G} \cup E_{G}$ for $x \notin V_{H'} \cup E_{H'}$. Equivalently, $\psi_\star$ is the unique morphism $\parclass{m}{f} : H \to T(G)$ making
        \begin{center}
            \begin{tikzcd}[column sep=15mm]
            H' \arrow[d, "m" description, hook] \arrow[r, "f" description] & G \arrow[d, "\eta_G" description, hook] \\
            H \arrow[r, "\parclass{m}{f}" description]                    & T(G)                                   
            \end{tikzcd}
        \end{center}
    a pullback square, where $H \stackrel{m}{\hookleftarrow} H' \stackrel{f}{\to} G$ a partial map span representation of $\psi$, and $\eta_G$ and $m$ are inclusions.
    
    The generalization to labeled graphs is straightforward: between any two nodes $u,v \in V_\star$ and $l \in \labels$, there is one $l$-labeled edge representing an undefined $l$-edge between $u$ and $v$.
\end{example}

The following two definitions are standard in algebraic graph rewriting.

\begin{definition}[{$\MM$-Van Kampen Square~\cite[Definition 2.3.2]{ehrig2010categorical}}]
    Let $\MM$ be a class of monomorphisms in a category $\CC$. A pushout along $m \in \MM$ is an \emph{$\MM$-Van Kampen} (\emph{VK}) \emph{square} if, whenever it lies at the bottom of a commutative cube
    \begin{center}
        \begin{tikzcd}[row sep={30,between origins},column sep={30,between origins}]
            & F  \ar[rr] \ar[dd, hook] \ar[dl] & &  G \ar[dd, hook] \ar[dl] \\
            E \ar[rr, crossing over] \ar[dd, hook] & & H \\
              & B \ar[rr] \ar[dl, hook', "m" description] & &  C \ar[dl] \\
            A \ar[rr] && D \ar[from=uu,crossing over, hook]
        \end{tikzcd}
    \end{center}
    where the vertical back faces are pullbacks and all vertical arrows are in $\MM$, the top face is a pushout iff the vertical front faces are pullbacks.
\end{definition}

\begin{definition}[{$\MM$-Adhesive Category~\cite[Definition 2.4]{ehrig2010categorical}}]
\label{def:m:adhesive:category}
    Let $\MM$ be a class of monomorphisms in a category $\CC$. $\CC$ is \emph{$\MM$-adhesive} (also known as \emph{vertical weak adhesive HLR}) if
    \begin{itemize}
        \item $\CC$ has pushouts along $\MM$-morphisms;
        \item $\CC$ has pullbacks along $\MM$-morphisms;
        \item pushouts along $\MM$-morphisms are $\MM$-VK squares; and
        \item $\MM$ contains all isomorphisms and is closed under composition, pullback and pushout.
    \end{itemize}
\end{definition}

\begin{lemma}[{\cite[Lemma 13]{lack2004adhesive}}]
\label{lemma:m:adhesive:pushouts:are:pullbacks}
    If $\CC$ is $\MM$-adhesive for a class of monomorphisms $\MM$, then pushouts along $\MM$-morphisms are pullbacks. \qed
\end{lemma}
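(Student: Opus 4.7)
The plan is to derive the pullback conclusion directly from the Van Kampen axiom by exhibiting a carefully chosen commutative cube whose bottom face is the given pushout, whose top face is a trivial pushout, whose verticals all lie in $\MM$, and one of whose two ``front'' faces is — by construction — the pushout square itself. Write the pushout along $m \in \MM$ as
\[
\begin{array}{ccc} X & \xrightarrow{p} & Y \\ m\downarrow & & \downarrow g \\ Z & \xrightarrow{n} & W \end{array}
\]
and choose the top face of the cube to be
\[
\begin{array}{ccc} X & \xrightarrow{p} & Y \\ \id{X}\downarrow & & \downarrow \id{Y} \\ X & \xrightarrow{p} & Y \end{array}
\]
with the four vertical arrows from the top corners $X, Y, X, Y$ to the bottom corners $X, Y, Z, W$ chosen to be $\id{X}, \id{Y}, m, g$ respectively. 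Commutativity of every side face is immediate from the pushout equation $g \circ p = n \circ m$.

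The heart of the argument is to discharge the hypotheses of the Van Kampen property. Closure of $\MM$ under isomorphisms places the identities in $\MM$, and closure of $\MM$ under pushouts (explicit in Definition~\ref{def:m:adhesive:category}) places $g$ in $\MM$ as the pushout of $m$ along $p$; so all four vertical arrows lie in $\MM$. Of the two back faces (those adjacent to the top copy of the source $X$ of the pushout span), one is the trivial square with all identities and $p$'s, and the other reduces to the pullback of $m$ along itself, which collapses to $X$ with identity legs precisely because $m$ is a monomorphism. The top face is the pushout of $X \xleftarrow{\id{X}} X \xrightarrow{p} Y$, which is trivially $Y$, hence indeed a pushout.

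Invoking the $\MM$-Van Kampen property then tells us that both front faces of the cube are pullbacks. One of these collapses to the trivial pullback of $g$ against itself; the other is by design the square
\[
\begin{array}{ccc} X & \xrightarrow{p} & Y \\ m\downarrow & & \downarrow g \\ Z & \xrightarrow{n} & W, \end{array}
\]
i.e., the original pushout, which is therefore a pullback. The main obstacle I foresee is purely orientational: one must match the cube labelling in the paper's definition of an $\MM$-VK square (which faces count as ``back'' and which as ``front'') so that the two trivialized faces really are the hypothesized back faces and the pushout square reappears as a concluded front face. Once the orientation is fixed, the remaining verifications are routine.
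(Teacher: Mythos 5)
Your proof is correct, and it is essentially the argument behind the paper's citation: the paper itself offers no proof of this lemma, deferring to Lack and Soboci\'nski's Lemma~13, whose proof is exactly this degenerate cube (identity top span over the pushout, verticals $\id{}$, $\id{}$, $m$, $g$). All the hypotheses of the $\MM$-VK property are correctly discharged --- in particular you rightly use that $\MM$ contains isomorphisms and is closed under pushout to place all four verticals in $\MM$, and that $m$ is monic to make the back face over $m$ a pullback --- so the front face, which is the original square, is a pullback as claimed.
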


\newcommand{\Reg}[1]{\mathrm{rm}(#1)}
\newcommand{\RegC}{\Reg{\CC}}

Recall that a monomorphism is \emph{regular} if it is an equalizer for a parallel pair of morphisms. We write $\RegC$ to denote the class of regular monomorphisms of a category $\CC$.

\begin{definition}[{Quasitopos~\cite{wyler1991lecture, adamek2009joy,johnstone2002sketches}}]
    A category $\CC$ is a \emph{quasitopos} if it has all finite limits and colimits, it is locally cartesian closed, and it has a regular-subobject classifier.
\end{definition}

In this paper we rely on various results about quasitoposes: the notions of local cartesian closure and regular-subobject classifier will not be used directly, and therefore need not be understood. We could have equivalently defined quasitoposes in terms of regular-partial map classifiers and additional properties~\cite{adamek2009joy,wyler1991lecture}, but because these definitions appear less standardized, we decided against it.

The following results for quasitoposes will be used throughout the section. We cite original sources, but stress our indebtedness to the summary provided by Behr et al.~\cite[Corollary 1]{behr2021concurrency}.

\begin{proposition}[Properties of Quasitoposes]
\label{prop:quasitopos:properties}
    A quasitopos \CC
    \begin{enumerate}
        \item has a stable system of monics $\MM = \RegC$~\cite[Corollary 28.6(3) and Proposition 28.3(2)]{adamek2009joy};
        \item has an $\MM$-partial map classifier~\cite[Definition 19.3]{wyler1991lecture};
        \item\label{prop:decomposition:regular} satisfies ${gf \in \RegC} \implies {f \in \RegC}$ for all morphisms $f$ and $g$~\cite[Proposition 7.62(2) and Corollary 28.6(2)]{adamek2009joy};
        \item is partial map adhesive~\cite[Lemma 13]{heindel2010hereditary} and thus $\MM$-adhesive~\cite[Theorems 3.4 and 3.6]{ehrig2010categorical};
        \item is rm-quasiadhesive~\cite[Definition 1.1]{garner2012axioms}, meaning that pushouts along regular monos exist, and that these pushouts along regular monos are pullbacks and are stable under pullback; 
        \item has unique epi-$\MM$ factorizations for every morphism~\cite[Proposition 28.10]{adamek2009joy}, i.e., every $f : A \to B$ factors uniquely as $A \stackrel{e}{\epi} C \stackrel{m}{\MMmono} B$ (up to isomorphism in $C$), where $e$ is epic and $m$ regular monic; and
        \item is stable under slicing, i.e., every slice category $\CC/X$ (for $X \in \obj{\CC}$) is a quasitopos~\cite[Theorem 19.4]{wyler1991lecture}.
    \end{enumerate}
    \qed
\end{proposition}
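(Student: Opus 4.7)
The plan is to verify each of the seven items by assembling results from the literature, since every clause is essentially a dictionary-style fact about quasitoposes. Nothing here is newly proved; the work consists in locating the right references and, in a couple of places, chaining two of them together. So I would organize the proof as a numbered list matching the statement, and in each entry either quote a single reference or indicate which two facts compose.

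For item~(1), that $\RegC$ forms a stable system of monics amounts to: (i) regular monos include all isomorphisms and are closed under composition (\cite[Proposition 28.3(2)]{adamek2009joy}), and (ii) in a quasitopos they are pullback-stable (\cite[Corollary 28.6(3)]{adamek2009joy}). Item~(2) is immediate from \cite[Definition 19.3]{wyler1991lecture}, which guarantees a regular-partial-map classifier; this is an $\MM$-partial map classifier with $\MM = \RegC$. Item~(3) is obtained by combining \cite[Proposition 7.62(2)]{adamek2009joy} with \cite[Corollary 28.6(2)]{adamek2009joy}. Item~(6) is \cite[Proposition 28.10]{adamek2009joy}, and item~(7) is \cite[Theorem 19.4]{wyler1991lecture}.

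Item~(4) is the one entry that requires two citations in sequence: first \cite[Lemma 13]{heindel2010hereditary} to obtain partial-map adhesiveness, then \cite[Theorems 3.4 and 3.6]{ehrig2010categorical} to extract $\MM$-adhesiveness in the sense of Definition~\ref{def:m:adhesive:category}. Item~(5) is the defining list of properties of an rm-quasiadhesive category, \cite[Definition 1.1]{garner2012axioms}; the conclusion follows by noting that a quasitopos satisfies those axioms (which can be read off from items (1) and (4) together with the fact that pushouts along regular monos exist in any finitely cocomplete category having regular monos closed under pushout).

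The only point where any genuine care is needed is item~(4), because different authors axiomatize adhesivity slightly differently, and one has to confirm that the specialization $\MM = \RegC$ gives the formulation used in Definition~\ref{def:m:adhesive:category} (in particular that the $\MM$-VK condition holds along \emph{regular} monos, not merely some larger class). Every other item is a direct quotation, so the proposition is essentially a collation of results, and the short proof presented as a list with the indicated citations is sufficient.
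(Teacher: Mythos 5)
Your proposal matches the paper exactly: the proposition is proved purely by collating the cited literature results (the paper attaches the \qed directly to the statement and offers no further argument beyond the inline citations), and your itemized chaining of references — including the two-step composition for item (4) — is precisely what those citations are doing. No gap; nothing further is needed.
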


We will also need the following basic result, valid in any category.

\begin{proposition}[{\cite[Corollary 7.63 and Proposition 7.66]{adamek2009joy})}]
\label{prop:epi:regular:mono:is:iso}
    An epimorphism that is a regular monomorphism is an isomorphism. \qed
\end{proposition}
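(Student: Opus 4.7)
The plan is to exploit the equalizer characterization of $f$ together with its epicity to reduce to an equalizer of a pair that is actually equal, and then use the universal property to extract an inverse.

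First, I would unpack the regular mono hypothesis: $f : A \to B$ is the equalizer of some parallel pair $g, h : B \rightrightarrows C$. By definition of equalizer, $g \circ f = h \circ f$. Since $f$ is also assumed epic, this equation lets me cancel $f$ on the right, yielding $g = h$.

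Next, I would observe that the equalizer of a parallel pair $(g,g)$ is the identity $\id{B} : B \to B$ (up to unique isomorphism): for every $x : X \to B$ the equation $g \circ x = g \circ x$ holds trivially, so $x$ factors as $x = \id{B} \circ x$, and this factorization is unique. Hence $\id{B}$ is an equalizer for $(g,h) = (g,g)$. Since $f$ is also an equalizer for the same parallel pair, and equalizers are unique up to a (unique) isomorphism, there exists an isomorphism $i : B \to A$ such that $f \circ i = \id{B}$. Composing with $f$ gives $f \circ i \circ f = f$, and canceling the epi $f$ yields $i \circ f = \id{A}$. Thus $i$ is a two-sided inverse of $f$, so $f$ is an isomorphism.

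The argument is entirely formal and presents no real obstacle; the only subtlety worth stating cleanly is the uniqueness-up-to-iso of equalizers, which is a standard consequence of the universal property and needs no further ingredients. No quasitopos hypotheses are required — the statement holds in any category, which is why the proposition is phrased without reference to $\CC$.
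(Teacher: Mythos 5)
Your proof is correct, and it is the standard argument; the paper itself gives no proof of this proposition but simply cites Ad\'amek--Herrlich--Strecker, so there is nothing to diverge from. One small slip in the write-up: to pass from $f \circ i \circ f = f = f \circ \id{A}$ to $i \circ f = \id{A}$ you cancel $f$ on the \emph{left}, which uses that $f$ is monic (which it is, being a regular mono), not that it is epic. In fact that step is redundant anyway: since the comparison map $i$ between the two equalizers is already an isomorphism with $f \circ i = \id{B}$, you get $f = i^{-1}$ immediately. You are also right that no quasitopos hypotheses are needed; the statement is purely formal and holds in any category.
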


Many categories of interest are toposes. We close this subsection by giving some relevant examples of non-topos quasitoposes.

\begin{example}[Simple Graphs]
    The category of simple graphs \SimpleGraph has pairs of sets $G = (V,E)$ with $E \subseteq V \times V$ as objects, and the usual graph homomorphisms $\psi$ as arrows.  \SimpleGraph is not a topos, but it is a quasitopos (see, e.g., \cite[Section 2.1]{behr2021concurrency}). The regular monos of \SimpleGraph are injective homomorphisms that reflect edges (i.e., for regular monos $\psi : G \MMmono H$, $(\psi(v), \psi(w)) \in E_H$ implies $(v,w) \in E_G$). A modeling example using simple graphs can be found in~\cite[Section 3]{corradini2006sesqui}. The regular-mono partial map classifier $(T, \eta)$ for $\SimpleGraph$ sends simple graphs $G = (V,E)$ to $T(G) = (V \uplus \{ \star \}, 
    E \cup (V \times \{ \star \}) \cup (\{ \star \} \times V) \cup \{ (\star, \star) \})$, with $\eta_G : G \MMmono T(G)$ the inclusion~\cite[Exercise 28.2(3)]{adamek2009joy}.
\end{example}

\begin{definition}[Complete Lattice]
\label{definition:complete:lattice}
    A \emph{complete lattice} $(\labels, \leq)$ is a poset such that all subsets $S$ of $\labels$ have a supremum (join) $\bigvee S$ and an infimum (meet) $\bigwedge S$.
    Any complete lattice has a global maximum $\top$ and global minimum $\bot$, respectively.
    
    A complete lattice $\labels$ is \emph{infinitely distributive} if
    \[
        x \wedge (\bigvee_{y \in S} y) \quad = \quad \bigvee_{y \in S} (x \wedge y)
    \]
    holds for all $x \in \labels$ and $S \subseteq \labels$.
\end{definition}

Observe that pullbacks and pushouts in a complete lattice, considered as a category, correspond to meets and joins, respectively.

\begin{example}[Infinitely Distributive Complete Lattice]
\label{example:inf:distr:complete:lattice}
    A complete lattice $\labels$ considered as a category is a quasitopos iff $\labels$ is infinitely distributive~\cite[Exercise, 28D.(b)]{adamek2009joy}. Such complete lattices are exactly the complete Heyting algebras.
\end{example}

The regular monos in a complete Heyting algebra $\labels$ are exactly the identities. So the regular mono-partial map classifier is not very interesting. However, some toposes can be endowed with a complete Heyting algebra as in the following definition, which gives rise to more interesting classifiers.

\begin{example}[{$\labels$-Fuzzy Set~\cite[Chapter 8]{wyler1991lecture},\cite{goguen1967fuzzy}}]
\label{example:L:fuzzy:set}
     An \emph{$\labels$-fuzzy set} $(A, \alpha)$ consists of a set $A$ and a membership function $\alpha : A \to \labels$, where $\labels$ is a complete lattice. An $\labels$-fuzzy set morphism from $(A, \alpha)$ to $(B, \beta)$ is a function $\phi : A \to B$ such that $\alpha(x) \leq \beta \circ \phi(x)$ for all $x \in A$. $\labels$-fuzzy sets have been well studied, in particular in the context of fuzzy logics.
     
     The regular monomorphisms in the category of $\labels$-fuzzy sets are the $\labels$-fuzzy set morphisms $\phi$ that preserve membership (i.e., $\alpha(x) = \beta \circ \phi(x)$).
     The category of $\labels$-fuzzy sets is known to be a quasitopos if $\labels$ is a complete Heyting algebra (see Stout~\cite[Corollary 8]{stout1992thelogic} and Goguen~\cite[Proposition 4]{goguen1967fuzzy}).
     The regular mono-partial map classifier $(T, \eta)$ then sends sets $S = (A, \alpha)$ to $T(S) = (A \uplus \{ \star \}, \alpha \cup \{(\star \mapsto \top) \})$, with $\eta_S : S \MMmono T(S)$ the inclusion. An example of how this can be used to redefine the membership of an element using \pbpostrong{} is given by   
     \begin{center}
        \begin{tikzcd}[column sep=20mm]
        \{a^x \} \arrow[d, "m" description, hook] \arrow[dd, "t_L = \eta_{\{x\}}" description, hook, out=180,in=180,looseness=1.5] & \{a^\bot \} \arrow[l, "l" description] \arrow[d, "u" description, hook] \arrow[r, "r" description] \arrow[ld, "\mathrm{PB}", phantom] \arrow[rd, "\mathrm{PO}", phantom] & \{ a^u \} \arrow[d, "w" description, hook] \\
        {\{a^x, b^y, c^z \}} \arrow[d, "\alpha" description]                                                        & {\{ a^\bot, b^y, c^z \}} \arrow[l, "g_L" description] \arrow[d, "u'" description] \arrow[r, "g_R" description] \arrow[ld, "\mathrm{PB}", phantom]                        & {\{ a^u, b^y, c^z\}}                       \\
        {\{a^x, \star^\top\}}                                                                                          & {\{a^\bot, \star^\top \}} \arrow[l, "l'" description]                                                                  &                                           
        \end{tikzcd}
     \end{center}
     where $a,b,c,\star$ are set elements, and superscripts denote membership values in $\labels$. Morphisms $\alpha$ and $u'$ map $b$ and $c$ onto $\star$; all other elements are mapped by way of inclusion. The rule can be seen to change the fuzziness of an element with fuzziness $x$ to $u$, in any context. The context is not changed.
\end{example}

Rewriting fuzzy sets is only slightly more interesting than rewriting sets. But in Section~\ref{sec:category:graph:lattice}, we define the category \Graphleq, which is essentially the category of ``fuzzy graphs'' (Example~\ref{example:L:fuzzy:set}), and argue that it is a very useful graph category for modeling and relabeling. We have proven very recently that this category is a quasitopos if $\labels$ is a complete Heyting algebra~\cite{rosset2023fuzzy}.

\subsection{Determinism}
\label{sec:determinism}

For this subsection, we need the following definition.

\begin{definition}
    A \pbpostrong rule $\rho$ is \emph{deterministic in a category $\CC$} if
    \[
        {\pbpostrongstep{G_L}{G_R}{\rho}{m}{\alpha} \; \text{  and  } \; \pbpostrongstep{G_L}{G_R'}{\rho}{m}{\beta}} \qquad \Longrightarrow \qquad {G_R \iso G_R'}
    \]
    for any $G_L, G_R, G_R' \in \obj{\CC}$, match morphism $m : L \to G_L$ and adherence morphisms $\alpha, \beta : G_L \to L'$.
\end{definition}

By uniqueness of (co)limits, it is easy to see that in any category $\CC$, \pbpostrong rewriting is deterministic, if, for any match morphism $m : L \to G_L$, there exists at most one adherence morphism $\alpha : G_L \to L'$ that establishes a strong match square. In the setting where $\CC$ is a quasitopos, we define the following concept, which we can use to prove a more useful sufficient condition on the type morphism $t_L$.

\begin{definition}[Restricted Classifier]
    Let $\CC$ be a quasitopos.
    A regular mono $t_L : L \MMmono L'$ is a \emph{restricted classifier} if 
    \begin{center}
        \begin{tikzcd}
        L \arrow[rr, "\eta_L" description, hook, bend left] \arrow[r, "t_L" description, hook] & L' \arrow[r, "s" description, tail] & T(L)
        \end{tikzcd}
    \end{center}
    commutes for some mono $s : L' \mono T(L)$.
\end{definition}

\begin{example}
    Recall $G$ and $T(G)$ from Example~\ref{example:partial:map:classifier}.  Let $G'$ be a graph resulting from deleting any number of dotted edges (and possibly $\star$) from $T(G)$, e.g.,
    \begin{center}
        $G' \; = \hspace{-3mm} $ 
        \begin{tikzcd}
        v \arrow[loop, distance=2em, in=215, out=145] \arrow[rr] \arrow[rd, dotted, bend right] \arrow[dotted, loop, distance=4em, in=240, out=140, looseness=3]  &                                                                                                & w \arrow[ll, dotted, bend left] \\
                                                                                                                                                                                  & \star \arrow[dotted, loop, distance=2em, in=305, out=235] \arrow[lu, dotted] &                                                                                                                 
        \end{tikzcd}.
    \end{center}
    The inclusion $G \hookrightarrow G'$ is a restricted classifier. The $t_L$ of Example~\ref{ex:simple:rewrite:rule} is also a restricted classifier.
    
    Recall the regular mono-partial map classifier $(T, \eta)$ for $\labels$-fuzzy sets, with $\labels$ a complete Heyting algebra (Example~\ref{example:L:fuzzy:set}), which sends fuzzy sets $S = (A, \alpha)$ to $T(S) = (A \uplus \{ \star \}, \alpha \cup \{(\star \mapsto \top) \})$, with $\eta_S : S \MMmono T(S)$ the inclusion. The inclusion $(A, \alpha) \MMmono (A \uplus \{ \star \}, \alpha \cup \{ \star \mapsto x \})$ for $x < \top$ is a restricted classifier, and effectively sets the upper bound for context elements to $x$ instead of $\top$. Upper bounds for context elements can similarly be changed for fuzzy graphs (Section~\ref{sec:category:graph:lattice}).
\end{example}

\begin{proposition}[{\cite[Lemma 3.2]{corradini2020algebraic}}]
\label{prop:comm:square:mono:is:pb}
    Any commuting square of the form
    \begin{center}
        \begin{tikzcd}
        A \arrow[d, "g" description] \arrow[r, equals] & A \arrow[d, "h" description] \\
        B \arrow[r, "f" description, tail]                          & C                           
        \end{tikzcd}
    \end{center}
    is a pullback. \qed
\end{proposition}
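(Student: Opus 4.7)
The plan is to verify the universal property of the pullback directly, which is essentially immediate once we exploit the fact that the top arrow is the identity and that $f$ is monic.

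First I would set up the test cone: suppose we are given an object $X$ together with morphisms $x_1 : X \to A$ and $x_2 : X \to B$ satisfying $h \circ x_1 = f \circ x_2$. I would propose $u := x_1 : X \to A$ as the mediating morphism. The commutation $\id{A} \circ u = x_1$ is trivial. For the other triangle, I would compute
\[
    f \circ g \circ u \;=\; f \circ g \circ x_1 \;=\; h \circ x_1 \;=\; f \circ x_2,
\]
using the commutativity of the given square ($f \circ g = h$) for the second equality and the cone condition for the third. Since $f$ is monic (indicated by the $\tail$ arrow), I can cancel it to conclude $g \circ u = x_2$.

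For uniqueness, any mediator $u'$ must in particular satisfy $\id{A} \circ u' = x_1$, which forces $u' = x_1 = u$. I expect no obstacle here: the proof is essentially a one-line observation that the universal property collapses because the opposite-side arrow is an identity, and monicity of $f$ takes care of the remaining triangle. This matches the intuition that monos are exactly the maps along which the ``diagonal'' square with an identity leg is a pullback.
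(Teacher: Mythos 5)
Your proof is correct and complete: the mediating morphism is forced to be $x_1$ by the identity leg, and monicity of $f$ yields the second triangle, exactly as it should. The paper itself offers no proof, simply citing Corradini et al.\ for this fact, so there is nothing to compare against; your direct verification of the universal property is the standard argument.
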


\begin{proposition}
    Let $\CC$ be a quasitopos. For any regular mono $t_L$, $t_L$ is a restricted classifier iff $\parclassid{t_L}$ (Definition~\ref{def:partial:map:classifier}) is monic.
\end{proposition}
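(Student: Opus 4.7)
The plan is to prove both directions more or less directly from the definition of the $\MM$-partial map classifier, using Proposition~\ref{prop:comm:square:mono:is:pb} to recognize a certain commuting square as a pullback.

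For the forward direction, I would assume $t_L$ is a restricted classifier, so there exists a mono $s : L' \mono T(L)$ with $s \circ t_L = \eta_L$. Consider the commuting square
\begin{center}
\begin{tikzcd}
L \arrow[d, "t_L"', hook] \arrow[r, equals] & L \arrow[d, "\eta_L", hook] \\
L' \arrow[r, "s"', tail] & T(L)
\end{tikzcd}
\end{center}
Since $s$ is monic, Proposition~\ref{prop:comm:square:mono:is:pb} tells us this square is a pullback. But $\parclassid{t_L} = \parclass{t_L}{\id{L}}$ is the \emph{unique} morphism making this square a pullback (by Definition~\ref{def:partial:map:classifier}), so $s = \parclassid{t_L}$, whence $\parclassid{t_L}$ is monic.

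For the reverse direction, assume $\parclassid{t_L}$ is monic. By the definition of $\parclassid{t_L}$ applied to the partial map $L' \stackrel{t_L}{\leftMMmono} L \stackrel{\id{L}}{\to} L$, we have $\parclassid{t_L} \circ t_L = \eta_L$. Taking $s := \parclassid{t_L}$ then witnesses that $t_L$ is a restricted classifier.

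There is no real obstacle here: both directions reduce to observations about the universal property of the classifier, with Proposition~\ref{prop:comm:square:mono:is:pb} doing the heavy lifting in the forward direction by turning the commuting-square condition into a pullback condition that fits the uniqueness clause of Definition~\ref{def:partial:map:classifier}. The only subtle point is remembering that $\parclassid{t_L}$'s uniqueness is as a pullback-inducing morphism, not merely as a morphism making the square commute; this is exactly where monicity of $s$ (and hence Proposition~\ref{prop:comm:square:mono:is:pb}) is required.
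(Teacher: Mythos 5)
Your proof is correct and follows essentially the same route as the paper's: the forward direction uses Proposition~\ref{prop:comm:square:mono:is:pb} to turn the commuting square into a pullback and then invokes the uniqueness clause of Definition~\ref{def:partial:map:classifier} to identify $s$ with $\parclassid{t_L}$, and the reverse direction is immediate. The paper's proof is just a terser statement of exactly this argument.
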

\begin{proof}
    For direction $\Longrightarrow$, use Proposition~\ref{prop:comm:square:mono:is:pb} and the uniqueness property of partial map classifiers to conclude that the obtained mono $s$ is $\parclassid{t_L}$. Direction $\Longleftarrow$ is immediate. \qed
\end{proof}

\begin{lemma}
\label{lemma:unique:alpha}
    Let $\CC$ be a quasitopos, and let $L \stackrel{t_L}{\MMmono} L'$ and $L \stackrel{m}{\MMmono} G_L$ be given. If $t_L$ is a restricted classifier, then any $\alpha : G_L \to L'$ making
    \begin{center}
        \begin{tikzcd}
        L \arrow[d, "m" description, hook] \arrow[r, equals] & L \arrow[d, "t_L" description, hook] \\
        G_L \arrow[r, "\alpha" description]                               & L'                                  
        \end{tikzcd}
    \end{center}
    a pullback square is unique.
\end{lemma}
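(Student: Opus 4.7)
The plan is to reduce uniqueness of $\alpha$ to the uniqueness property of the partial map classifier $(T, \eta)$. Since $t_L$ is a restricted classifier, fix a mono $s : L' \mono T(L)$ with $s \circ t_L = \eta_L$. The strategy is to show that post-composing $\alpha$ with $s$ necessarily yields the classifier morphism $\parclassid{m}$, which is unique by the universal property of $(T,\eta)$; monicity of $s$ then lifts this uniqueness back to $\alpha$.

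First, given any $\alpha : G_L \to L'$ forming the pullback in the statement, I would paste this pullback on the right with the square
\[
\begin{tikzcd}[column sep=12mm]
L \arrow[d, "t_L" description, hook] \arrow[r, equals] & L \arrow[d, "\eta_L" description, hook] \\
L' \arrow[r, "s" description, tail]                    & T(L)
\end{tikzcd}
\]
which commutes by the restricted classifier assumption, and which is a pullback by Proposition~\ref{prop:comm:square:mono:is:pb} since $s$ is monic. By the pullback lemma (Lemma~\ref{lemma:pullback:lemma}), the resulting outer rectangle
\[
\begin{tikzcd}[column sep=15mm]
L \arrow[d, "m" description, hook] \arrow[r, equals] & L \arrow[d, "\eta_L" description, hook] \\
G_L \arrow[r, "s \circ \alpha" description]          & T(L)
\end{tikzcd}
\]
is a pullback.

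Now by the universal property of the $\MM$-partial map classifier (Definition~\ref{def:partial:map:classifier}) applied to the partial map $G_L \stackrel{m}{\leftMMmono} L \stackrel{\id{L}}{\to} L$, there is a \emph{unique} morphism $G_L \to T(L)$ making such a square a pullback, namely $\parclassid{m}$. Hence $s \circ \alpha = \parclassid{m}$. Applying the same argument to any other adherence $\beta : G_L \to L'$ satisfying the pullback property yields $s \circ \beta = \parclassid{m}$, so $s \circ \alpha = s \circ \beta$, and monicity of $s$ gives $\alpha = \beta$.

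I do not anticipate a genuine obstacle: the argument is a direct pasting-of-pullbacks computation together with the defining universal property of $(T,\eta)$. The only thing to be careful about is verifying that the auxiliary square above is a pullback, which is immediate from Proposition~\ref{prop:comm:square:mono:is:pb} once one notes that $s$ is mono (guaranteed by the definition of restricted classifier).
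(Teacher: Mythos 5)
Your proof is correct and follows essentially the same route as the paper: paste the strong-match pullback with the square embedding $L'$ into $T(L)$, invoke the uniqueness clause of the $\MM$-partial map classifier on the resulting outer pullback, and cancel the mono. The only cosmetic difference is that you work directly with the mono $s$ from the restricted-classifier definition and justify its square via Proposition~\ref{prop:comm:square:mono:is:pb}, whereas the paper uses $\parclassid{t_L}$ (which its preceding proposition identifies with $s$) and gets the same square as a pullback from the classifier's defining property.
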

\begin{proof}
    Let $\alpha$ and $\beta$ each make the left square of
    \begin{center}
        \begin{tikzcd}[column sep=35]
        L \arrow[d, "m" description, hook] \arrow[r, equals]    & L \arrow[d, "t_L" description, hook] \arrow[r, equals] \arrow[rd, "\mathrm{PB}", phantom] & L \arrow[d, "\eta_L" description, hook] \\
        G_L \arrow[r, "\alpha"', shift right] \arrow[r, "\beta", shift left] & L' \arrow[r, "\parclassid{t_L}" description, tail]                                                     & T(L)                                   
        \end{tikzcd}
    \end{center}
    a pullback square. Because the right square is a pullback square, both $\parclassid{t_L}\alpha$ and $\parclassid{t_L}\beta$ make the outer square a pullback square, using the pullback lemma. By the uniqueness property of the partial map classifier, $\parclassid{t_L}\alpha = \parclassid{t_L}\beta$. By monicity of $\parclassid{t_L}$, $\alpha = \beta$. \qed
\end{proof}

\begin{definition}[Classifying \pbpostrong Rule]
\label{def:classifying:rule}
    A \pbpostrong rule $\rho$ is said to be \emph{classifying} if $L \stackrel{t_L}{\MMmono} L'$ is a restricted classifier.
\end{definition}

We can now state the following sufficient condition.

\begin{theorem}[Determinism]
\label{thm:determinism}
    Let $\CC$ be a quasitopos, and let $\rho$ be a classifying \pbpostrong rule.
    If both
    ${\pbpostrongstep{G_L}{G_R}{\rho}{m}{\alpha}}$ and ${\pbpostrongstep{G_L}{G_R'}{\rho}{m}{\beta}}$, then $\alpha = \beta$ and $G_R \iso G_R'$.
\end{theorem}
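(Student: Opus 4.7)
The plan is to derive $\alpha = \beta$ first, and then transfer the resulting uniqueness through the pullback and pushout stages of the rewrite diagram.

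For the first and key step, I would invoke Lemma~\ref{lemma:unique:alpha}. Its hypotheses are that $t_L$ is a regular mono and a restricted classifier, and that $m : L \MMmono G_L$ is also a regular mono. The first two are immediate from $\rho$ being classifying. The third is not given directly by the definition of a strong match, but it follows from the fact that $m$ is the pullback of the regular mono $t_L$ along $\alpha$ (and along $\beta$) in the strong match square, together with pullback-stability of $\RegC$ in any quasitopos (Proposition~\ref{prop:quasitopos:properties}, item~1). Since both $\alpha$ and $\beta$ make the strong match square a pullback with the same $m$ and $t_L$, the lemma forces $\alpha = \beta$.

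For the second step, with $\alpha = \beta$ established, the bottom-left squares of the two rewrite diagrams are pullbacks of the same cospan $G_L \xrightarrow{\alpha} L' \xleftarrow{l'} K'$. Uniqueness of pullbacks up to unique isomorphism then yields an isomorphism $\phi$ between the two instances of $G_K$ that commutes with both $g_L$ and $u'$. By Lemma~\ref{lemma:on:u}, the morphism $u : K \to G_K$ is uniquely characterised by $u' \circ u = t_K$, so $\phi$ identifies the two versions of $u$ appearing in the two rewrite diagrams. Consequently, the top-right pushout spans $R \xleftarrow{r} K \xrightarrow{u} G_K$ of the two diagrams are isomorphic via $\phi$, and uniqueness of pushouts gives $G_R \iso G_R'$.

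The only non-routine step is ensuring that the hypotheses of Lemma~\ref{lemma:unique:alpha} are met, specifically that $m$ is in $\RegC$ rather than merely monic. This is the main obstacle, and it is handled entirely by pullback-stability of regular monos in a quasitopos. Everything beyond that is standard universality of pullbacks and pushouts.
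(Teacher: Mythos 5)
Your proposal is correct and follows essentially the same route as the paper, whose proof is the one-line ``From Lemma~\ref{lemma:unique:alpha} and general pullback and pushout properties.'' You merely make explicit a detail the paper leaves implicit, namely that $m$ is regular monic by pullback stability of $\RegC$ so that Lemma~\ref{lemma:unique:alpha} applies; this is the right justification.
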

\begin{proof}
    From Lemma~\ref{lemma:unique:alpha} and general pullback and pushout properties.~\qed
\end{proof}

Note that in categories with strict initial objects (such as \Set and \Graph), another sufficient condition for determinism is having $K'$ initial. Then the result of any rewrite step is $R$. This observation implies that necessary conditions for determinism cannot be phrased merely in terms of type graphs and matching.

\begin{remark}
    For some categories, we conjecture that a sufficient condition for determinism of a rule $\rho$ (with type morphism $t_L : L \mono L'$) is the case where there exists a commuting diagram
    \begin{center}
        \begin{tikzcd}
        L \arrow[r, "x" description, hook] \arrow[rr, "\eta_L" description, hook, bend left] \arrow[rd, "t_L" description, tail] & L'' \arrow[r, "\parclassid{x}" description, tail] \arrow[d, "f" description, two heads, tail] & T(L) \\
                                                                                                                                 & L'                                                                                            &     
        \end{tikzcd}
    \end{center}
    for some restricted classifier $x : L \MMmono L''$, and monic and epic morphism $f : L'' \to L'$. 
    For instance, in category {\Graphleq} (Section~\ref{sec:category:graph:lattice}), this property corresponds to  relaxing the labels of a restricted classifier $x : L \MMmono L''$.
    We are interested in knowing in which classes of categories this property holds.
\end{remark}

\subsection{\pbpostrong Models PBPO}
\label{sec:pbpostrong:models:pbpo}

In the conference version of the present paper, we proved that in any topos, any PBPO rule can be modeled by a class of \pbpostrong rules~{\cite[Corollary 1]{overbeek2021pbpo}}. In this subsection, we generalize this result to the setting of quasitoposes (any topos is a quasitopos). Additionally, we streamline the original proof significantly.

\newcommand{\frgt}[1]{U_{#1}}

In order to prove our result, we first need to establish a technical result about slice categories $\CC/X$. We let $(A, f: A \to X)$ denote the objects of $\CC/X$; and $\frgt{X}$ the forgetful functor $\frgt{X} : \CC/X \to \CC$.

\begin{proposition}[{\cite[(17.3, Remarks)]{wyler1991lecture}}]
\label{prop:wyler:forgetful:functor}
    Let $f$ be a morphism in $\CC/X$. We have
    \begin{enumerate}
        \item\label{prop:wyler:forgetful:functor:mono} $f$ is monic $\iff$ $\frgt{X} f$ is monic;
        \item\label{prop:wyler:forgetful:functor:epi} if $\CC$ has finite products, then: $f$ is epic $\iff$  $\frgt{X} f$ is epic; and
        \item if $f$ is a regular mono, then $\frgt{X} f$ is a regular mono.
    \end{enumerate}
    \qed
\end{proposition}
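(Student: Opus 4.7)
The plan is to exploit the elementary fact that a morphism $f : (A,a) \to (B,b)$ in $\CC/X$ satisfies $\frgt{X}f = f$ as a morphism of $\CC$, with the extra datum $b \circ f = a$. So at the level of underlying morphisms, equations in $\CC/X$ and equations in $\CC$ are literally the same; the only subtlety is always \emph{lifting} a test morphism from $\CC$ to $\CC/X$ by equipping its domain with an appropriate structure map to $X$.

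For part~\ref{prop:wyler:forgetful:functor:mono}, the ``$\Leftarrow$'' direction is immediate from faithfulness of $\frgt{X}$: if $f \circ g = f \circ h$ in $\CC/X$, then the same equality holds in $\CC$. For ``$\Rightarrow$'', given a parallel pair $g, h : Y \to A$ in $\CC$ with $\frgt{X}f \circ g = \frgt{X}f \circ h$, I endow $Y$ with the structure map $a \circ g : Y \to X$. Post-composing with $b$ on both sides of the hypothesis yields $a \circ g = a \circ h$, so both $g$ and $h$ lift to morphisms $(Y, a\circ g) \to (A, a)$, and monicity of $f$ in $\CC/X$ gives $g = h$.

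For part~\ref{prop:wyler:forgetful:functor:epi}, one direction is again a trivial faithfulness argument. For the reverse, the point at which finite products enter is precisely the lifting step: given $g, h : B \to Y$ in $\CC$ with $g \circ \frgt{X}f = h \circ \frgt{X}f$, I cannot send $Y$ to $X$ canonically, so I replace $Y$ by the product $(Y \times X, \pi_X)$ in $\CC/X$, and use the pair morphisms $\langle g, b \rangle, \langle h, b \rangle : (B,b) \to (Y \times X, \pi_X)$. These agree after pre-composition with $f$ (using $b \circ f = a$), so epicness of $f$ in $\CC/X$ forces $\langle g, b \rangle = \langle h, b \rangle$, and projecting on the first factor gives $g = h$. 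Part~\ref{prop:wyler:forgetful:functor:epi} is the main obstacle conceptually, as it is the only place where the hypothesis on $\CC$ is needed, and one must resist the temptation to use the structure map $b : B \to X$ alone to try to lift the codomain.

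For part~3, suppose $f : (A,a) \MMmono (B,b)$ is the equalizer of a parallel pair $g, h : (B,b) \rightrightarrows (C,c)$ in $\CC/X$. I claim $\frgt{X}f$ is the equalizer of $\frgt{X}g, \frgt{X}h$ in $\CC$. The equalizing condition is inherited by faithfulness. For the universal property, given $k : Y \to B$ in $\CC$ with $\frgt{X}g \circ k = \frgt{X}h \circ k$, I equip $Y$ with the structure map $b \circ k$, making $k : (Y, b \circ k) \to (B, b)$ a morphism in $\CC/X$ that still equalizes $g$ and $h$ there. The universal property of the equalizer in $\CC/X$ yields a unique factorization $u : (Y, b \circ k) \to (A, a)$ with $f \circ u = k$; applying $\frgt{X}$ gives existence in $\CC$. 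For uniqueness in $\CC$, any $u' : Y \to A$ with $\frgt{X}f \circ u' = k$ automatically satisfies $a \circ u' = b \circ f \circ u' = b \circ k$, so $u'$ lifts to $\CC/X$ as well and coincides with $u$ by uniqueness there. \qed
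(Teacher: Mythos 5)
The paper does not prove this proposition at all: it is quoted from Wyler \cite[(17.3, Remarks)]{wyler1991lecture} and stated with an immediate \qed, so there is nothing internal to compare your argument against. Your proof is correct and is the standard one: the key observations are (i) that a parallel test pair in $\CC$ can be lifted to $\CC/X$ for the monic direction because post-composing the hypothesis with $b$ forces $a \circ g = a \circ h$, (ii) that for the epic direction the codomain $Y$ carries no map to $X$, which is exactly where finite products enter via $(Y \times X, \pi_X)$ and the pair morphisms $\langle g,b\rangle, \langle h,b\rangle$, and (iii) that the forgetful functor creates equalizers (a connected limit), which gives part~3. All three lifting steps are carried out correctly, including the uniqueness check in part~3 that any factorization in $\CC$ automatically lives over $X$.
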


Wyler additionally remarks the following for categories $\CC$ with finite products: if a morphism $\frgt{X} f : A \to X$ is a regular mono in $\CC$, then the morphism $f : (A, f) \to (X, \id{X})$ is a regular mono in $\CC/X$ \cite[(17.3, Remarks)]{wyler1991lecture}.
We prove that $\frgt{X}$ in fact \emph{reflects} all regular monomorphisms, i.e., that if $\frgt{X} f$ is a regular mono (with any codomain) in $\CC$, then so is $f$ in $\CC/X$ (Corollary~\ref{corr:forgetful:reflect:regular:monos} below).

Let $A \stackrel{f}{\leftarrow} C \stackrel{g}{\to} B$ be a span in a category with finite products.
Recall that the \emph{product map} $\langle f,g \rangle : C \to A \times B$ is the unique morphism induced by the universal property of product $A \times B$.

\begin{proposition}[{\cite[(Exercises 2--3, Chapter 3)]{goldblatt1979topoi}}]
\label{prop:product:map:equality}
    Product maps satisfy
    \begin{enumerate}
        \item\label{prop:product:map:equality:decompose} ${\langle f, g \rangle = \langle k, h \rangle} \Longleftrightarrow {f = k \text{ and } g = h}$; and
        \item\label{prop:product:map:equality:distribute} $\langle fh, gh \rangle = \langle f, g \rangle h$.
    \end{enumerate} \qed
\end{proposition}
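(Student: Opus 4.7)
The plan is to prove both items by appealing to the universal property of the product $A \times B$, namely that for any pair of morphisms $f : C \to A$ and $g : C \to B$, the product map $\langle f, g \rangle : C \to A \times B$ is the \emph{unique} morphism satisfying $\pi_1 \circ \langle f, g \rangle = f$ and $\pi_2 \circ \langle f, g \rangle = g$, where $\pi_1, \pi_2$ are the product projections.

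For item~\ref{prop:product:map:equality:decompose}, the direction $(\Leftarrow)$ is immediate from uniqueness. For $(\Rightarrow)$, I would post-compose the assumed equality $\langle f, g \rangle = \langle k, h \rangle$ with $\pi_1$ to obtain $f = \pi_1 \circ \langle f, g \rangle = \pi_1 \circ \langle k, h \rangle = k$, and analogously with $\pi_2$ to get $g = h$.

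For item~\ref{prop:product:map:equality:distribute}, I would check that the morphism $\langle f, g \rangle \circ h : C' \to A \times B$ satisfies the defining property of $\langle fh, gh \rangle$ by computing $\pi_1 \circ (\langle f, g \rangle \circ h) = (\pi_1 \circ \langle f, g \rangle) \circ h = f \circ h$, and similarly $\pi_2 \circ (\langle f, g \rangle \circ h) = g \circ h$. By uniqueness of the morphism into $A \times B$ induced by the span $A \xleftarrow{fh} C' \xrightarrow{gh} B$, we conclude $\langle fh, gh \rangle = \langle f, g \rangle \circ h$.

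There is no real obstacle here: both statements are textbook consequences of the universal property of binary products, and the whole argument consists of elementary diagram-chasing. The proof is essentially one line per item once the projections are introduced.
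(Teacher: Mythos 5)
Your proof is correct: both items follow exactly as you describe from the universal property of the binary product, by post-composing with the projections $\pi_1,\pi_2$ and invoking uniqueness. The paper itself gives no proof at all (it cites the exercises in Goldblatt and closes with \qed), so your argument simply supplies the standard one-line-per-item verification that the cited reference asks the reader to do.
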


\begin{proposition}
\label{prop:equalizer:product:maps}
    Assume $\CC$ has finite products. An equalizer $f : A \to B$ for parallel morphisms $g,h : B \to C$ is also an equalizer for the product maps $\langle 1_B, g \rangle, \langle 1_B, h \rangle : B \to B \times C$.
\end{proposition}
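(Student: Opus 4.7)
The plan is to verify the two defining properties of an equalizer directly, in each case translating the statement about the product maps $\langle 1_B, g\rangle, \langle 1_B, h\rangle$ into the corresponding statement about $g$ and $h$ via Proposition~\ref{prop:product:map:equality}, and then invoking the assumed equalizer property of $f$.

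First, I would verify that $f$ coequalizes the two product maps, i.e., that $\langle 1_B, g \rangle \circ f = \langle 1_B, h\rangle \circ f$. By Proposition~\ref{prop:product:map:equality}(\ref{prop:product:map:equality:distribute}), the left side equals $\langle f, gf\rangle$ and the right side equals $\langle f, hf\rangle$. Since $f$ equalizes $g$ and $h$, we have $gf = hf$, so the two product maps agree when composed with $f$.

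Next, I would establish the universal property. Suppose $x : X \to B$ satisfies $\langle 1_B, g\rangle \circ x = \langle 1_B, h\rangle \circ x$. Again using Proposition~\ref{prop:product:map:equality}(\ref{prop:product:map:equality:distribute}) on each side gives $\langle x, gx\rangle = \langle x, hx\rangle$, and then Proposition~\ref{prop:product:map:equality}(\ref{prop:product:map:equality:decompose}) yields $gx = hx$ (and tautologically $x = x$). Since $f$ is an equalizer of $g$ and $h$, there exists a unique morphism $u : X \to A$ with $fu = x$.

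Finally, the uniqueness of the mediating $u$ for the new equalizer diagram transfers for free: any morphism $u' : X \to A$ with $fu' = x$ would, by the same equalizer property of $f$ as equalizer of $g,h$, have to coincide with $u$. The main (minor) obstacle is just keeping straight which direction of Proposition~\ref{prop:product:map:equality} is being used at each step; conceptually the argument reduces to the observation that a parallel pair into a product is equalized by exactly the same cone as the pair of its second projections, since the first projections already agree on the nose.
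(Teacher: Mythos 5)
Your proof is correct and follows essentially the same route as the paper's: both directions use Proposition~\ref{prop:product:map:equality} to translate the product-map equations into equations involving $g$ and $h$, and then appeal to the equalizer property of $f$. Your version is only slightly more explicit about the uniqueness of the mediating morphism, which the paper leaves implicit.
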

\begin{proof}
    First, $gf = hf \implies \langle 1_Bf, gf \rangle = \langle 1_Bf, hf \rangle \implies \langle 1_B, g \rangle f = \langle 1_B, h \rangle f$ using Proposition~\ref{prop:product:map:equality}. For universality, suppose $f' : A' \to B$ also has this equalizing property. Then $\langle 1_B, g \rangle f' = \langle 1_B, h \rangle f' \implies \langle 1_Bf', gf' \rangle = \langle 1_Bf', hf' \rangle \implies gf' = hf'$, again using Proposition~\ref{prop:product:map:equality}. We then obtain the unique $u : A' \to A$ such that $fu = f'$ from the fact that $f$ is an equalizer for $g$ and $h$. In a diagram:
    \begin{center}
        \begin{tikzcd}
        A' \arrow[d, "!u" description, dotted] \arrow[rd, "f'" description] &                                                                                                                                                                                             & C          \\
        A \arrow[r, "f" description]                                        & B \arrow[ru, "g" description, bend left=49] \arrow[rd, "{\langle 1_B, g \rangle}" description, bend right=49] \arrow[rd, "{\langle 1_B, h\rangle}" description] \arrow[ru, "h" description] &            \\
                                                                            &                                                                                                                                                                                             & B \times C
        \end{tikzcd}
    \end{center}
    \qed
\end{proof}

\begin{proposition}
\label{prop:preserving:equalizers}
    Let $f : (A, a) \to (B, b)$ and $g,h : (B, b) \to (C, c)$ be morphisms in $\CC/X$.
    If $\frgt{X} f$ is an equalizer for $\frgt{X} g, \frgt{X} h$ in $\CC$, then $f$ is an equalizer for $g,h$ in $\CC/X$.
\end{proposition}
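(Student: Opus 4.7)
The plan is to exploit the fact that the forgetful functor $\frgt{X} : \CC/X \to \CC$ is faithful and reflects commutativity of diagrams whose vertices carry compatible structure maps. Concretely, a morphism in $\CC/X$ between two objects is just a $\CC$-morphism satisfying a commutativity condition with the structure maps, so equality of two parallel morphisms in $\CC/X$ is equivalent to equality of their underlying $\CC$-morphisms, and given a $\CC$-morphism between the underlying objects, the only thing to check to lift it to $\CC/X$ is that it commutes with the structure maps.

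First I would verify that $f$ equalizes $g$ and $h$ in $\CC/X$. Since $\frgt{X}f$ equalizes $\frgt{X}g$ and $\frgt{X}h$ in $\CC$, we have $\frgt{X}(gf) = \frgt{X}(hf)$, and as $\frgt{X}$ is faithful, $gf = hf$ in $\CC/X$.

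Next I would establish the universal property. Suppose $f' : (D,d) \to (B,b)$ in $\CC/X$ satisfies $gf' = hf'$. Applying $\frgt{X}$ gives $\frgt{X}g \cdot \frgt{X}f' = \frgt{X}h \cdot \frgt{X}f'$, so by the universal property of $\frgt{X}f$ in $\CC$ there is a unique $u : D \to A$ with $\frgt{X}f \cdot u = \frgt{X}f'$. The only nontrivial step is to show that $u$ lifts to a morphism $(D,d) \to (A,a)$ in $\CC/X$, i.e., that $a \circ u = d$. Using that $f : (A,a) \to (B,b)$ and $f' : (D,d) \to (B,b)$ are morphisms in $\CC/X$, we have $a = b \circ \frgt{X}f$ and $d = b \circ \frgt{X}f'$, whence
\[
    a \circ u \; = \; b \circ \frgt{X}f \circ u \; = \; b \circ \frgt{X}f' \; = \; d.
\]
So $u$ defines a morphism $(D,d) \to (A,a)$ in $\CC/X$. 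Uniqueness of this lift in $\CC/X$ is immediate from uniqueness of $u$ in $\CC$ together with faithfulness of $\frgt{X}$.

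I do not anticipate any real obstacle: the argument is a routine application of the fact that $\frgt{X}$ is faithful and creates morphisms whose compatibility with the structure maps can be checked directly. Note that no appeal to Proposition~\ref{prop:equalizer:product:maps} is needed for this proposition itself; that auxiliary result will presumably be invoked only in the subsequent corollary (where regular monos in $\CC/X$ must be exhibited as equalizers of maps into a product).
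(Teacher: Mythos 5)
Your proof is correct and follows essentially the same route as the paper's: both establish $gf = hf$ by lifting from $\CC$, obtain the unique $u$ from the equalizer property of $\frgt{X} f$ in $\CC$, and verify that $u$ is a morphism in the slice by the computation $a \circ u = b \circ \frgt{X} f \circ u = b \circ \frgt{X} f' = d$, with uniqueness inherited from $\CC$ via faithfulness. Your closing observation is also accurate: the paper invokes Proposition~\ref{prop:equalizer:product:maps} only in the subsequent Proposition~\ref{prop:reflect:equalizer}, not here.
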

\begin{proof}
    First, $gf = hf$ in $\CC/X$ because $\frgt{X} f$ is an equalizer for $\frgt{X} g, \frgt{X} h$ in $\CC$, and composition is lifted from $\CC$. For universality, suppose $gf' = hf'$ for some $f' : (A', a') \to (B, b)$ in $\CC/X$. By definition of the slice category and its forgetful functor, $a' = b \circ \frgt{X} f'$ in $\CC$. In addition, because composition is lifted from $\CC$, $\frgt{X} g \circ \frgt{X} f' = \frgt{X} h \circ \frgt{X} f'$, and so one obtains a unique arrow $u : A' \to A$ with $\frgt{X} f \circ u = \frgt{X} f'$ using that $\frgt{X} f$ is an equalizer in $\CC$. Then $a' = b \circ \frgt{X} f' = b \circ \frgt{X} f \circ u$ which implies $a' = b \circ f \circ u = a \circ u$. Hence $u: (A', a') \to (A, a)$ is an arrow in $\CC/X$. Uniqueness in $\CC/X$ follows because any other arrow would violate the uniqueness property on the level of $\CC$.
    
    In a diagram, where solid and dashed arrows represent respectively objects and morphisms in $\CC/X$:
    \begin{center}
        \begin{tikzcd}
        A' \arrow[rrdd, "a'" description, bend right] \arrow[rrd, "f'" description, dashed, bend left] \arrow[rd, "!u" description, dotted] &                                                                  &                                                                                                                               &                               \\
                                                                                                                                            & A \arrow[r, "f" description, dashed] \arrow[rd, "a" description] & B \arrow[r, "g" description, dashed, shift left=2] \arrow[r, "h" description, dashed, shift right] \arrow[d, "b" description] & C \arrow[ld, "c" description] \\
                                                                                                                                            &                                                                  & X                                                                                                                             &                              
        \end{tikzcd}
    \end{center}
    \qed
\end{proof}

\begin{proposition}
\label{prop:reflect:equalizer}
    If $\CC$ has finite products and $\frgt{X} f : A \to B$ is an equalizer for $g,h : B \to C$ in $\CC$, then $f$ is an equalizer for $\langle 1_B, g \rangle, \langle 1_B, h \rangle : (B, b) \to (B \times C, b \circ \pi_1)$ in $\CC/X$.
\end{proposition}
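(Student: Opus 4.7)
The plan is to combine the two immediately preceding propositions in essentially the most direct way possible. First I would apply Proposition~\ref{prop:equalizer:product:maps} at the level of $\CC$: since $\frgt{X} f$ is an equalizer for $g,h : B \to C$, it is also an equalizer for the product maps $\langle 1_B, g \rangle, \langle 1_B, h \rangle : B \to B \times C$ in $\CC$. This step uses the given finite products and no additional hypotheses.

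Next I would verify that $\langle 1_B, g \rangle$ and $\langle 1_B, h \rangle$ genuinely live over $X$, i.e., that they are arrows $(B, b) \to (B \times C,\, b \circ \pi_1)$ in $\CC/X$. This is immediate from the universal property of the product: $(b \circ \pi_1) \circ \langle 1_B, g \rangle = b \circ 1_B = b$, and identically for $h$. So there is no slice compatibility issue. (Note that here we do not need any hypothesis linking $c$ and $b$ because we have deliberately equipped $B \times C$ with the slice structure $b \circ \pi_1$ rather than with one derived from $c$.)

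Finally I would invoke Proposition~\ref{prop:preserving:equalizers}, which is the ``$\frgt{X}$ reflects equalizers over the same base'' result. The hypothesis of that proposition is exactly what step one produced: $\frgt{X} f$ is an equalizer in $\CC$ for $\frgt{X}\langle 1_B, g\rangle$ and $\frgt{X}\langle 1_B, h \rangle$. Its conclusion is exactly what we want: $f$ is an equalizer for $\langle 1_B, g \rangle, \langle 1_B, h \rangle$ in $\CC/X$.

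There is no real obstacle here; the content of the proposition is entirely carried by the two previous lemmas, and the proof is a two-line citation. The one thing worth being careful about is that Proposition~\ref{prop:equalizer:product:maps} is stated for an arbitrary equalizer in a category with finite products, so I can apply it to $\frgt{X} f$ in $\CC$ without worrying about whether the product $B \times C$ used there agrees with any slice-level product.
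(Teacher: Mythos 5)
Your proposal is correct and follows essentially the same route as the paper's proof: both verify that the product maps are morphisms $(B,b) \to (B \times C, b \circ \pi_1)$ in $\CC/X$ via $b \circ \pi_1 \circ \langle 1_B, g\rangle = b$, then chain Proposition~\ref{prop:equalizer:product:maps} with Proposition~\ref{prop:preserving:equalizers}. The only difference is the order of presentation, which is immaterial.
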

\begin{proof}
    Observe that morphisms $g$ and $h$ may not give rise to morphisms in $\CC/X$. However, by hypothesis, we can construct the product $B \times C$ in $\CC$. From this we can infer the morphism $b \circ \pi_1 : B \times C \to X$ and the product maps $\langle 1_B, g \rangle, \langle 1_B, h \rangle : B \to B \times C$. Crucially, $b \circ \pi_1$ is an object in $\CC/X$ and the product maps are morphisms in $\CC/X$, because $b \circ \pi_1 \circ \langle 1_B, g \rangle  = b = b \circ \pi_1 \circ \langle 1_B, h \rangle $.
    \begin{center}
        \begin{tikzcd}
                                                                           &   &                                                                                                                                                                                                                                              & \textcolor{gray}{C}                                                  \\
        A \arrow[rr, "f" description, dashed] \arrow[rdd, "a" description] &   & B \arrow[ldd, "b" description] \arrow[rd, "{\langle 1_B, g\rangle}" description, dashed, bend right] \arrow[rd, "{\langle 1_B, h\rangle}" description, dashed, bend left] \arrow[ru, "g" description, bend left, color=gray] \arrow[ru, "h" description, color=gray] &                                                    \\
                                                                           &   &                                                                                                                                                                                                                                              & B \times C \arrow[lld, "b\circ \pi_1" description] \\
                                                                           & X &                                                                                                                                                                                                                                              &                                                   
        \end{tikzcd}
    \end{center}
    By Proposition~\ref{prop:equalizer:product:maps}, $\frgt{X} f$ is an equalizer for the product maps in $\CC$. By Proposition~\ref{prop:preserving:equalizers}, $f$ is an equalizer for the product maps in $\CC/X$.
    \qed
\end{proof}

\begin{corollary}
\label{corr:forgetful:reflect:regular:monos}
    If $\CC$ has finite products and $X \in \obj{C}$, then $\frgt{X}$ reflects regular monomorphisms.
\end{corollary}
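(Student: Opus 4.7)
The corollary follows almost immediately from Proposition~\ref{prop:reflect:equalizer}, so my plan is essentially to unpack definitions and invoke it. Suppose $f : (A, a) \to (B, b)$ is a morphism in $\CC/X$ such that $\frgt{X} f : A \to B$ is a regular monomorphism in $\CC$. By definition of regular monomorphism, there exists a parallel pair $g, h : B \to C$ in $\CC$ such that $\frgt{X} f$ is their equalizer.

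The key point is that $g$ and $h$ need not themselves lie over $X$, i.e., they may fail to give morphisms in $\CC/X$, so we cannot directly lift the equalizer diagram. This is exactly the obstacle that Proposition~\ref{prop:reflect:equalizer} was designed to overcome: using the finite product $B \times C$ in $\CC$ (which exists by hypothesis), one replaces the pair $g, h$ with the product maps $\langle 1_B, g \rangle, \langle 1_B, h \rangle : B \to B \times C$. These tuple morphisms do lift to $\CC/X$ because the codomain $B \times C$ inherits the structure map $b \circ \pi_1 : B \times C \to X$, and precomposing with either product map recovers $b$.

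Applying Proposition~\ref{prop:reflect:equalizer} directly, $f$ is an equalizer in $\CC/X$ of the parallel pair
\[
    \langle 1_B, g \rangle, \ \langle 1_B, h \rangle \ : \ (B, b) \to (B \times C, b \circ \pi_1).
\]
Hence $f$ is a regular monomorphism in $\CC/X$, which is exactly what we needed to show. No further work is required; the corollary is essentially a packaging statement for Proposition~\ref{prop:reflect:equalizer}, and the main conceptual step — showing that equalizers in $\CC$ of arbitrary parallel pairs can be witnessed in $\CC/X$ by parallel pairs landing in a slice object — has already been carried out there.
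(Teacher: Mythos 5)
Your proof is correct and follows exactly the paper's own argument: take the parallel pair $g,h$ equalized by $\frgt{X}f$ in $\CC$, invoke Proposition~\ref{prop:reflect:equalizer} to exhibit $f$ as an equalizer of the product maps $\langle 1_B, g\rangle, \langle 1_B, h\rangle$ in $\CC/X$, and conclude. The additional commentary on why the detour through $B \times C$ is needed matches the motivation already given in the proof of Proposition~\ref{prop:reflect:equalizer}.
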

\begin{proof}
    Let $\frgt{X} f : A \to B$ be a regular monomorphism in $\CC$. By definition this means $\frgt{X} f$ is an equalizer for some $g,h : B \to C$ in $\CC$. By Proposition~\ref{prop:reflect:equalizer}, $f$ is an equalizer for $\langle 1_B, g \rangle, \langle 1_B, h \rangle : (B, b) \to (B \times C, b \circ \pi_1)$ in $\CC/X$. So by definition $f$ is a regular monomorphism.
    \qed
\end{proof}

The definition below generalizes the concept of materialization by Corradini et al.~\cite[Definition 7]{corradini2019rewriting}, by replacing the class of all monics by a stable system of monics $\MM$. (For an example of a materialization and how to construct it, see~\cite[Corollary 9 and Example 10]{corradini2019rewriting}.)

\begin{definition}[$\MM$-Materialization]
    Let $\MM$ be a stable system of monics. The \emph{$\MM$-materialization} of a morphism $f :  A \to B$ is a terminal factorization of the form $A \stackrel{f'}{\MMmono} \materialization{f} \stackrel{f''}{\to} B$ for some object $\materialization{f}$. That is, for any other factorization of the form $A \stackrel{m}{\MMmono} C \stackrel{\alpha}{\to} B$, there exists a unique morphism $\beta : C \to \materialization{f}$ that makes the square of 
    \begin{center}
        \begin{tikzcd}
        A \arrow[rr, "f" description, bend left] \arrow[r, "m" description, hook] \arrow[d, equals] & C \arrow[r, "\alpha" description] \arrow[d, "!\beta" description, dotted] & B \\
        A \arrow[r, "f'" description, hook] \arrow[ru, phantom, "\mathrm{PB}" description]                                                                    & \materialization{f} \arrow[ru, "f''" description]                           &  
        \end{tikzcd}
    \end{center}
    a pullback square, and moreover makes the triangle commute.
\end{definition}

For $\MM$ the class of all monics, Corradini et al.\ have shown that all morphisms have mono-materializations if all slice categories have mono-partial map classifiers~\cite[Proposition 8]{corradini2019rewriting}. This condition holds in any topos. The following proposition and corollary generalize their result.

\begin{proposition}
\label{prop:pmc:strong:amendability}
    If all slice categories $\CC/X$ of a category $\CC$ have $\MM$-partial map classifiers and the forgetful functor $\frgt{X} : \CC/X \to C$ reflects and preserves $\MM$-morphisms, then $\CC$ has all $\MM$-materializations.
\end{proposition}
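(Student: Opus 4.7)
The plan is to reduce the construction of the $\MM$-materialization of $f:A\to B$ in $\CC$ to an application of the $\MM$-partial map classifier in the slice category $\CC/B$, exploiting the hypothesis that the forgetful functor $\frgt{B}$ preserves and reflects $\MM$-morphisms.

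First, I would view $f$ as the object $(A,f)$ of $\CC/B$ and apply the $\MM$-partial map classifier $(T,\eta)$ of $\CC/B$ to it. This yields some object $T(A,f)=(\materialization{f},g)$ of $\CC/B$, i.e., a morphism $g:\materialization{f}\to B$ in $\CC$, together with an $\MM$-morphism $\eta_{(A,f)}:(A,f)\MMmono (\materialization{f},g)$ in $\CC/B$. Unpacking the slice category, this is a morphism $f':A\to\materialization{f}$ in $\CC$ with $g\circ f'=f$, and because $\frgt{B}$ \emph{preserves} $\MM$-morphisms, $f'\in\MM$ in $\CC$. Setting $f'':=g$ we obtain the desired factorization $A\stackrel{f'}{\MMmono}\materialization{f}\stackrel{f''}{\to}B$.

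Next, I need to verify the terminal (pullback) property. Let $A\stackrel{m}{\MMmono}C\stackrel{\alpha}{\to}B$ be any other factorization with $\alpha\circ m=f$. Then $(C,\alpha)$ is an object of $\CC/B$ and $m$ lifts to a morphism $m:(A,f)\to(C,\alpha)$ in $\CC/B$; since $\frgt{B}$ \emph{reflects} $\MM$-morphisms, this lifted $m$ lies in $\MM$ in $\CC/B$. Thus we have the $\MM$-partial map $(C,\alpha)\stackrel{m}{\leftMMmono}(A,f)\stackrel{1_{(A,f)}}{\to}(A,f)$ in $\CC/B$, to which I apply the universal property of $(T,\eta)$: there is a unique $\beta:(C,\alpha)\to(\materialization{f},g)$ in $\CC/B$ making the square
\begin{center}
\begin{tikzcd}
(A,f) \arrow[r, equals] \arrow[d, "m" description, hook] & (A,f) \arrow[d, "\eta_{(A,f)}" description, hook] \\
(C,\alpha) \arrow[r, "\beta" description] & (\materialization{f},g)
\end{tikzcd}
\end{center}
a pullback in $\CC/B$. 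Commutativity in $\CC/B$ gives $g\circ\beta=\alpha$ (the required triangle), and since the forgetful functor $\frgt{B}$ creates pullbacks along morphisms into $B$, the square is also a pullback in $\CC$. Uniqueness of $\beta$ in $\CC/B$ translates directly to uniqueness in $\CC$ via $\frgt{B}$, because any competing $\CC$-morphism $\beta':C\to\materialization{f}$ satisfying $g\circ\beta'=\alpha$ is automatically a morphism $(C,\alpha)\to(\materialization{f},g)$ in $\CC/B$.

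The main obstacle is conceptual rather than technical: one has to notice that the natural home for the construction is the slice $\CC/B$, and that the two-sided hypothesis on $\frgt{B}$ is exactly what is needed both to get $f'\in\MM$ downstairs (preservation) and to promote the arbitrary factorization to one in $\CC/B$ so that the universal property becomes applicable (reflection). Everything else—that slice forgetful functors create pullbacks, and that commutativity and uniqueness transport along $\frgt{B}$—is routine.
\qed
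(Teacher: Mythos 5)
Your proof is correct and follows essentially the same route as the paper: the paper's own proof simply defers to Corradini et al.'s argument (constructing the materialization of $f : A \to B$ as the partial map classifier of $(A,f)$ in the slice $\CC/B$), noting only that reflection and preservation of $\MM$-morphisms by $\frgt{B}$ must be invoked explicitly, and your sketch spells out precisely that construction with those two hypotheses used in exactly the places the paper indicates.
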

\begin{proof}
    The proof is analogous to the proof of Corradini et al.\ for $\MM$ the class of all monos, available on arXiv~\cite{corradini2019rewritingarXiv}. The difference is that we require reflection and preservation of $\MM$-morphisms, rather than relying implicitly on these properties for monos. \qed
\end{proof}

\begin{corollary}
    Quasitoposes $\CC$ admit rm-materializations.
\end{corollary}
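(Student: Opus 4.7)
The plan is to derive the corollary as a direct application of Proposition~\ref{prop:pmc:strong:amendability} with $\MM = \RegC$, so the work reduces to verifying the two hypotheses of that proposition in the quasitopos setting. First, I would observe that $\RegC$ is a stable system of monics by Proposition~\ref{prop:quasitopos:properties}(\ref{prop:decomposition:regular}), which licenses instantiating $\MM = \RegC$ in Proposition~\ref{prop:pmc:strong:amendability}.

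Next I would verify the first hypothesis: every slice $\CC/X$ has a regular-mono partial map classifier. Using Proposition~\ref{prop:quasitopos:properties}(7), each slice $\CC/X$ is itself a quasitopos, and by Proposition~\ref{prop:quasitopos:properties}(2) any quasitopos carries an $\MM$-partial map classifier for its regular monos. A small point worth spelling out is that one really wants the classifier for $\mathrm{rm}(\CC/X)$, but this causes no trouble since we only need the forgetful functor to interact correctly with whichever class of monics is used, and the reflection/preservation statement in the next step pins this down.

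For the second hypothesis, I would show that $\frgt{X} : \CC/X \to \CC$ both preserves and reflects regular monomorphisms. Preservation is exactly Proposition~\ref{prop:wyler:forgetful:functor}(3). Reflection follows from Corollary~\ref{corr:forgetful:reflect:regular:monos}, whose finite-products hypothesis is satisfied because a quasitopos has all finite limits. Combining these, both requirements of Proposition~\ref{prop:pmc:strong:amendability} are met, and one concludes that $\CC$ admits all $\RegC$-materializations, i.e., rm-materializations.

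I do not anticipate any serious obstacle: the corollary is essentially a bookkeeping exercise assembling Proposition~\ref{prop:quasitopos:properties}, Proposition~\ref{prop:wyler:forgetful:functor}, Corollary~\ref{corr:forgetful:reflect:regular:monos} and Proposition~\ref{prop:pmc:strong:amendability}. The only mildly subtle point is ensuring that the class of regular monos in the slice $\CC/X$ aligns with the image under $\frgt{X}$ of regular monos in $\CC$ sufficiently for the materialization construction in Proposition~\ref{prop:pmc:strong:amendability} to go through; but since $\frgt{X}$ both preserves and reflects $\RegC$-morphisms, this alignment is automatic.
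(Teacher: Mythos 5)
Your argument is correct and is essentially the paper's own proof: both derive the corollary by instantiating Proposition~\ref{prop:pmc:strong:amendability} with $\MM = \RegC$, using that slices of quasitoposes are quasitoposes and hence have rm-partial map classifiers, and that $\frgt{X}$ preserves (Proposition~\ref{prop:wyler:forgetful:functor}) and reflects (Corollary~\ref{corr:forgetful:reflect:regular:monos}) regular monos. The only slip is a citation: the fact that $\RegC$ is a stable system of monics is item~1 of Proposition~\ref{prop:quasitopos:properties}, not item~\ref{prop:decomposition:regular}.
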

\begin{proof}
    Using Corollary~\ref{corr:forgetful:reflect:regular:monos} and Proposition~\ref{prop:pmc:strong:amendability}, the fact that quasitoposes have rm-partial map classifiers, and the fact that the quasitopos property is stable under slicing.
    \qed
\end{proof}

\newcommand{\compact}[1]{\mathrm{compact}(#1)}
\newcommand{\compactiso}[1]{\mathrm{compact}^{\cong}(#1)}

\begin{definition}[Compacted Rule]
\label{def:pbpo:compacted:rule}
    Let $\CC$ be a quasitopos.
    For any canonical PBPO rule $\rho$ and any factorization $t_L = f \mathop{\circ} e$ where $e$ is epic (note that $f$ is uniquely determined by $e$, because $e$ is right-cancellative), the \emph{compacted \pbpostrong rule} $\rho_e$ is defined as the bold subdiagram of
    \begin{center}
        \begin{tikzcd}[row sep=25]
        L \arrow[ddd, "t_L" description, bend right=49] \arrow[d, "e" description, two heads] & K \arrow[l, "l" description] \arrow[r, "r" description] \arrow[d] \arrow[ld, "\mathrm{PB}", phantom] \arrow[rd, "\mathrm{PO}", phantom] & R \arrow[d]                    \\
        \pmb{L_e} \arrow[dd, "f" description, bend right] \arrow[d, "\pmb{f'}" description, hook, thick]         & \pmb{K_e} \arrow[l, thick] \arrow[r, thick] \arrow[d, hook, thick] \arrow[ld, "\pmb{\mathrm{PB}}", phantom] \arrow[rd, "\mathrm{PO}", phantom]                           & \pmb{R_e} \arrow[d]                  \\
        \pmb{\materialization{f}} \arrow[d, "f''" description]                                        & \pmb{\materialization{f}'} \arrow[d] \arrow[l, thick] \arrow[r] \arrow[ld, "\mathrm{PB}", phantom] \arrow[rd, "\mathrm{PO}", phantom]                  & \materialization{f}'' \arrow[d] \\
        L'                                                                                    & K' \arrow[l, "l''" description] \arrow[r, "r''" description]                                                                            & R'                            
        \end{tikzcd}
    \end{center}
    where the outer diagram is rule $\rho$, and $L_e \stackrel{f'}{\MMmono} \materialization{f} \stackrel{f''}{\to} L'$ the rm-materialization of $f$, and objects $\materialization{f}'$ and $\materialization{f}''$ are obtained by respectively pullback and pushout.
    
    We additionally define the class of rules
    \[
        \compact{\rho} = \{ \rho_e \mid \exists f e .\  t_L = f \circ e \text{ and $e$ is epic} \} \text{.}
    \]
    and its subclass
    \[
        \compactiso{\rho} = \{ \rho_e \mid \exists f e .\  t_L = f \circ e \text{ and $e$ is iso} \} \text{.}
    \]
\end{definition}

\begin{theorem}
\label{thm:modeling:pbpo:with:pbpostrong}
    Let $\CC$ be a quasitopos.
    For any canonical PPBO rule $\rho$ with type morphism $t_L : L \to L'$, the class of \pbpostrong rules $\compact{\rho}$ models it, that is,
    \[
        {\Rightarrow_\mathrm{PBPO}^\rho} \qquad =  \; {\bigcup_{\tau \, \in \, \compact{\rho}} \, {\Rightarrow_\mathrm{PBPO^{+}}^\tau}}.
    \]
\end{theorem}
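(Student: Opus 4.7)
The plan is to prove both inclusions. For ``$\subseteq$'', the key tools will be the unique epi-regular-mono factorization available in any quasitopos (Proposition~\ref{prop:quasitopos:properties}) and the universal property of the rm-materialization (which was shown to exist above). For ``$\supseteq$'', the main tools will be pullback and pushout pasting.

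Given a PBPO step $\pbpostep{G_L}{G_R}{\rho}{m}{\alpha}$, I would first factor the match as $m = m' \circ e$ with $e : L \epi L_e$ epic and $m' : L_e \MMmono G_L$ a regular monomorphism. Setting $f = \alpha \circ m'$, the epicness of $e$ makes $f$ the unique morphism with $t_L = \alpha \circ m = f \circ e$, so $(f, e)$ is precisely the data defining some compacted rule $\rho_e \in \compact{\rho}$. The span $L_e \stackrel{m'}{\MMmono} G_L \stackrel{\alpha}{\to} L'$ is then an $\MM$-partial map representing $f$, and the universal property of the rm-materialization $L_e \stackrel{f'}{\MMmono} \materialization{f} \stackrel{f''}{\to} L'$ yields a unique $\beta : G_L \to \materialization{f}$ satisfying $\beta \circ m' = f'$ and $f'' \circ \beta = \alpha$ such that
\begin{center}
\begin{tikzcd}[column sep=9mm]
L_e \arrow[d, equals] \arrow[r, "m'", hook] \arrow[rd, "\mathrm{PB}", phantom] & G_L \arrow[d, "\beta"] \\
L_e \arrow[r, "f'", hook] & \materialization{f}
\end{tikzcd}
\end{center}
is a pullback, i.e., exactly a strong match for $\rho_e$ with match $m'$ and adherence $\beta$.

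Conversely, given a \pbpostrong step ${\pbpostrongstep{G_L}{G_R}{\rho_e}{m'}{\beta}}$ for some $\rho_e \in \compact{\rho}$, I would set $m = m' \circ e$ and $\alpha = f'' \circ \beta$. Then $\alpha \circ m = f'' \circ f' \circ e = f \circ e = t_L$, so $(m, \alpha)$ is a valid PBPO match.

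It remains in both directions to verify that the resulting context object $G_K$ and final object $G_R$ coincide up to isomorphism. By construction of the compacted rule $\rho_e$, $\materialization{f}'$ is the pullback of $K' \to L' \leftarrow \materialization{f}$, and by the triangle $\alpha = f'' \circ \beta$ the composite $G_L \to \materialization{f} \to L'$ equals $\alpha$; pullback pasting then gives $G_L \times_{\materialization{f}} \materialization{f}' \iso G_L \times_{L'} K'$, identifying the two contexts. Similarly, $R_e$ is the pushout $R +_K K_e$ coming from the top-right pushout of $\rho_e$, so pushout pasting identifies the \pbpostrong result $G_K +_{K_e} R_e$ with the PBPO result $R +_K G_K$. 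The main technical obstacle — though I expect it to be dispatched by routine diagram chasing — is checking that the induced morphism $u : K \to G_K$ coincides on both sides under these identifications; this follows from the uniqueness clause of Lemma~\ref{lemma:on:u}, since both candidates project correctly to $G_L$ (as $m \circ l$) and to $K'$ (as the appropriate component of $t_K$), and hence agree by the universal property of the pullback $G_K$.
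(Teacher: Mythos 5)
Your proposal is correct and follows essentially the same route as the paper's proof: factor the PBPO match as an epi followed by a regular mono, set $f = \alpha \circ m'$, and invoke the universal property of the rm-materialization to obtain the adherence $\beta$ establishing the strong match for $\rho_e$, with the converse direction obtained by composing back and pasting the pullback and pushout squares. The only difference is one of presentation: you spell out the pullback/pushout pasting and the agreement of the induced morphism $u$ explicitly, whereas the paper leaves these implicit in its five-row diagram.
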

\begin{proof}
    $\subseteq$: Assume a PBPO step $\pbpostep{G_L}{G_R}{\rho}{m}{\alpha}$ with $t_L = \alpha m$ for some $m$ and $\alpha$. Because we are in a quasitopos, $m$ admits a unique (epi, regular mono)-factorization $L \stackrel{e}{\epi} L_e \stackrel{m'}{\MMmono} G_L$. Define $f = \alpha m'$ and let $L_e \stackrel{f'}{\MMmono} \materialization{f} \stackrel{f''}{\to} L'$ be its materialization. Then by the materialization property, there exists a unique $\beta : G_L \to \materialization{f}$ such that $f' = \beta m'$ is a pullback and $\alpha = f''\beta$ commutes. The middle two rows of diagram
    \begin{center}
        \begin{tikzcd}[column sep=40]
        L \arrow[dddd, "t_L" description, bend right=60] \arrow[dd, "m" description, bend right=49] \arrow[d, "e" description, two heads] &  & K \arrow[ll, "l" description] \arrow[rr, "r" description] \arrow[d] \arrow[rrd, "\mathrm{PO}", phantom] \arrow[lld, "\mathrm{PB}", phantom] &  & R \arrow[d]                    \\
        L_e \arrow[d, "m'" description, hook']   &  & K_e \arrow[ll] \arrow[rr] \arrow[d] \arrow[rrd, "\mathrm{PO}", phantom] \arrow[lld, "\mathrm{PB}", phantom]                                 &  & R_e \arrow[d]                  \\
        G_L \arrow[dd, "\alpha" description, bend right=49] \arrow[d, "!\beta" description, dotted]                                       &  & G_K \arrow[ll] \arrow[rr] \arrow[d] \arrow[rrd, "\mathrm{PO}", phantom] \arrow[lld, "\mathrm{PB}", phantom]                                 &  & G_R \arrow[d]                  \\
        \materialization{f} \arrow[from=uu, "f'" description, hook', bend left=50, pos=0.75, crossing over] \arrow[d, "f''" description]                                                                                    &  & \materialization{f}' \arrow[ll] \arrow[d] \arrow[rr] \arrow[rrd, "\mathrm{PO}", phantom] \arrow[lld, "\mathrm{PB}", phantom]                 &  & \materialization{f} '' \arrow[d] \\
        L' \arrow[from=uuu, "f" description, bend left=68, crossing over]                                                                                                                            &  & K' \arrow[ll, "l'" description] \arrow[rr, "r'" description]                                                                                &  & R'                            
        \end{tikzcd}
    \end{center}
    show that the pullback $f' = \beta m'$ defines a strong match and therefore a step
    \[
    \pbpostrongstep{G_L}{G_R}{\rho_e}{m'}{\beta}
    \]
    for rule $\rho_e \in \compact{\rho}$.
    
    $\supseteq$: We start with the middle two rows of the diagram of direction $\subseteq$, and with $f' = \beta m'$ a pullback.
    Observe that $m'$ is necessarily regular by regular monicity of $f'$ and pullback stability of regular monos. Then from the definition of $\rho_e$ it is immediate that the outer diagram defines a PBPO step using rule $\rho$, match $m$ and adherence morphism $\alpha$.
    \qed
\end{proof}

\newcommand{\pbpomonic}{PBPO$^{\MMmono}$}
\newcommand{\pbpomonicstep}[5]{{#1} \Rightarrow_{\mathrm{PBPO}^{\MMmono}}^{#3, (#4, #5)} {#2}}
\newcommand{\pbpomonicstepminimal}[1]{ \Rightarrow_{\mathrm{PBPO}^{\MMmono}}^{#1}}

\begin{definition}
    We let \pbpomonic{} denote the rewriting framework obtained by modifying PBPO to allow regular monic matches only. That is,
    \[
        {\pbpomonicstep{G_L}{G_R}{\rho}{m}{\alpha}} \quad \Longleftrightarrow \quad
        {\pbpostep{G_L}{G_R}{\rho}{m}{\alpha}} \text{ and $m$ is a regular mono.}
    \]
\end{definition}

\begin{proposition}
    Let $\CC$ be a quasitopos and $\rho$ a \pbpomonic{} rule. The class of \pbpostrong rules $\compactiso{\rho}$ models $\rho$.
\end{proposition}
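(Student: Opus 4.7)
The plan is to leverage Theorem~\ref{thm:modeling:pbpo:with:pbpostrong}, which already handles the unrestricted PBPO case, and show that restricting to regular monic matching on the PBPO side corresponds exactly to restricting the factorizations to isomorphisms on the \pbpostrong side. Concretely, for the inclusion $\subseteq$, suppose $\pbpomonicstep{G_L}{G_R}{\rho}{m}{\alpha}$. Factor the match via the unique (epi, regular mono)-factorization $m = m' \circ e$ (Proposition~\ref{prop:quasitopos:properties}(6)); this is exactly the factorization used in the proof of Theorem~\ref{thm:modeling:pbpo:with:pbpostrong}, and it yields a \pbpostrong step for $\rho_e \in \compact{\rho}$. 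Since $m$ is a regular mono and $m = m' \circ e$, Proposition~\ref{prop:quasitopos:properties}(\ref{prop:decomposition:regular}) forces $e$ to be a regular mono. But $e$ is also epic, so by Proposition~\ref{prop:epi:regular:mono:is:iso}, $e$ is an isomorphism. Hence $\rho_e$ belongs to $\compactiso{\rho}$, as required.

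For the inclusion $\supseteq$, suppose $\pbpostrongstep{G_L}{G_R}{\rho_e}{m'}{\beta}$ for some $\rho_e \in \compactiso{\rho}$, so that $e$ is an isomorphism. Invoking the $\supseteq$ direction of Theorem~\ref{thm:modeling:pbpo:with:pbpostrong}, this yields a PBPO step with match $m = m' \circ e$. It remains to verify that $m$ is a regular monomorphism so that it qualifies as a \pbpomonic{} match. The strong match square defines $m'$ as the pullback of the type morphism $f' : L_e \MMmono \materialization{f}$ of $\rho_e$ along $\beta$; since $f'$ is a regular mono and regular monos form a stable system of monics (Proposition~\ref{prop:quasitopos:properties}(1)), $m'$ is a regular mono. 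As $e$ is an isomorphism, it is a regular mono as well, and closure of $\RegC$ under composition yields that $m = m' \circ e \in \RegC$.

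I expect no real obstacle: the argument is essentially a two-line refinement of the preceding theorem, and the only technical ingredients are the decomposition property of regular monos in a quasitopos, the fact that an epi which is also a regular mono is an isomorphism, and pullback stability / closure under composition of $\RegC$. All of these are recorded in Proposition~\ref{prop:quasitopos:properties} and Proposition~\ref{prop:epi:regular:mono:is:iso}.
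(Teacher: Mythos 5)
Your proposal is correct and follows essentially the same route as the paper: the paper's proof likewise observes that for a regular mono $m$, the epi in its (epi, regular mono)-factorization is itself a regular mono by Proposition~\ref{prop:quasitopos:properties}(\ref{prop:decomposition:regular}) and hence an isomorphism by Proposition~\ref{prop:epi:regular:mono:is:iso}, and then specializes the statement and proof of Theorem~\ref{thm:modeling:pbpo:with:pbpostrong} to $\compactiso{\rho}$. Your extra verification in the $\supseteq$ direction that $m = m' \circ e$ is regular monic (via pullback stability and closure under composition of $\RegC$) is a detail the paper leaves implicit in the phrase ``specialize the claim and proof,'' but it is the right check to make.
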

\begin{proof}
    For any (epi, regular mono)-factorization $m = m'e$ of a regular mono $m$, $e$ is a regular mono (Proposition~\ref{prop:quasitopos:properties}), and hence an isomorphism (Proposition~\ref{prop:epi:regular:mono:is:iso}). Thus one can specialize the claim and proof of Theorem~\ref{thm:modeling:pbpo:with:pbpostrong} to $\compactiso{\rho}$ rather than $\compact{\rho}$. \qed
\end{proof}

\begin{corollary}[\pbpostrong Models PBPO]
\label{corr:pbpostrong:models:pbpo}
    Assume $\CC$ is a quasitopos and let $\rho$ be a PBPO rule.
    \begin{enumerate}
        \item There exists a single \pbpostrong rule $\tau$ such that ${\pbpomonicstepminimal{\rho}} = {\pbpostrongstepminimal{\tau}}$.
        \item If $t_L$ of $\rho$ is a regular mono, then there exists a single \pbpostrong rule $\tau$ such that ${\pbpostepminimal{\rho}} = {\pbpostrongstepminimal{\tau}}$.
    \end{enumerate} 
    \end{corollary}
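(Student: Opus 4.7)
The plan is to take $\tau := \rho_{\id{L}}$, the compacted \pbpostrong rule from Definition~\ref{def:pbpo:compacted:rule} corresponding to the trivial factorization $t_L = t_L \circ \id{L}$. Since $\id{L}$ is both epic and an iso, $\rho_{\id{L}}$ is well-defined and moreover $\tau \in \compactiso{\rho}$. The same $\tau$ will witness both parts.

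For part~1, I will invoke the preceding proposition to obtain the equality
${\pbpomonicstepminimal{\rho}} = \bigcup_{\tau' \in \compactiso{\rho}} {\pbpostrongstepminimal{\tau'}}$,
and then argue that the union collapses to ${\pbpostrongstepminimal{\tau}}$. To this end I will show that every rule $\rho_e \in \compactiso{\rho}$ is isomorphic as a \pbpostrong rule to $\tau$: given $e : L \to L_e$ an iso, each further object in Definition~\ref{def:pbpo:compacted:rule} (the pullback $K_e$, pushout $R_e$, the materialization $\materialization{f}$ of $f = t_L \circ e^{-1}$, and the subsequent pullback/pushout pair $\materialization{f}'$, $\materialization{f}''$) is determined up to unique iso by its universal property, hence is isomorphic to the corresponding construction for $e = \id{L}$. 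Since isomorphic \pbpostrong rules induce equal rewrite relations (any step diagram can be transported along the rule isomorphism), the claim follows.

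For part~2, the plan is to show that the hypothesis $t_L$ regular monic forces every PBPO match to be regular monic as well, which gives ${\pbpostepminimal{\rho}} = {\pbpomonicstepminimal{\rho}}$, so part~1 applies. Given a PBPO match with $\alpha \circ m = t_L$, I factor $m = m' \circ e$ with $e$ epic and $m'$ regular monic using Proposition~\ref{prop:quasitopos:properties} item~6. Then $t_L = (\alpha \circ m') \circ e$ is a regular mono by hypothesis, so by Proposition~\ref{prop:quasitopos:properties} item~3 the morphism $e$ is regular monic. Being both epic and regular monic, $e$ is an iso by Proposition~\ref{prop:epi:regular:mono:is:iso}, whence $m = m' \circ e$ is a regular mono (composition of a regular mono with an iso).

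The main obstacle is the diagram-chasing verification that $\rho_e \cong \tau$ as \pbpostrong rules for every iso $e$, together with the invariance of the rewrite relation under such rule isomorphisms. While routine in principle, lifting $e$ through each universal construction of Definition~\ref{def:pbpo:compacted:rule} in a coherent way, and then checking that the resulting tuple of isos between corresponding objects assembles into a bijection between step diagrams, is the most careful piece of the argument. The reduction in part~2 is then essentially a three-line application of the quasitopos machinery already assembled in Proposition~\ref{prop:quasitopos:properties}.
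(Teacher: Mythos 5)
Your proposal is correct and follows essentially the same route as the paper: part~1 collapses the union over $\compactiso{\rho}$ supplied by the preceding proposition by observing that all its members are pairwise isomorphic rules and hence induce the same rewrite relation, and part~2 reduces to part~1 by showing that regular monicity of $t_L$ forces every PBPO match to be regular monic. The only (harmless) difference is that your part-2 argument detours through the (epi, regular mono)-factorization of $m$, whereas a single application of the decomposition property $gf \in \RegC \implies f \in \RegC$ (Proposition~\ref{prop:quasitopos:properties}) to $t_L = \alpha \circ m$ already yields that $m$ is a regular mono.
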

\begin{proof}
    For the first claim, we are left with only one rule after identifying all isomorphic objects in the category.
    For the second claim, if $t_L$ of $\rho$ is a regular mono, then for all strong matches $t_L = \alpha m$, $m$ is a regular mono by pullback stability. Thus
    $\smash{ {\pbpostepminimal{\rho}}
    =
    {\pbpomonicstepminimal{\rho}}
    }$, and the first claim can be applied. \qed
\end{proof}

\subsection{\pbpostrong Models AGREE}
\label{sec:pbpostrong:models:agree}

AGREE is short for ``Algebraic Graph REwriting with controlled Embedding'' and is a rewriting framework introduced by Corradini et al.~\cite{corradini2015agree,corradini2020algebraic}. In categories where both SqPO and AGREE are applicable, AGREE can roughly be thought of as adding a filtering mechanism on top of the cloning operations that were originally introduced by SqPO. Moreover, an interesting technical aspect of AGREE is that it was the first formalism to utilize partial map classifiers for the definition of rewrite steps.

\newcommand{\AGREEstep}[4]{
  {#1} \Rightarrow_\mathrm{AGREE}^{#3, #4} {#2}
}

\newcommand{\AGREEsteprule}[2]{
  {\Rightarrow_\mathrm{AGREE}^{#1,#2}}
}

\newcommand{\AGREEstepminimal}[1]{
  {\Rightarrow_\mathrm{AGREE}^{#1}}
}

\begin{definition}[AGREE Rewriting~\cite{corradini2015agree, corradini2020algebraic}]
    Assume $\CC$ has $\MM$-partial map classifiers for a stable system of monics $\MM$.
    
    An \emph{AGREE rewrite rule} is of the form
    \begin{center}
        \begin{tikzcd}
        L & K \arrow[d, "t_K" description, hook] \arrow[l, "l" description] \arrow[r, "r" description] & R \\
          & K'                                                                                         &  
        \end{tikzcd}
    \end{center}
    where $t_K$ is an $\MM$-morphism.
    
    A diagram of the form
    \begin{center}
        \begin{tikzcd}[column sep=45]
        L \arrow[d, "m" description, hook] \arrow[dd, "\eta_L" description, hook, bend right=49] & K \arrow[l, "l" description] \arrow[r, "r" description] \arrow[d, hook] \arrow[rd, "\mathrm{PO}", phantom]  & R \arrow[d] \\
        G_L \arrow[d, "\parclassid{m}" description] \arrow[ru, "\mathrm{PB}", phantom]           & G_K \arrow[d] \arrow[r] \arrow[l]                                                                                                                         & G_R         \\
        T(L) \arrow[ru, "\mathrm{PB}", phantom]                                                  & K' \arrow[l, "\parclass{t_K}{l}" description]     \arrow[from=uu, "t_K" description, hook, bend left=50, pos=0.8, crossing over]                                                                                                        &            
        \end{tikzcd}
    \end{center}
    defines an \emph{AGREE rewrite step} $\AGREEstep{G_L}{G_R}{\rho}{m}$, i.e., it is a step from object $G_L$ to object $G_R$ induced by rule $\rho$ and match morphism $m : L \MMmono G_L$.
\end{definition}

The proposition below is stated for $\MM$ the class of all monos in \cite{corradini2019pbpo}.

\begin{proposition}[{Relating AGREE and PBPO~\cite[Proposition 3.1]{corradini2019pbpo}}]
\label{prop:relating:AGREE:PBPO}
    Assume $\CC$ has $\MM$-partial map classifiers for a stable system of monics $\MM$. In the diagram
    \begin{center}
        \begin{tikzcd}[column sep=45, row sep=30]
        \pmb{L} \arrow[d, "\eta_L" description, hook] & \pmb{K} \arrow[d, "\pmb{t_K}" description, hook, thick] \arrow[l, "\pmb{l}" description, thick] \arrow[r, "\pmb{r}" description, thick] \arrow[rd, "\mathrm{PO}", phantom] & \pmb{R} \arrow[d, "t_R" description] \\
        T(L) \arrow[ru, "\mathrm{PB}", phantom] & \pmb{K'} \arrow[l, "{\parclass{t_K}{l}}" description] \arrow[r, "r'" description]                                                         & R'                            
        \end{tikzcd}
    \end{center}
    let the bold subdiagram depict an AGREE rule $\rho$, and the entire diagram a PBPO rule $\interpretas{\rho}{\mathrm{PBPO}}$, where the right square is a pushout. Then for any match $m \in \MM$, we have
    $\AGREEsteprule{\rho}{m} = 
    \pbposteprule{\interpretas{\rho}{\mathrm{PBPO}}}{m}{\parclassid{m}}$. \qed
\end{proposition}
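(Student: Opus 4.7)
The plan is to establish the bi-implication by noting that the AGREE and PBPO rewrite step diagrams are essentially identical under the correspondence $\alpha = \parclassid{m}$, with the pullback lemma (Lemma~\ref{lemma:pullback:lemma}) doing the only nontrivial work.

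First, I would record the central observation, call it $(\star)$: by the defining property of $\parclass{t_K}{l}$ in Definition~\ref{def:partial:map:classifier}, the outer rectangle
\begin{center}
\begin{tikzcd}[column sep=12mm]
K \arrow[d, "t_K" description, hook] \arrow[r, "l" description] & L \arrow[d, "\eta_L" description, hook] \\
K' \arrow[r, "{\parclass{t_K}{l}}" description] & T(L)
\end{tikzcd}
\end{center}
is a pullback square. Moreover, $\parclassid{m}$ is by definition the unique morphism making $\eta_L = \parclassid{m} \circ m$ into a pullback square; in particular, the triangle condition $t_L = \alpha \circ m$ required by a PBPO step (with $t_L := \eta_L$ and $\alpha := \parclassid{m}$) is automatic.

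For the forward direction (AGREE $\Rightarrow$ PBPO), the bottom-left pullback and the top-right pushout of the PBPO step diagram are precisely the corresponding squares of the AGREE diagram, so I inherit them directly. It remains to identify the unique morphism $u : K \to G_K$ prescribed by Lemma~\ref{lemma:on:u} with the structural $K \MMmono G_K$ arrow of the AGREE diagram; this is immediate by uniqueness of the universal arrow into the bottom-left pullback, since both morphisms satisfy $g_L \circ (-) = m \circ l$ (by commutativity of the AGREE top-left square) and $u' \circ (-) = t_K$ (by the bend-over of $t_K$ in the AGREE diagram).

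For the reverse direction (PBPO with $\alpha = \parclassid{m}$ $\Rightarrow$ AGREE), the bottom-left pullback and top-right pushout again transfer directly. The only extra data an AGREE step requires is that the top-left square $L,K,G_L,G_K$ be a pullback, and I obtain this from the pullback lemma: $(\star)$ provides that the outer composite rectangle $L,K,T(L),K'$ (via $\eta_L$, $\parclass{t_K}{l}$, and $t_K$) is a pullback, while the bottom-left square $G_L,G_K,T(L),K'$ is a pullback by the PBPO step, so the top-left square is a pullback by Lemma~\ref{lemma:pullback:lemma} applied vertically.

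The main (and only) technical content is the observation $(\star)$; everything else is routine bookkeeping with universal properties, and I do not anticipate any real obstacle. The one point requiring a little care is checking that the unnamed structural morphism $K \MMmono G_K$ in AGREE really agrees with the PBPO-induced $u$, but this is settled by the uniqueness clause of the bottom pullback's universal property.
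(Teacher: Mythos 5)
Your argument is correct. Note that the paper itself gives no proof of this proposition: it is imported verbatim from Corradini et al.\ (cited as [Proposition 3.1] of the PBPO paper) and closed with \qed in the statement, so there is nothing in the source to compare against line by line. Your proof supplies exactly the argument one would expect to find there: the triangle $\eta_L = \parclassid{m}\circ m$ and the pullback property of the rule's left square are both instances of the defining property of the partial map classifier (your $(\star)$); the bottom-left pullback and top-right pushout are literally shared between the two step diagrams; the induced $u$ of the PBPO step coincides with AGREE's structural arrow $K \to G_K$ by uniqueness of the mediating morphism into the pullback $G_K$ (both satisfy $g_L \circ (-) = m \circ l$ and $u' \circ (-) = t_K$); and the top-left square being a pullback in the AGREE direction follows from the vertical pasting of the bottom-left pullback with the rule's left square via Lemma~\ref{lemma:pullback:lemma}. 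The only detail you leave implicit is that the arrow $K \to G_K$ is again an $\MM$-mono (as drawn in the AGREE step diagram), but this is immediate from stability of $\MM$ under pullback once the top-left square is known to be a pullback of $m \in \MM$, so there is no gap.
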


Observe that Proposition~\ref{prop:relating:AGREE:PBPO} relies on a particular choice of adherence morphism $\alpha = \parclassid{m}$.
Thus it does not establish that PBPO models AGREE. In fact, we have the following.

\begin{proposition}
    In $\Graph$ with $\MM$ the class of all monos, AGREE $\not\models$ PBPO.
\end{proposition}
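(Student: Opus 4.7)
The plan is to exhibit a single PBPO rule $\rho$ in $\Graph$ whose induced rewrite relation cannot be realized by any AGREE rule. The separating feature I will exploit is that PBPO applicability at a host graph $G_L$ requires an adherence morphism $\alpha : G_L \to L'$, so $L'$ can impose genuinely global constraints on $G_L$ (such as having no edges), whereas AGREE applicability depends only on the existence of a monic match $m : \tilde{L} \to G_L$. Concretely, I would take $\rho$ with $L = K = R = L' = K' = R' = \mathbf{1}$, the single-vertex edge-less graph, and all morphisms the identity; every square of identities on $\mathbf{1}$ is both a pullback and a pushout, so $\rho$ is canonical.

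The first calculation is to compute $\Rightarrow_{\mathrm{PBPO}}^{\rho}$. A graph morphism $\alpha : G_L \to \mathbf{1}$ exists iff $G_L$ has no edges (since $\mathbf{1}$ has none to absorb them); when it exists, the bottom-left pullback of Definition~\ref{def:pbpo:rewrite:step} is taken along the iso $1_{\mathbf{1}}$ and therefore yields $G_K \iso G_L$, and the subsequent top-right pushout glues $R = \mathbf{1}$ onto $G_K$ at the matched vertex, giving $G_R \iso G_L$. Hence $\Rightarrow_{\mathrm{PBPO}}^{\rho}$ is the identity relation on edge-free graphs with at least one vertex.

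For the refutation, suppose an AGREE rule $\tau$ with left-hand side $\tilde{L}$ satisfies $\Rightarrow_{\mathrm{AGREE}}^{\tau} = \Rightarrow_{\mathrm{PBPO}}^{\rho}$. Since the relation contains $(\mathbf{1}, \mathbf{1})$, there must exist a mono $\tilde{L} \to \mathbf{1}$, forcing $\tilde{L}$ to be $\emptyset$ or $\mathbf{1}$; in either case $\tilde{L}$ admits a monic match into every graph with at least one vertex. Choosing such a $G$ that additionally has at least one edge, $\tau$ produces an AGREE step $G \Rightarrow_{\mathrm{AGREE}}^{\tau} G_R$ (well-defined since $\Graph$ is a topos and so has all required pullbacks, pushouts, and the mono-partial map classifier of Example~\ref{example:partial:map:classifier}), but $(G, G_R)$ lies outside $\Rightarrow_{\mathrm{PBPO}}^{\rho}$, a contradiction. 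The argument has no real technical obstacle; the only conceptual point is the applicability asymmetry, since AGREE's context type is always $T(\tilde{L})$, which is maximally permissive and cannot express the condition ``the host graph has no edges'' that $\rho$ imposes via $L' = \mathbf{1}$.
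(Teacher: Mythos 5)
Your argument is internally sound, but it proves the converse of the stated proposition. In this paper $\FF \models \GG$ abbreviates $\FF \prec \GG$, which by definition means that \emph{every} $\FF$ rule can be modeled by some $\GG$ rule (``$\GG$ models $\FF$''); hence $\text{AGREE} \not\models \text{PBPO}$ asserts the existence of an \emph{AGREE} rule whose rewrite relation is induced by no PBPO rule. You instead exhibit a \emph{PBPO} rule (the identity rule on the one-vertex graph) that no AGREE rule can realize, which establishes $\text{PBPO} \not\models \text{AGREE}$ --- a true and closely related separation in the spirit of Proposition~\ref{prop:pbpostrong:cannot:be:modeled:by:sets:of:rules}, but not the claim at hand. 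Your separating feature (AGREE's context type $T(\tilde{L})$ is maximally permissive and cannot forbid edges in the host graph) cuts in exactly the opposite direction from the one needed.

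For the stated proposition you need an AGREE rule that PBPO cannot simulate, and the obstruction is the weakness of PBPO matching rather than the permissiveness of $T(\tilde{L})$. The paper's witness is the AGREE rule with $L = \{x\}$, $K = R = \emptyset$ and $K'$ a single looped context node: it deletes a matched node together with its incident edges and preserves everything else, because the AGREE adherence is forced to be $\parclassid{m}$, which never maps context elements onto the pattern. Any PBPO rule inducing the same relation must have $x$ in $L'$ with $x$ deleted in $K'$; but a PBPO adherence is only required to satisfy $t_L = \alpha \circ m$, so $\alpha$ may collapse further host-graph nodes onto $x$, and those are then deleted as well, yielding steps (e.g.\ a two-node edgeless graph rewriting to the empty graph) that the AGREE rule does not admit. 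This non-local collapse is the phenomenon a correct proof needs to exploit, and it is absent from your write-up.
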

\begin{proof}
    The AGREE rule $\rho$ given by
    \begin{center}
        {
\newcommand{\nodexa}{\vertex{x_1}{cblue!20}}%
\newcommand{\nodexb}{\vertex{x_2}{cblue!20}}%
\newcommand{\nodey}{\vertex{y}{cgreen!20}}%
\newcommand{\nodez}{\vertex{z}{cred!10}}%
\newcommand{\nodeu}{\vertex{u}{cpurple!25}}%
\newcommand{\nodex}{\vertex{x}{cblue!20}}%
\begin{center}\vspace{0ex}
  \scalebox{\rulescale}{
  \begin{tikzpicture}[->,node distance=12mm,n/.style={}]
    \graphbox{$L$}{0mm}{0mm}{43mm}{8mm}{-4mm}{-4mm}{
      \node [npattern] (x)
      {\nodex};
    }
    \graphbox{$K$}{44mm}{0mm}{43mm}{8mm}{-4mm}{-4mm}{
    }
    \graphbox{$R$}{88mm}{0mm}{43mm}{8mm}{-8mm}{-4mm}{
    }
    \graphbox{$G_L$}{0mm}{-9mm}{43mm}{8mm}{-4mm}{-4mm}{
      \node [npattern] (x)
      {\nodex};
      \node [npattern] (y) [right of=x] {\nodey};
    }
    \graphbox{$G_K$}{44mm}{-9mm}{43mm}{8mm}{-4mm}{-4mm}{
      \node (x) {};
      \node [npattern] (y) [right of=x] {\nodey};
    }
    \graphbox{$G_R$}{88mm}{-9mm}{43mm}{8mm}{-8mm}{-4mm}{
      \node (x) {};
      \node [npattern] (y) [right of=x] {\nodey};
    }
    \graphbox{$T(L)$}{0mm}{-18mm}{43mm}{8mm}{-4mm}{-4mm}{
      \node [npattern] (x)
      {\nodex};
      \draw [eset,loop=180,looseness=3] (x) to node {} (x);
      \node [nset] (y) [right of=x] {\nodey};
      \draw [eset, bend left=20] (x) to node {} (y);
      \draw [eset, bend left=20] (y) to node {} (x);
      \draw [eset,loop=0,looseness=3] (y) to node {} (y);
    }
    \graphbox{$K'$}{44mm}{-18mm}{43mm}{8mm}{-4mm}{-4mm}{
        \node (x) {};
      \node [nset] (y) [right of=x] {\nodey};
      \draw [eset,loop=0,looseness=3] (y) to node {} (y);
    }
  \end{tikzpicture}
  }\vspace{0ex}
\end{center}%
}
    \end{center}
    matches and deletes a single node $x$ and any of its incident edges, in any context. The context itself is preserved. Given the depicted $G_L$ and match $m$, the depicted $\alpha = \parclassid{m} : G_L \to T(L)$ is by definition the only possible adherence morphism. By contrast, PBPO allows an adherence morphism that maps $y$ onto $x$, so that $G_K$ and $G_R$ are both empty. So $\pbporulematchonly{\interpretas{\rho}{\mathrm{PBPO}}}{m} \not\subseteq  \AGREEsteprule{\rho}{m}$. Moreover, it is easy to see that the problem cannot be avoided by redefining the interpretation, because any redefinition will necessarily have $x$ in $L'$, with $x$ deleted in $K'$. \qed
\end{proof}

Similar arguments can be constructed for other categories satisfying the conditions of Proposition~\ref{prop:relating:AGREE:PBPO}, including $\Set$. For \pbpostrong, however, we have the following result.

\begin{proposition}[Relating AGREE and \pbpostrong{}]
    Assume $\CC$ has $\MM$-partial map classifiers for a stable system of monics $\MM$.
    In the diagram
    \begin{center}
        \begin{tikzcd}[column sep=45, row sep=30]
        \pmb{L} \arrow[d, "\eta_L" description, hook] & \pmb{K} \arrow[d, "\pmb{t_K}" description, hook, thick] \arrow[l, "\pmb{l}" description, thick] \arrow[r, "\pmb{r}" description, thick] & \pmb{R} \\
        T(L) \arrow[ru, "\mathrm{PB}", phantom] & \pmb{K'} \arrow[l, "{\parclass{t_K}{l}}" description]                           
        \end{tikzcd}
    \end{center}
    let the bold subdiagram depict an AGREE rule $\rho$, and the entire diagram a \pbpostrong rule $\interpretas{\rho}{\mathrm{PBPO}^{+}}$. Then for any match $m \in \MM$ and adherence $\alpha$ establishing a \pbpostrong step,
    $\AGREEsteprule{\rho}{m} = 
    {\pbpostrongrulematchonly{\interpretas{\rho}{\mathrm{PBPO}^{+}}}{m}}$.
\end{proposition}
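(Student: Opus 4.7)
The plan is to exploit the specific form of the \pbpostrong{} rule $\interpretas{\rho}{\mathrm{PBPO}^+}$: its context typing is $t_L = \eta_L$, so the strong matching square forces the adherence to be the canonical classifier of $m$. Concretely, I would first fix an $\MM$-match $m : L \MMmono G_L$ and note that a strong match amounts to a morphism $\alpha : G_L \to T(L)$ making
\[
  \begin{tikzcd}
    L \arrow[d, equals] \arrow[r, "m" description, hook] \arrow[rd, "\mathrm{PB}", phantom] & G_L \arrow[d, "\alpha" description] \\
    L \arrow[r, "\eta_L" description, hook] & T(L)
  \end{tikzcd}
\]
a pullback. But this is exactly the defining universal property of the $\MM$-partial map classifier applied to the $\MM$-partial map $G_L \stackrel{m}{\leftMMmono} L \stackrel{1_L}{\to} L$, so $\alpha$ exists, is unique and necessarily equals $\parclassid{m}$.

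Once this identification is in hand, I would check that the remaining data of the \pbpostrong{} step agrees, square by square, with the remaining data of the AGREE step. The pullback of $l' = \parclass{t_K}{l}$ along $\alpha = \parclassid{m}$ producing $G_K$, and the pushout of $K \to R$ along $K \to G_K$ producing $G_R$, are precisely the pullback and pushout appearing in the AGREE diagram. The auxiliary pullback of Lemma~\ref{lemma:on:u}, involving $l$, $u$, $g_L$ and $m$, is recovered automatically in \pbpostrong{} and matches the top-left pullback of the AGREE diagram. For the reverse direction, any AGREE step already furnishes the strong match square as the defining classifying pullback for $\parclassid{m}$, together with the same lower pullback and pushout, so one reads off a \pbpostrong{} step with adherence $\alpha = \parclassid{m}$.

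The only mildly delicate ingredient is the uniqueness of $\alpha$ in the first step: it uses that $m$ is an $\MM$-morphism (otherwise the classifier's universal property does not directly apply), which holds by hypothesis. Beyond that, the argument is essentially diagram chasing, so I do not anticipate a substantive obstacle.
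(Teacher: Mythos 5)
Your proposal is correct and follows essentially the same route as the paper: the key observation in both is that since $t_L = \eta_L$, the strong match square is precisely the classifying pullback of the $\MM$-partial map $G_L \stackrel{m}{\leftMMmono} L \stackrel{1_L}{\to} L$, so the adherence is forced to be $\parclassid{m}$, after which the remaining pullback and pushout coincide with those of the AGREE step. The paper compresses this into a single sentence; you have merely spelled out the diagram comparison it leaves implicit.
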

\begin{proof}
    By virtue of the partial map classifier, $\parclassid{m}$ is the only adherence morphism establishing a strong match for \pbpostrong. \qed
\end{proof}

\begin{corollary}[\pbpostrong Models AGREE]
\label{corr:pbpostrong:models:agree}
    AGREE $\models$ \pbpostrong in any quasitopos. \qed
\end{corollary}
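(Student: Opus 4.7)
The plan is to reduce directly to the preceding proposition, which already establishes the single-step equality ${\AGREEsteprule{\rho}{m}} = {\pbpostrongrulematchonly{\interpretas{\rho}{\mathrm{PBPO}^{+}}}{m}}$ for every match $m \in \MM$. The corollary then amounts to bookkeeping: checking that the quasitopos hypotheses supply the required structure, and that no \pbpostrong step is overlooked by restricting attention to $\MM$-matches.

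First, by Proposition~\ref{prop:quasitopos:properties}, any quasitopos $\CC$ admits a stable system of monics $\MM = \RegC$ together with an $\MM$-partial map classifier $(T, \eta)$, so in particular the hypotheses of the preceding proposition are met. Hence for every AGREE rule $\rho$ in $\CC$, the rule $\tau := \interpretas{\rho}{\mathrm{PBPO}^{+}}$ is a well-defined \pbpostrong rule, and the preceding proposition immediately yields ${\AGREEsteprule{\rho}{m}} = {\pbpostrongrulematchonly{\tau}{m}}$ for every $m \in \MM$.

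Next, I would verify that the match classes on both sides coincide, so that taking unions produces the full rewrite relations on both sides. On the AGREE side, matches are by definition required to lie in $\MM$. On the \pbpostrong side, the type morphism of $\tau$ is $t_L = \eta_L$, which is an $\MM$-morphism by the very definition of a partial map classifier. Because $\MM$ is pullback-stable (Proposition~\ref{prop:quasitopos:properties}), any match $m$ appearing in a strong match square $\alpha \circ m = t_L$ must likewise lie in $\MM$. Hence no \pbpostrong step induced by $\tau$ can come from a match outside $\MM$.

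Combining the two previous observations, $\AGREEstepminimal{\rho} = \bigcup_{m \in \MM} {\AGREEsteprule{\rho}{m}} = \bigcup_{m \in \MM} {\pbpostrongrulematchonly{\tau}{m}} = \pbpostrongstepminimal{\tau}$, which establishes AGREE~$\prec_\CC$~\pbpostrong{} and completes the argument. The only (very modest) point requiring care is verifying that the match classes truly agree, which follows at once from pullback stability of $\RegC$ in a quasitopos; no genuine obstacle remains, since the essential construction and step-level equality have already been done in the preceding proposition.
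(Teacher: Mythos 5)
Your proposal is correct and follows essentially the same route as the paper, which derives the corollary directly from the preceding proposition relating AGREE and \pbpostrong{} steps; your additional check that strong matches for $t_L = \eta_L$ are automatically in $\MM = \RegC$ by pullback stability is a detail the paper leaves implicit, and it is verified correctly.
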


\subsection{\pbpostrong Models DPO}
\label{sec:pbpostrong:models:dpo}

The Double Pushout (DPO) approach to graph rewriting by Ehrig et al.~\cite{ehrig1973graph} is one of the earliest and most well studied algebraic graph rewriting methods.

\begin{definition}[DPO Rewriting~\cite{ehrig1973graph}]
    A \emph{DPO rewrite rule} $\rho$ is a span $L \stackrel{l}{\hookleftarrow} K \stackrel{r}{\to} R$. A diagram
    \begin{center}
        \begin{tikzcd}
        L \arrow[d, "m" description] & K \arrow[l, "l" description, hook'] \arrow[r, "r" description] \arrow[d] \arrow[ld, "\mathrm{PO}", phantom] \arrow[rd, "\mathrm{PO}", phantom] & R \arrow[d] \\
        G_L                          & G_K \arrow[l] \arrow[r]                                                                                                                        & G_R        
        \end{tikzcd}
    \end{center}
    defines a \emph{DPO rewrite step} $\DPOstep{G_L}{G_R}{\rho}{m}$, i.e., a step from $G_L$ to $G_R$ using rule $\rho$ and match morphism $m : L \to G_L$.
\end{definition}

\begin{definition}[DPO as \pbpostrong]
    Let $\CC$ be a category with $\MM$-partial map classifiers, in which pushouts along $\MM$-morphisms exist, are pullbacks, and where $\MM$-morphisms are stable under pushout. 
    
    In the diagram
    \begin{center}
        \begin{tikzcd}
        \pmb{L} \arrow[d, "t_L" description, hook] & \pmb{K} \arrow[l, "\pmb{l}" description, hook', thick] \arrow[r, "\pmb{r}" description, thick] \arrow[d, "\eta_K" description, hook] \arrow[ld, "\mathrm{PO}", phantom] & \pmb{R} \\
        L'                             & T(K) \arrow[l, "l'" description, hook']                                                                                                                            &  
        \end{tikzcd}
    \end{center}
    let the top span depict a DPO rule $\rho$, and the left square a pushout (which is a pullback). Then the entire diagram defines a \pbpostrong rule $ \interpretas{\rho}{\mathrm{PBPO}^{+}}$.
\end{definition}

\begin{theorem}[\pbpostrong Models DPO]
    \label{thm:pbpostrong:models:dpo}
    Let $\CC$ be a quasitopos. Then for any DPO rule $\rho$, $\interpretas{\rho}{\mathrm{PBPO}^{+}}$ is well-defined, and for any $\MM$-morphism $m : L \MMmono G_L$, we have
    $\DPOsteprule{\rho}{m} = 
    {\pbpostrongrulematchonly{\interpretas{\rho}{\mathrm{PBPO}^{+}}}{m}}$.
\end{theorem}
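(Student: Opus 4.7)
The plan is to first verify well-definedness of $\interpretas{\rho}{\mathrm{PBPO}^{+}}$, and then establish the desired equality of rewrite relations by inclusion in both directions, organized around a commutative cube whose top face is the candidate DPO left pushout $L \xleftarrow{l} K \xrightarrow{u} G_K$ with apex $G_L$, whose bottom face is the rule's pushout $L \xleftarrow{l} K \xrightarrow{\eta_K} T(K)$ with apex $L'$, and whose vertical arrows are $1_K$, $1_L$, $u'$, $\alpha$. Well-definedness follows from rm-quasiadhesiveness (Proposition~\ref{prop:quasitopos:properties}(5)): the bottom pushout along the regular mono $\eta_K$ exists and is simultaneously a pullback (fulfilling Definition~\ref{def:pbpostrong:rewrite:rule}), and both $t_L$ and $l'$ are regular monos by stability of $\RegC$ under pushout.

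For the $\subseteq$ direction: given a DPO step, the arrow $u : K \MMmono G_K$ is itself a regular mono by pushout stability of $\RegC$, so $u' := \parclassid{u}$ is well-defined and makes the back face of the cube the partial map classifier pullback. I define $\alpha : G_L \to L'$ as the mediator out of the DPO pushout induced by the cocone $L \xrightarrow{t_L} L' \xleftarrow{l' u'} G_K$, compatible because $t_L l = l' \eta_K = l' u' u$. This guarantees the match square ($t_L = \alpha m$) and middle square ($l' u' = \alpha g_L$) commute; to show they are pullbacks, I paste the back face with the right face and, separately, the left face with the front face, obtaining in both cases the same outer rectangle
\[
\begin{tikzcd}
K \arrow[r, "{g_L u = ml}"] \arrow[d, equals] & G_L \arrow[d, "\alpha"] \\
K \arrow[r, "{l' \eta_K = t_L l}"']           & L'
\end{tikzcd}
\]
whose bottom edge is a composition of regular monos, and hence monic. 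Proposition~\ref{prop:comm:square:mono:is:pb} makes this rectangle a pullback; combined with the back face (partial map classifier pullback) and the left face (trivially a pullback, having identity verticals), two applications of the pullback lemma deliver the right (middle) and front (match) faces as pullbacks, as required.

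For the $\supseteq$ direction: given a PBPO$^+$ step, the right square is already the DPO right pushout, so it suffices to obtain the DPO left pushout, i.e., to show the top face of the cube is a pushout. My plan is to invoke stability of pushouts along regular monos under pullback (Proposition~\ref{prop:quasitopos:properties}(5)): pulling the bottom (rule) pushout back along $\alpha$ produces a pushout with apex $G_L$. Using the match pullback, the middle pullback, and pasting the match pullback with the trivial pullback along $l$, I identify the pulled-back objects as $L$, $G_K$, $K$ (with projections $(1_L, m)$, $(u', g_L)$, $(1_K, ml)$ respectively), while the induced span morphisms are seen to be $l$ and $u$; the latter identification uses the uniqueness clause of Lemma~\ref{lemma:on:u}, which characterizes $u$ as the unique arrow with $u' u = \eta_K$ and $g_L u = ml$. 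The main subtlety I anticipate across both directions is correctly matching the universal morphisms produced by the stability argument to the existing PBPO$^+$ step data; once Lemma~\ref{lemma:on:u}'s uniqueness is applied, this becomes mechanical.
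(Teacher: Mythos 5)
Your well-definedness argument and your $\supseteq$ direction are essentially the paper's: the paper likewise obtains the rule's left square from rm-quasiadhesivity, and for $\supseteq$ it assembles the two pullback squares of the \pbpostrong{} step into a cube over the rule's pushout, uses Lemma~\ref{lemma:on:u} for the back face, and invokes stability of pushouts along regular monos under pullback to conclude that the top face is the missing DPO pushout. Your identification of the pulled-back cube with the step data via the uniqueness clause of Lemma~\ref{lemma:on:u} is exactly the point the paper relies on there.

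The $\subseteq$ direction, however, has a genuine gap in the step where you verify that the match square and the middle square are pullbacks. First, Proposition~\ref{prop:comm:square:mono:is:pb} does not apply to your outer rectangle: in that proposition the identity edge is \emph{opposite} the monic edge, whereas in your rectangle the identity $1_K$ is adjacent to the monic composite $t_L \circ l$, and the edge opposite $1_K$ is $\alpha$, which need not be monic. Indeed, the rectangle being a pullback is essentially equivalent to the match square being a pullback (paste the trivial left face with the front face), so no purely formal monicity argument can deliver it. Second, even granting that the outer rectangle is a pullback, your two applications of the pullback lemma run in the invalid direction: the lemma lets you cancel the \emph{first} square of a pasting when the \emph{second} is known to be a pullback, but you are trying to deduce the second square (the front, resp.\ right, face) from the first and the outer — a cancellation that fails in general. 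The paper closes exactly this hole with the adhesivity machinery: the middle square is shown to be a pushout by the dual pullback lemma (cancelling the DPO pushout against the rule's pushout) and is then a pullback because pushouts along $\MM$-morphisms are pullbacks (Lemma~\ref{lemma:m:adhesive:pushouts:are:pullbacks}); the match square is then obtained as the front face of a cube whose bottom is a pushout along an $\MM$-morphism, whose back faces are pullbacks and whose top is a pushout, using the $\MM$-Van Kampen property. That Van Kampen input is the essential ingredient your argument is missing.
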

\begin{proof}
    Well-definedness of $\interpretas{\rho}{\text{\pbpostrong}}$ follows from Definition~\ref{def:m:adhesive:category}, Lemma~\ref{lemma:m:adhesive:pushouts:are:pullbacks}, and the fact that any quasitopos is $\MM$-adhesive for $\MM = \RegC$.
    
    Direction $\subseteq$:
    Suppose that
    \begin{center}
        \begin{tikzcd}[column sep=40]
        L \arrow[d, "m" description, hook] & K \arrow[l, "l" description, hook'] \arrow[d, "u" description] \arrow[r, "r" description] \arrow[ld, "\mathrm{PO}", phantom] \arrow[rd, "\mathrm{PO}", phantom] & R \arrow[d, "w" description] \\
        G_L                                & G_K \arrow[r, "g_R" description] \arrow[l, "g_L" description]                                                                                                   & G_R                         
        \end{tikzcd}
    \end{center}
    is a DPO step induced by $\rho$. 
    Then by Lemma~\ref{lemma:m:adhesive:pushouts:are:pullbacks}, the left square is a pullback, and by stability of $\MM$-morphisms under pushout and pullback, $u, g_L \in \MM$.
    
    Because $u \in \MM$, we obtain the classifying arrow $\parclassid{u} : G_K \to T(K)$ that makes the square $\parclassid{u} \circ u = \eta_K  \circ \id{K}$ a pullback. Additionally, from the pushout property of the top left square, we obtain a unique morphism $\alpha : G_L \to L'$ satisfying $t_L = \alpha \circ m$ and $\alpha \circ g_L = l' \circ \parclassid{u}$:
    \begin{equation}
    \label{eq:dpo:to:pbpostrong:diagram}
        \begin{tikzcd}[column sep=40]
        L \arrow[d, "m" description, hook] \arrow[dd, "t_L" description, hook', bend right=49] & K \arrow[l, "l" description, hook'] \arrow[d, "u" description, hook] \arrow[r, "r" description] \arrow[ld, "\mathrm{PO}", phantom] \arrow[rd, "\mathrm{PO}", phantom] & R \arrow[d, "w" description] \\
        G_L     \arrow[d, "\alpha" description, dotted]                                                                               & G_K \arrow[r, "g_R" description] \arrow[l, "g_L" description, hook'] \arrow[d, "\parclassid{u}" description]                                                                                 & G_R                          \\
        L'                                                                                    & T(K) \arrow[l, "l'" description, hook'] \arrow[from=uu, "\eta_K" description, hook, bend left=49, pos=0.75, crossing over]                                                                                                                                                &                             
        \end{tikzcd}
    \end{equation}
    By the dual of the pullback lemma, it follows that the bottom left commuting square is a pushout. Because it is a pushout along an $\MM$-morphism, it is also a pullback.
    
    It remains to show that square $t_L \circ \id{L} = \alpha \circ m$ is a pullback. For this, consider the cubical arrangement
    \begin{center}
        \begin{tikzcd}[row sep={32,between origins},column sep={32,between origins}]
            & K  \ar[rr, equals] \ar[dd, hook, "u" description, pos=0.75] \ar[dl, "l" description, hook'] & &  K \ar[dd, hook, "\eta_K" description] \ar[dl, "l" description, hook'] \\
            L \ar[rr, crossing over, equals] \ar[dd, hook, "m" description] & & L \\
              & G_K \ar[rr, "\parclassid{u}" description, pos=0.25] \ar[dl, hook', "g_L" description] & &  T(K) \ar[dl, "l'" description, hook'] \\
            G_L \ar[rr, "\alpha" description] && L' \ar[from=uu,crossing over, hook, "t_L" description, pos=0.75]
        \end{tikzcd}
    \end{center}
    in which the bottom square is a pushout along an $\MM$-morphism, the back faces are pullbacks and the top square is a pushout. By $\MM$-adhesivity, the bottom square is an $\MM$-VK square. From this it follows that the front face is a pullback square.
    Thus Diagram~\eqref{eq:dpo:to:pbpostrong:diagram} defines a \pbpostrong step.
    
    Direction $\supseteq$: We are given a \pbpostrong step
    \begin{equation}
    \label{eq:pbpostrong:step:to:dpo:step}
        \begin{tikzcd}[column sep=40]
                                                                                                             & L \arrow[d, "m" description, hook]  & K \arrow[l, "l" description, hook'] \arrow[d, "u" description, hook] \arrow[r, "r" description] \arrow[ld, "\mathrm{PB}", phantom] \arrow[rd, "\mathrm{PO}", phantom]  & R \arrow[d, "w" description] \\
        L \arrow[r, "m" description, hook] \arrow[d, equals] \arrow[rd, "\mathrm{PB}", phantom] & G_L \arrow[d, "\alpha" description] & G_K \arrow[r, "g_R" description,pos=0.6] \arrow[l, "g_L" description, hook'] \arrow[d, "u'" description] \arrow[ld, "\mathrm{PB}", phantom]                                                                         & G_R                          \\
        L \arrow[r, "t_L" description, hook]                                                                       & L'                                  & T(K) \arrow[l, "l'" description, hook']  \arrow[from=uu, "\eta_K" description, hook, bend left=49, pos=0.75, crossing over]                                                                                                                                                                                       &                             
        \end{tikzcd}
    \end{equation}
    where the pullback squares can be represented as the commutative cube
    \begin{center}
        \begin{tikzcd}[row sep={32,between origins},column sep={32,between origins}]
            & K  \ar[rr, "u" description, hook] \ar[dd, equals] \ar[dl, "l" description, hook'] & &  G_K \ar[dd, "u'" description] \ar[dl, "g_L" description, hook'] \\
            L \ar[rr, crossing over, "m" description, pos=0.25, hook] \ar[dd, equals] & & G_L \\
              & K \ar[rr, hook, "\eta_K" description, pos=0.27] \ar[dl, "l" description, hook'] & &  T(K) \ar[dl, "l'" description, hook'] \\
            L \ar[rr, "t_L" description, hook] && L' \ar[from=uu,crossing over, "\alpha" description, pos=0.75]
         \end{tikzcd} .
    \end{center}
    
    By Lemma~\ref{lemma:on:u}, we know that the back face is a pullback square. Because the floor is a pushout, and the vertical faces are all pullbacks, it follows that the top face of the cube is a pushout, using the fact that pushouts are stable under pullback in rm-quasiadhesive categories. Thus the top row of Diagram~\eqref{eq:pbpostrong:step:to:dpo:step} defines a DPO step.
    \qed
\end{proof}

\begin{theorem}
    \label{thm:pbpostrong:models:others}
    Let $\CC$ be a quasitopos, and let matches $m$ be regular monic. Then:
    \begin{center}
    \begin{tikzpicture}[default,nodes={rectangle,inner sep=3mm},baseline=(l.base)]
        \node (t) {\pbpostrong};
        \node (l) at (t.west) [anchor=east,yshift=-3.5mm]  
          {$\text{SqPO} \models \text{AGREE}$};
        \node (r) at (t.east) [anchor=west,xshift=-2mm,yshift=-3.5mm]  
          {$\text{DPO}$};
        \node (b) at ($(t)+(0,-7mm)$) {PBPO};
        \node at ($(l.north east)!0.5!(t.south west)$) [rotate=35] {$\models$};
        \node at ($(r.north west)!0.5!(t.south east) + (-.5mm,0mm)$) [rotate=180-35] {$\models$};
        \node at ($(t)!0.5!(b)$) [rotate=90] {$\models$};
    \end{tikzpicture}
    \end{center}
\end{theorem}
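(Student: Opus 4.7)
The theorem is essentially a compilation of the modeling results established in the three preceding subsections, together with one citation from the literature, so the plan is to assemble these pieces rather than prove anything substantively new. First, the claim PBPO $\models$ \pbpostrong under regular monic matching is exactly the content of Corollary~\ref{corr:pbpostrong:models:pbpo}(1): for any PBPO rule $\rho$, the class $\compactiso{\rho}$ collapses (up to isomorphism) to a single \pbpostrong rule $\tau$ with $\pbpomonicstepminimal{\rho} = \pbpostrongstepminimal{\tau}$, and by hypothesis on matches we are working with $\pbpomonicstepminimal{\rho}$ rather than the unrestricted $\Rightarrow_{\mathrm{PBPO}}^\rho$.

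Next, DPO $\models$ \pbpostrong is Theorem~\ref{thm:pbpostrong:models:dpo}, which already assumes $m \in \MM = \RegC$ and therefore applies directly. For AGREE $\models$ \pbpostrong, apply Corollary~\ref{corr:pbpostrong:models:agree} to the stable system $\MM = \RegC$, whose existence and the corresponding $\MM$-partial map classifier are guaranteed by Proposition~\ref{prop:quasitopos:properties}(1)--(2); the matching hypothesis of the theorem then coincides with the $m \in \MM$ requirement of the AGREE framework. Finally, SqPO $\models$ AGREE is invoked from~\cite[Theorem 2]{corradini2015agree}, which is valid in our setting since quasitoposes supply the $\MM$-partial map classifier infrastructure that result requires.

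The only point that needs a moment of care is bookkeeping on the matching class: each of the three subsidiary results either already restricts to regular monic matches (DPO and AGREE, via $\MM = \RegC$) or reduces to the regular monic case exactly under the hypothesis of the theorem (PBPO, via Corollary~\ref{corr:pbpostrong:models:pbpo}(1)). The main obstacle is therefore essentially cosmetic, namely checking that the stable system $\MM = \RegC$ used for AGREE, DPO and in Remark~\ref{remark:modeling:SPO} is the same class that is used to interpret ``regular monic matching'' in the theorem statement; this is immediate from Proposition~\ref{prop:quasitopos:properties}(1). With that observation, chaining the four modeling statements in the indicated diagram yields the conclusion. \qed
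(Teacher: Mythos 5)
Your proof assembles exactly the same four ingredients as the paper's own proof: SqPO $\models$ AGREE from~\cite[Theorem 2]{corradini2015agree}, Corollary~\ref{corr:pbpostrong:models:pbpo}, Corollary~\ref{corr:pbpostrong:models:agree}, and Theorem~\ref{thm:pbpostrong:models:dpo}. The extra bookkeeping you do on the matching class (checking that $\MM = \RegC$ is the class meant by ``regular monic matching'' throughout) is correct and left implicit in the paper, so your argument is the same approach, just slightly more explicit.
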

\begin{proof}
    By SqPO $\models$ AGREE~\cite[Theorem 2]{corradini2015agree}, Corollary~\ref{corr:pbpostrong:models:pbpo},
    Corollary~\ref{corr:pbpostrong:models:agree} and Theorem~\ref{thm:pbpostrong:models:dpo}.
\qed
\end{proof}

Moreover, we have the following negative result for the other directions.

\begin{proposition}
\label{prop:pbpostrong:cannot:be:modeled:by:sets:of:rules}
    In $\Graph$, which is a quasitopos, not every \pbpostrong{} rule can be modeled by a set of $\FF \in \{ \mathrm{AGREE}, \mathrm{PBPO}, \mathrm{DPO} \}$ rules.
\end{proposition}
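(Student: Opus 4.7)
The plan is to construct a specific \pbpostrong{} rule $\rho$ whose induced rewrite relation cannot be realized as a union of AGREE, PBPO, or DPO rewrite relations. Take $\rho$ with $L = K = R = L' = K' = \{v\}$ (a single vertex, no edges) and all morphisms the identity; the pullback condition of Definition~\ref{def:pbpostrong:rewrite:rule} is trivially satisfied. Computing $\pbpostrongstepminimal{\rho}$: a strong match requires the pullback of $\alpha : G_L \to L'$ along $t_L = \id{L}$ to be $L$ itself; in $\Graph$ this pullback is the $\alpha$-preimage of $t_L(L) = L'$, i.e.\ all of $G_L$, forcing $G_L \cong \{v\}$. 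Hence $\pbpostrongstepminimal{\rho}$ consists, up to isomorphism, of the single pair $(\{v\}, \{v\})$.

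The main lemma is that for each $\FF \in \{\mathrm{AGREE}, \mathrm{PBPO}, \mathrm{DPO}\}$, any $\FF$-rule $\tau$ with a step whose source and target are both $\{v\}$ also admits a step with source the two-vertex edgeless graph $\{v, v'\}$. For DPO, postcompose the original match with the inclusion $\{v\} \hookrightarrow \{v, v'\}$; the gluing condition still holds since $v'$ is isolated and lies outside the match image, so a pushout complement exists. For AGREE, the classifier $\parclassid{m'}$ of the extended match is automatically defined (sending $v'$ to $\star \in T(L)$), and the remaining pullback and pushout constructions go through in $\Graph$. For PBPO, the adherence $\alpha$ extends by sending $v'$ to any vertex of $L'$ (necessarily nonempty, since the original $\alpha$ exists), yielding a valid PBPO match. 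Crucially, the source of the new step is $\{v, v'\} \not\cong \{v\}$, so regardless of the resulting target graph, this new step lies outside $\pbpostrongstepminimal{\rho}$.

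The proposition then follows by contradiction: if a set $S$ of $\FF$-rules satisfied $\bigcup_{\tau \in S} {\Rightarrow^\tau_\FF} = \pbpostrongstepminimal{\rho}$, then some $\tau \in S$ would contribute the pair $(\{v\}, \{v\})$, and by the lemma that same $\tau$ would also contribute a step with source $\{v, v'\}$, which lies in the union but not in $\pbpostrongstepminimal{\rho}$. The main obstacle is the rigorous verification of the extension claim in each formalism; the cases for DPO and AGREE are essentially mechanical, while PBPO requires a little care to check that the induced pullback $G_K'$ and pushout $G_R'$ are well-defined in $\Graph$. The disjointness of the added isolated vertex $v'$ ensures that all constructions factor through coproducts cleanly, so no subtleties arise.
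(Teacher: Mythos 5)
Your proof is correct, but it takes a genuinely different route from the paper's. The paper uses a node-deletion rule ($L = \{x\}$, $K = R = \emptyset$, $L' = \{x,y\}$ edgeless, $K' = \{y\}$) whose \pbpostrong{} relation deletes exactly one node from an edgeless host graph; it then defeats AGREE and DPO by observing that neither can forbid edges in the context (so any rule reproducing one step also fires on hosts with edges), and defeats PBPO separately by exhibiting an adherence that collapses the whole host onto $x$ and deletes everything. Your rule is in a sense more extreme: by taking $t_L = \id{\{v\}}$ you force the strong-match pullback to make $m$ an isomorphism, so the relation is a single pair up to isomorphism, and then one uniform argument --- extend any witnessing match by a disjoint isolated vertex $v'$ --- produces a spurious step in all three formalisms at once (DPO via $G_K + \{v'\}$ as the new pushout complement, AGREE because every mono match is applicable, PBPO because the adherence extends by sending $v'$ anywhere in the nonempty $L'$). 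What your approach buys is uniformity and a very small verification burden; what the paper's example buys is a more informative diagnosis, since it isolates \emph{which} expressiveness feature each formalism lacks (context edge-constraints for AGREE/DPO versus injectivity of the adherence preimage for PBPO), and its rule still has a nontrivial rewrite relation rather than a singleton. Both arguments are sound; the only point worth making explicit in yours is the extensivity fact that pushouts and pullbacks in \Graph{} commute with adding a disjoint summand, which you correctly flag in your closing remark.
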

\begin{proof}
    The \pbpostrong{} rule
    \begin{center}

{
\newcommand{\nodexa}{\vertex{x_1}{cblue!20}}%
\newcommand{\nodexb}{\vertex{x_2}{cblue!20}}%
\newcommand{\nodey}{\vertex{y}{cgreen!20}}%
\newcommand{\nodez}{\vertex{z}{cred!10}}%
\newcommand{\nodeu}{\vertex{u}{cpurple!25}}%
\newcommand{\nodex}{\vertex{x}{cblue!20}}%
\begin{center}\vspace{0ex}
  \scalebox{\rulescale}{
  \begin{tikzpicture}[->,node distance=12mm,n/.style={}]
    \graphbox{$L$}{0mm}{0mm}{35mm}{8mm}{-4mm}{-4mm}{
      \node [npattern] (x)
      {\nodex};
    }
    \graphbox{$K$}{36mm}{0mm}{35mm}{8mm}{-4mm}{-4mm}{
    }
    \graphbox{$R$}{72mm}{0mm}{35mm}{8mm}{-8mm}{-4mm}{
    }
    \graphbox{$L'$}{0mm}{-9mm}{35mm}{8mm}{-4mm}{-4mm}{
      \node [npattern] (x)
      {\nodex};
      \node [nset] (y) [right of=x] {\nodey};
    }
    \graphbox{$K'$}{36mm}{-9mm}{35mm}{8mm}{-4mm}{-4mm}{
      \node [] (x) {};
      \node [nset] (y) [right of=x] {\nodey};
    }
    \transparentgraphbox{$R'$}{72mm}{-9mm}{35mm}{8mm}{-8mm}{-4mm}{
      \node [] (x)
      {};
      \node [nset] (y) [right of=x] {\nodey};
    }
  \end{tikzpicture}
  }\vspace{0ex}
\end{center}%
}
    \end{center}
    serves to prove all three claims. It specifies the deletion of a single node in a host graph without edges. AGREE and DPO cannot model this rule even with an infinite set of rules, because they cannot express that the host graph cannot have edges. Although PBPO can express this constraint, it cannot prevent rules from mapping the entire host graph onto $x$, deleting all nodes at once.
    \qed 
\end{proof}

\section{Category \Graphleq}
\label{sec:category:graph:lattice}

Unless one employs a meta-notation or restricts to unlabeled graphs, as we did in Section~\ref{sec:pbpostrong}, it is sometimes impractical to use \pbpostrong in the category \Graph. The following example illustrates the problem.
\medskip

\noindent
\begin{minipage}[c]{0.78\textwidth}
    \begin{example}\label{ex:pbpostrong:cumbersome:in:graph}
        Suppose the set of labels is $\labels = \{0,1\}$. To be able to injectively match pattern
        $
        L = 
        \begin{tikzcd}[column sep=0.5cm]
        0 \arrow[r, "1"] & 0
        \end{tikzcd}
        $
        in any context, one must inject it into the type graph $L'$ shown on the right
        in which every dotted loop represents two edges (one for each label), and every dotted non-loop represents four edges (one for each label, in either direction). 
        For general $\labels$, to allow any context, one needs to include $|\labels|$ additional vertices in $L'$, and $|\labels|$ complete graphs over $V_{L'}$.
    \end{example}
\end{minipage}\hfill%
\begin{minipage}[c]{0.2\textwidth}
    \hfill
    \begin{tikzcd}
    1 \arrow[d, no head, dotted] \arrow[rd, no head, dotted] \arrow[r, no head, dotted] \arrow[no head, dotted, loop, distance=2em, in=125, out=55] & 0 \arrow[d, no head, dotted] \arrow[ld, no head, dotted] \arrow[no head, dotted, loop, distance=2em, in=125, out=55] \\
    0 \arrow[r, "1"] \arrow[r] \arrow[no head, dotted, loop, distance=2em, in=305, out=235] \arrow[r, no head, dotted, bend right=49]               & 0 \arrow[no head, dotted, loop, distance=2em, in=305, out=235]                                                      
    \end{tikzcd}\hspace*{-4mm}%
\end{minipage}
\medskip

Beyond this example, and less easily alleviated with meta-notation, in \Graph it is impractical or  impossible to express rules that involve (i)~arbitrary labels (or classes of labels) in the application condition; (ii)~relabeling; or (iii)~allowing and 
capturing arbitrary subgraphs (or classes of subgraphs) around a match graph. As we will discuss in Section~\ref{sec:discussion}, these features have been non-trivial to express in general for algebraic graph rewriting approaches.

We define a category which allows flexibly addressing all of these issues. Recall the definition of a complete lattice (Definition~\ref{definition:complete:lattice}).

\begin{definition}[\GraphLattice{(\labels,\leq)}]
    For a complete lattice $(\labels, \leq)$, we define the category \GraphLattice{(\labels,\leq)}, where objects are graphs labeled from $\labels$, and  arrows are graph premorphisms $\phi : G \to G'$ that satisfy $\lbl_G(x) \leq \lbl_{G'}(\phi(x))$ for all $x \in V_G \cup E_G$.
\end{definition}

In terms of graph structure, the pullbacks and pushouts in \GraphLattice{(\labels,\leq)} are the usual pullbacks and pushouts in $\Graph$. The only difference is that the labels that are identified by the cospan of the pullback (resp.\ span of the pushout) are replaced by their meet (resp.\ join).

\begin{remark}[Fuzzy Graph Rewriting]
    The idea to label graphs using labels that form a complete lattice, for the purpose of rewriting, is not new. To the best of our knowledge, Mori and Kawahara were the first to propose this~\cite{mori1995fuzzy}, using a single pushout construction. There also exists a series of papers by Parasyuk and Yershov, in which fuzzy graph transformations are studied using single pushouts~\cite{parasyuk2006categorical} and double pushouts~\cite{parasyuk2007categorical, parasyuk2008transformational}. Because \pbpostrong{} rules have both a pattern span and a type span (unlike in the single and double pushout approaches), both lower and upper bounds can be specified on fuzzy values. Moreover, that fuzzy graphs lend themselves well for general relabeling purposes has not yet been observed.
\end{remark}

\begin{remark}
    Analogous to the situation for $\labels$-fuzzy sets (Example~\ref{example:L:fuzzy:set}), we have proven very recently that \GraphLattice{(\labels,\leq)} is a quasitopos if $\labels$ is a complete Heyting algebra~\cite{rosset2023fuzzy}.
\end{remark}

One very simple but useful complete lattice is the following.

\begin{definition}[Flat Lattice]
    Let $\labels^{\bot,\top} = \labels \uplus \{ \bot, \top \}$. We define the \emph{flat lattice} induced by $\labels$ as the poset $(\labels^{\bot,\top}, {\leq})$,
    in which 
    $\bot < l < \top$ for all $l \in \labels$ are the only non-trivial relations. Here, we refer to $\labels$ as the \emph{base label set}.
\end{definition}

One feature flat lattices provide is a kind of ``wildcard element'' $\top$.
\medskip

\noindent
\begin{minipage}[c]{0.65\textwidth}
    \begin{example}[Wildcards]
        Using flat lattices, $L'$ of Example~\ref{ex:pbpostrong:cumbersome:in:graph} can be fully expressed for any base label set $\mathcal{L} \ni 0,1$ as shown on the right (node identities are omitted).
        The visual syntax and naming shorthands of PGR~\cite{overbeek2020patch} (or variants thereof) could be leveraged to simplify the notation further.
    \end{example}
\end{minipage}\hfill%
\begin{minipage}[c]{0.31\textwidth}
    \hfill
    \begin{tikzcd}[row sep=0.3cm, column sep=0.6cm]
                                                                                                                                                                                          & \top \arrow["\top" description, loop, distance=2em, in=125, out=55] \arrow[ld, "\top" description, bend right] \arrow[rd, "\top" description, bend left] &                                                                                                                                                           \\
    0 \arrow[rr, "1" description] \arrow[ru, "\top" description, bend left=67] \arrow[rr, "\top" description, bend right] \arrow["\top" description, loop, distance=2em, in=215, out=145] &                                                                                                                                                          & 0 \arrow[lu, "\top" description, bend right=67] \arrow[ll, "\top" description, bend right] \arrow["\top" description, loop, distance=2em, in=35, out=325]
    \end{tikzcd}
\end{minipage}
\medskip

As the following example illustrates, the expressive power of a flat lattice stretches beyond wildcards: it also enables relabeling of graphs. (Henceforth, we will depict a node $x$ with label $u$ as $x^u$.)

\begin{example}[Relabeling]
\label{example:hard:overwriting}
  As vertex labels we employ the flat lattice induced by the set $\{\, a,b,c,\ldots \,\}$, and assume edges are unlabeled for notational simplicity.
  The diagram 
    {

\newcommand{\nodexa}{\vertex{x_1}{cblue!20}}
\newcommand{\nodexb}{\vertex{x_2}{cblue!20}}
\newcommand{\nodexc}{\vertex{x_3}{cblue!20}}
\newcommand{\nodexd}{\vertex{x_4}{cblue!20}}

\newcommand{\nodea}{\vertex{a}{cgreen!20}}
\newcommand{\nodeb}{\vertex{b}{cpurple!25}}

\newcommand{\nodeaa}{\vertex{a_1}{cgreen!20}}
\newcommand{\nodeab}{\vertex{a_2}{cgreen!20}}
\newcommand{\nodeba}{\vertex{b_1}{cpurple!25}}
\newcommand{\nodebb}{\vertex{b_2}{cpurple!25}}

\newcommand{\nodex}{\vertex{x}{cblue!20}}

\newcommand{\nodez}{\vertex{z}{cred!10}}

\begin{center}
  \scalebox{\rulescale}{
  \begin{tikzpicture}[->,node distance=12mm,n/.style={}]
    \graphbox{$L$}{0mm}{0mm}{35mm}{10mm}{-4mm}{-5.5mm}{
      \node [npattern] (x)
      {\nodex};
       \annotate{x}{$\bot$};
    }
    \graphbox{$K$}{36mm}{0mm}{35mm}{10mm}{-4mm}{-5.5mm}{
      \node [npattern] (x)
      {\nodex};
      \annotate{x}{$\bot$};
    }
    \graphbox{$R$}{72mm}{0mm}{35mm}{10mm}{-4mm}{-5.5mm}{
      \node [npattern] (x)
      {\nodex};
       \annotate{x}{$c$};
    }
    \graphbox{$G_L$}{0mm}{-11mm}{35mm}{10mm}{-4mm}{-5.5mm}{
      \node [npattern] (x)
      {\nodex};
       \annotate{x}{$a$};
       \node [npattern] (z) [right of=x] {\nodez};
       \annotate{z}{$b$};
        \draw [epattern] (x) to node {} (z);
    }
    \graphbox{$G_K$}{36mm}{-11mm}{35mm}{10mm}{-4mm}{-5.5mm}{
      \node [npattern] (x)
      {\nodex};
      \annotate{x}{$\bot$};
      \node [npattern] (z) [right of=x] {\nodez};
       \annotate{z}{$b$};
        \draw [epattern] (x) to node {} (z);
    }
    \graphbox{$G_R$}{72mm}{-11mm}{35mm}{10mm}{-4mm}{-5.5mm}{
      \node [npattern] (x)
      {\nodex};
      \annotate{x}{$c$};
      \node [npattern] (z) [right of=x] {\nodez};
       \annotate{z}{$b$};
        \draw [epattern] (x) to node {} (z);
    }
    \graphbox{$L'$}{0mm}{-22mm}{35mm}{11.3mm}{-4mm}{-6mm}{
      \node [npattern] (x)
      {\nodex};
       \annotate{x}{$\top$};
        \draw [eset,loop=180,looseness=3] (x) to node {} (x);
        \node [nset] (z) [right of=x] {\nodez};
       \annotate{z}{$\top$};
        \draw [eset] (x) to [bend right=20] node {} (z);
         \draw [eset] (z) to [bend right=10] node {} (x);
         \draw [eset,loop=-20,looseness=3] (z) to node {} (z);
    }
    \graphbox{$K'$}{36mm}{-22mm}{35mm}{11.3mm}{-4mm}{-6mm}{
      \node [npattern] (x)
      {\nodex};
       \annotate{x}{$\bot$};
        \draw [eset,loop=180,looseness=3] (x) to node {} (x);
        \node [nset] (z) [right of=x] {\nodez};
       \annotate{z}{$\top$};
        \draw [eset] (x) to [bend right=20] node {} (z);
         \draw [eset] (z) to [bend right=10] node {} (x);
         \draw [eset,loop=-20,looseness=3] (z) to node {} (z);
    }
    \transparentgraphbox{$R'$}{72mm}{-22mm}{35mm}{11.3mm}{-4mm}{-6mm}{
      \node [npattern] (x)
      {\nodex};
       \annotate{x}{$c$};
        \draw [eset,loop=180,looseness=3] (x) to node {} (x);
        \node [nset] (z) [right of=x] {\nodez};
       \annotate{z}{$\top$};
        \draw [eset] (x) to [bend right=20] node {} (z);
         \draw [eset] (z) to [bend right=10] node {} (x);
         \draw [eset,loop=-20,looseness=3] (z) to node {} (z);
    }
  \end{tikzpicture}
  }
\end{center}

}
  \noindent displays a rule ($L,L',K,K',R$)  for overwriting an arbitrary vertex's label with $c$, in any context. The middle row is an application to a host graph $G_L$.
\end{example}

Example~\ref{example:hard:overwriting} demonstrates how (i)~labels in $L$ serve as lower bounds for matching, (ii)~labels in $L'$ serve as upper bounds for matching, (iii)~labels in $K'$ can be used to decrease matched labels (in particular, $\bot$ ``instructs'' to ``erase'' the label by overwriting it with $\bot$, and $\top$ ``instructs'' to preserve labels), and (iv)~labels in $R$ can be used to increase labels. First erasing a label in $K'$ and then increasing it using another label in $R$ effectively establishes an arbitrary relabeling from $G_L$ to $G_R$.

Complete lattices also support modeling sorts.

\begin{example}[Sorts]
    Let $p_1, p_2,\ldots \in \mathbb{P}$ be a set of processes and $d_1,d_2,\ldots \in \mathbb{D}$ a set of data elements. Assume a complete lattice over labels $\mathbb{P} \cup \mathbb{D} \cup \{ \mathbb{P}, \mathbb{D}, \rhd, @ \}$, arranged as in the diagram
    \[
        \forall i \in \mathbb{N}:\\[.5ex]
        \begin{tikzcd}[column sep=5mm,row sep=0.3cm]
                              & \top                                                                       &                 &               \\
        \mathbb{P} \arrow[ru] & \mathbb{D} \arrow[u]                                                       & \rhd \arrow[lu] & @ \arrow[llu] \\
        p_i \arrow[u]         & d_i \arrow[u]                                                              &                 &               \\
                              & \bot \arrow[ruu, bend right] \arrow[u] \arrow[lu] \arrow[rruu, bend right] &                 &              
        \end{tikzcd} .
    \]
    Moreover, assume that the vertices $x,y,\ldots$ in the graphs of interest are labeled with a $p_i$ or $d_i$, and that edges are labeled with a $\rhd$ or $@$. In such a graph,
    \begin{itemize}
        \item 
          an edge $x^{d_i} \xrightarrow{@} y^{p_j}$ encodes that process $p_j$ holds a local copy of datum $d_i$ ($x$ will have no other connections); and
        \item 
          a chain of edges $x^{p_i} \xrightarrow{\rhd} y^{d_k} \xrightarrow{\rhd} z^{d_l} \xrightarrow{\rhd} \cdots \xrightarrow{\rhd} u^{p_j}$ encodes a directed FIFO channel from process $p_i$ to process $p_j \neq p_i$, containing a sequence of elements $d_k, d_l, \ldots$. An empty channel is modeled as $x^{p_i} \xrightarrow{\rhd} u^{p_j}$.
    \end{itemize}
    Receiving a datum through an incoming channel (and storing it locally) can be modeled using the following rule:
        {

\newcommand{\nodexa}{\vertex{x_1}{cblue!20}}
\newcommand{\nodexb}{\vertex{x_2}{cblue!20}}
\newcommand{\nodexc}{\vertex{x_3}{cblue!20}}
\newcommand{\nodexd}{\vertex{x_4}{cblue!20}}

\newcommand{\nodea}{\vertex{a}{cgreen!20}}
\newcommand{\nodeb}{\vertex{b}{cpurple!25}}

\newcommand{\nodeaa}{\vertex{a_1}{cgreen!20}}
\newcommand{\nodeab}{\vertex{a_2}{cgreen!20}}
\newcommand{\nodeba}{\vertex{b_1}{cpurple!25}}
\newcommand{\nodebb}{\vertex{b_2}{cpurple!25}}

\newcommand{\nodex}{\vertex{x}{cblue!20}}

\newcommand{\nodez}{\vertex{z}{cred!10}}

\newcommand{\nodey}{\vertex{y}{cgreen!20}}

\begin{center}
  \scalebox{\rulescale}{
  \begin{tikzpicture}[->,node distance=12mm,n/.style={}]
    \graphbox{$L$}{0mm}{0mm}{34mm}{10mm}{-6mm}{-5.5mm}{
      \node [npattern] (x)
      {\nodexa \ \nodexb};
       \annotate{x}{$\bot$};
       \node [npattern] (y) [right of=x,xshift=4mm] {\nodey};
       \annotate{y}{$\bot$};
       \draw [epattern] (x) to node [above,label,inner sep=0.5mm] {$\rhd$} (y);
    }
    \graphbox{$L'$}{0mm}{-11mm}{34mm}{22mm}{-6mm}{-6mm}{
      \node [npattern] (x)
      {\nodexa \ \nodexb};
       \annotate{x}{$\mathbb{D}$};
        \node [npattern] (y) [right of=x, xshift=4mm] {\nodey};
       \annotate{y}{$\mathbb{P}$};
       \draw [epattern] (x) to node [above,label,inner sep=0.5mm] {$\rhd$} (y);
       \node [nset] (z) at ($(x)!0.5!(y)$) [yshift=-10mm] {\nodez};
       \annotate{z}{$\top$};
       \draw [eset,loop=-150,looseness=3] (z) to node [left,label,inner sep=0.5mm] {$\top$} (z);
       \draw [eset] (z) to [bend left=40] node [below left,label,inner sep=0.5mm] {$\top$} (x);
       \draw [eset] (z) to[bend right=50] node [below right,label,inner sep=0.5mm] {$\top$} (y);
       \draw [eset] (y) to[bend right=30] node [ left,label,inner sep=0.5mm] {$\top$} (z);
    }
    \graphbox{$K'$}{35mm}{-11mm}{40mm}{22mm}{-11mm}{-6mm}{
      \node [npattern] (x)
      {\nodexa};
       \annotate{x}{$\bot$};
        \node [npattern] (y) [right of=x] {\nodey};
       \annotate{y}{$\mathbb{P}$};
       \node [nset] (z) at ($(x)!0.5!(y)$) [yshift=-10mm] {\nodez};
       \annotate{z}{$\top$};
       \draw [eset,loop=-150,looseness=3] (z) to node [left,label,inner sep=0.5mm] {$\top$} (z);
       \draw [eset] (z) to [bend left=40] node [below left,label,inner sep=0.5mm] {$\top$} (x);
       \draw [eset] (z) to[bend right=50] node [below right,label,inner sep=0.5mm] {$\top$} (y);
       \draw [eset] (y) to[bend right=30] node [ left,label,inner sep=0.5mm] {$\top$} (z);
       \node [npattern] (x2) [right of=y] {$\nodexb$};
       \annotate{x2}{$\mathbb{D}$};
    }
    \graphbox{$K$}{35mm}{0mm}{40mm}{10mm}{-11mm}{-5.5mm}{
      \node [npattern] (x)
      {\nodexa};
       \annotate{x}{$\bot$};
        \node [npattern] (y) [right of=x] {\nodey};
       \annotate{y}{$\bot$};
       \node [npattern] (x2) [right of=y] {$\nodexb$};
       \annotate{x2}{$\bot$};
    }
    \graphbox{$R$}{76mm}{0mm}{40mm}{10mm}{-6mm}{-5.5mm}{
      \node [npattern] (x)
      {\nodexa \ \nodey};
       \annotate{x}{$\bot$};
       \node [npattern] (x2) [right of=x, xshift=4mm] {$\nodexb$};
       \annotate{x2}{$\bot$};
       \draw [epattern] (x2) to node [above,label,inner sep=0.5mm] {$@$} (x);
    }
    \transparentgraphbox{$R'$}{76mm}{-11mm}{40mm}{22mm}{-6mm}{-6mm}{
      \node [npattern] (x)
      {\nodexa \ \nodey};
       \annotate{x}{$\mathbb{P}$};
       \node [nset] (z) [yshift=-10mm] {\nodez};
       \annotate{z}{$\top$};
       \draw [eset,loop=-150,looseness=3] (z) to node [left,label,inner sep=0.5mm] {$\top$} (z);
       \draw [eset] (z) to [bend left=70] node [below left,label,inner sep=0.5mm] {$\top$} (x);
       \draw [eset] (z) to[bend right=70,looseness=1.5] node [below right,label,inner sep=0.5mm] {$\top$} (x);
       \draw [eset] (x) to[bend right=20] node [ left,label,inner sep=0.5mm] {$\top$} (z);
        \node [npattern] (x2) [right of=x, xshift=4mm] {$\nodexb$};
       \annotate{x2}{$\mathbb{D}$};
       \draw [epattern] (x2) to node [above,label,inner sep=0.5mm] {$@$} (x);
    }
  \end{tikzpicture}
  }
\end{center}

}
    \noindent The rule illustrates how sorts can improve readability and provide type safety. For instance, the label $\mathbb{D}$ in $L'$ prevents empty channels from being matched. More precisely, always the last element $d$ of a non-empty channel is matched. $K'$ duplicates the node holding $d$: for duplicate $x_1$, the label is forgotten but the connection to the context retained, allowing it to be fused with $y$; and for $x_2$, the connection is forgotten but the label retained, allowing it to be connected to $y$ as an otherwise isolated node.
\end{example}

Finally, a very powerful feature provided by the coupling of \pbpostrong and \GraphLattice{(\labels,\leq)} is the ability to model a general notion of variable, used for matching (possibly disjoint) parts of the context. This is achieved by using multiple context nodes in $L'$ (i.e., nodes not in the image of $t_L$).

\newcommand{\upcxtnode}{\mathcal{C}}

\begin{example}[Variables]
\label{example:variables}
    The rule
    $f(g(x),y) \to h(g(x), g(y), x)$
    on ordered trees
    can be precisely modeled in \pbpostrong by the rule
        {

\newcommand{\noder}{\vertex{r}{cred!20}}
\newcommand{\nodex}{\vertex{v}{cblue!20}}
\newcommand{\nodey}{\vertex{w}{cgreen!20}}
\newcommand{\nodeu}{\vertex{u}{cred!10}}
\newcommand{\nodeva}{\vertex{x_1}{corange!20}}
\newcommand{\nodevb}{\vertex{x_2}{corange!20}}
\newcommand{\nodevap}{\vertex{x'_1}{corange!10}}
\newcommand{\nodevbp}{\vertex{x'_2}{corange!10}}
\newcommand{\nodew}{\vertex{y}{cpurple!25}}
\newcommand{\nodewp}{\vertex{y'}{cpurple!12}}

\begin{center}
  \scalebox{\rulescale}{
  \begin{tikzpicture}[->,node distance=12mm,n/.style={}]
    \graphbox{$L$}{0mm}{0mm}{39mm}{30mm}{3mm}{-6mm}{
      \node [npattern] (x) [short] {\nodex}; \annotate{x}{$f$};
      \node (c) [below of=x,short] {};
      \node [npattern] (y) [left of=c,short] {\nodey}; \annotate{y}{$g$};
      \node [npattern] (w) [right of=c,short] {\nodew}; \annotate{w}{$\bot$};
      \node [npattern] (v) [below of=y,short]
      {\nodeva \ \nodevb}; \annotate{v}{$\bot$};
      
      \draw [epattern] (x) to[out=-160,in=90] node [above,label,inner sep=1mm] {$1$} (y);
      \draw [epattern] (y) to node [left,label] {$1$} (v);
      \draw [epattern] (x) to[out=-20,in=90] node [above,label,inner sep=1mm] {$2$} (w);
    }
    \graphbox{$K$}{40mm}{0mm}{44mm}{30mm}{0mm}{-6mm}{
      \node [npattern] (x) [short] {\nodex}; \annotate{x}{$\bot$};
      \node (c) [below of=x,short] {};
      \node (y) [left of=c,short] {}; 
      \node [npattern] (w) [right of=c,short] {\nodew}; \annotate{w}{$\bot$};

      \node [npattern] (v1) [below of=y,short] {\nodeva}; \annotate{v1}{$\bot$};
      \node [npattern] (v2) [right of=v1,short] {\nodevb}; \annotate{v2}{$\bot$};
    }
    \graphbox{$R$}{85mm}{0mm}{44mm}{30mm}{0mm}{-6mm}{
      \node [npattern] (x) [short] {\nodex}; \annotate{x}{$h$};
      \node [npattern] (z2) [below of=x,short] {$z_2$}; \annotate{z2}{$g$};
      \node [npattern] (z1) [left of=z2,short] {$z_1$}; \annotate{z1}{$g$};
      \node [npattern] (v2) [right of=z2,short] {\nodevb}; \annotate{v2}{$\bot$};
      \node [npattern] (w) [below of=z2,short] {\nodew}; \annotate{w}{$\bot$};
      \node [npattern] (v1) [below of=z1,short] {\nodeva}; \annotate{v1}{$\bot$};

      \draw [epattern] (x) to[out=-160,in=90] node [above,label,inner sep=.5mm] {$1$} (z1);
      \draw [epattern] (x) to node [left,label,inner sep=0.5mm] {$2$} (z2);
      \draw [epattern] (x) to[out=-20,in=90] node [above,label,inner sep=0.5mm] {$3$} (v2);
      \draw [epattern] (z1) to node [left,label,inner sep=0.5mm] {$1$} (v1);
      \draw [epattern] (z2) to node [left,label,inner sep=0.5mm] {$1$} (w);
    }
    \graphbox{$L'$}{0mm}{-31mm}{39mm}{50mm}{3mm}{-6mm}{
      \node [nset] (u) [short] {\nodeu}; \annotate{u}{$\top$};
      \node [npattern] (x) [below of=u,short] {\nodex}; \annotate{x}{$f$};
      \node (c) [below of=x,short] {};
      \node [npattern] (y) [left of=c,short] {\nodey}; \annotate{y}{$g$};
      \node [npattern] (w) [right of=c,short] {\nodew}; \annotate{w}{$\top$};
      \node [npattern] (v) [below of=y,short] {\nodeva \ \nodevb}; \annotate{v}{$\top$};
      \node [nset] (v') [below of=v,short] {\nodevap \ \nodevbp}; \annotate{v'}{$\top$};
      \node [nset] (w') [below of=w,short] {\nodewp}; \annotate{w'}{$\top$};
      
      \draw [eset] (u) to node [left,label] {$\top$} (x);
      \draw [eset,loop=180,looseness=3] (u) to node [left,label] {$\top$} (u);
      \draw [epattern] (x) to[out=-160,in=90] node [above,label,inner sep=1mm] {$1$} (y);
      \draw [epattern] (y) to node [left,label] {$1$} (v);
      \draw [eset] (v) to node [left,label] {$\top$} (v');
      \draw [eset,thinloop=180,looseness=2.5] (v') to node [left,label] {$\top$} (v');
      \draw [epattern] (x) to[out=-20,in=90] node [above,label,inner sep=1mm] {$2$} (w);
      \draw [eset] (w) to node [left,label] {$\top$} (w');
      \draw [eset,loop=180,looseness=3] (w') to node [left,label] {$\top$} (w');
    }
    \graphbox{$K'$}{40mm}{-31mm}{44mm}{50mm}{2mm}{-6mm}{
      \node [nset] (u) [short] {\nodeu}; \annotate{u}{$\top$};
      \node [npattern] (x) [below of=u,short] {\nodex}; \annotate{x}{$\bot$};
      \node (c) [below of=x,short] {};
      \node (y) [left of=c] {};
      \node [npattern] (w) [right of=c] {\nodew}; \annotate{w}{$\top$};
      \node [nset] (w') [below of=w,short] {\nodewp}; \annotate{w'}{$\top$};
      \node [npattern] (v1) [below of=y,short,xshift=-1.8mm] {\nodeva}; 
      \annotate{v1}{$\top$};
      \node [npattern] (v2) [right of=v1, xshift=2.5mm] {\nodevb}; \annotate{v2}{$\top$};
      \node [nset] (v1') [below of=v1,short] {\nodevap}; 
      \annotate{v1'}{$\top$};
      \node [nset] (v2') [below of=v2,short] {\nodevbp}; \annotate{v2'}{$\top$};
      
      \draw [eset] (u) to node [left,label] {$\top$} (x);
      \draw [eset] (v1) to node [left,label] {$\top$} (v1');
      \draw [eset,loop=180,looseness=3] (v1') to node [left,label] {$\top$} (v1');
      \draw [eset] (v2) to node [left,label] {$\top$} (v2');
      \draw [eset,loop=180,looseness=3] (v2') to node [left,label] {$\top$} (v2');
      \draw [eset,loop=180,looseness=3] (u) to node [left,label] {$\top$} (u);
      \draw [eset] (w) to node [left,label] {$\top$} (w');
      \draw [eset,loop=270,looseness=3] (w') to node [below,label] {$\top$} (w');
    }
    \transparentgraphbox{$R'$}{85mm}{-31mm}{44mm}{50mm}{2mm}{-6mm}{
      \node [nset] (u) [short] {\nodeu}; \annotate{u}{$\top$};
      \node [npattern] (x) [below of=u,short] {\nodex}; \annotate{x}{$h$};
      \node [npattern] (z2) [below of=x,short] {$z_2$}; \annotate{z2}{$g$};
      \node [npattern] (z1) [left of=z2,xshift=-1.4mm] {$z_1$}; \annotate{z1}{$g$};
      \node [npattern] (v2) [right of=z2,xshift=1.4mm] {\nodevb}; \annotate{v2}{$\top$};
      \node [npattern] (w) [below of=z2,short] {\nodew}; \annotate{w}{$\top$};
      \node [nset] (w') [below of=w,short] {\nodewp}; \annotate{w'}{$\top$};
      \node [npattern] (v1) [below of=z1,short] {\nodeva}; \annotate{v1}{$\top$};
      \node [nset] (v1') [below of=v1,short] {\nodevap}; \annotate{v1'}{$\top$};
      \node [nset] (v2') [below of=v2,short] {\nodevbp}; \annotate{v2'}{$\top$};
      
      \draw [eset] (u) to node [left,label] {$\top$} (x);
      \draw [eset] (v1) to node [left,label] {$\top$} (v1');
      \draw [eset,loop=180,looseness=3] (v1') to node [left,label] {$\top$} (v1');
      \draw [eset] (v2) to node [left,label] {$\top$} (v2');
      \draw [eset,loop=270,looseness=3] (v2') to node [below,label] {$\top$} (v2');
      \draw [eset,loop=180,looseness=3] (u) to node [left,label] {$\top$} (u);
      \draw [eset] (w) to node [left,label] {$\top$} (w');
      \draw [eset,loop=180,looseness=3] (w') to node [left,label] {$\top$} (w');

      \draw [epattern] (x) to[out=-160,in=90] node [above,label,inner sep=.5mm] {$1$} (z1);
      \draw [epattern] (x) to node [left,label,inner sep=0.5mm] {$2$} (z2);
      \draw [epattern] (x) to[out=-20,in=90] node [above,label,inner sep=0.5mm] {$3$} (v2);
      \draw [epattern] (z1) to node [left,label,inner sep=0.5mm] {$1$} (v1);
      \draw [epattern] (z2) to node [left,label,inner sep=0.5mm] {$1$} (w);
    }
  \end{tikzpicture}
  }
\end{center}

}
    \noindent if one restricts the set of rewritten graphs to straightforward representations of trees: nodes are labeled by symbols, and edges are labeled by $n \in \mathbb{N}$, the position of its target (argument of the symbol).
\end{example}

\newcommand{\encode}[1]{\mathcal{E}(#1)}

In another paper~\cite{overbeek2021from}, we developed the idea of Example~\ref{example:variables} further, and showed that any linear term rewriting system $R$ can be faithfully represented by a \pbpostrong graph rewrite system encoding $\encode{R}$, with the additional property that $R$ terminates iff $\encode{R}$ terminates on finite graphs~\cite[Theorem 62]{overbeek2021from}.

\section{Discussion}
\label{sec:discussion}

We discuss our rewriting (Section~\ref{sec:discussion:rewriting}) and relabeling (Section~\ref{sec:discussion:relabeling}) contributions in turn.

\subsection{Rewriting}
\label{sec:discussion:rewriting}

Other graph rewriting approaches that bear certain similarities to \pbpostrong{} (see also the discussion in~\cite{corradini2019pbpo}) include the double-pullout graph rewriting approach by Kahl~\cite{kahl2010amalgamating}; the cospan SqPO approach by Mantz~\cite[Section 4.5]{mantz2014phd}; and the recent drag rewriting framework by Dershowitz and Jouannaud~\cite{dershowitz2019drags}. Double-pullout graph rewriting also uses pullbacks and pushouts to delete and duplicate parts of the context (extending DPO), but the approach is defined in the context of collagories~\cite{kahl2011collagories}, and to us it is not yet clear in what way the two approaches relate. Cospan SqPO can be understood as being almost dual to SqPO: rules are cospans, and transformation steps consists of a pushout followed by a final pullback complement. An interesting question is whether \pbpostrong{} can also model cospan SqPO. Drag rewriting is a non-categorical approach to generalizing term rewriting, and like \pbpostrong{}, allows relatively fine control over the interface between pattern and context, thereby avoiding issues related to dangling pointers and the construction of pushout complements. Because drag rewriting is non-categorical and drags have inherently more structure than graphs, it is difficult to relate \pbpostrong{} and drag rewriting precisely. These could all be topics for future investigation.

Let us note that the combination of \pbpostrong{} and \GraphLattice{(\labels, \leq)} does not provide a strict generalization of Patch Graph Rewriting (PGR)~\cite{overbeek2020patch}, our conceptual precursor to \pbpostrong{} (Section~\ref{sec:introduction}). This is because patch edge endpoints that lie in the context graph can be redefined in PGR (e.g., the direction of edges between context and pattern can be inverted), but not in \pbpostrong{}. 
Beyond that, \pbpostrong{} is more general and expressive. Therefore, at this point we believe that the most distinguishing and redeeming feature of PGR is its visual syntax, which makes rewrite systems much easier to define and communicate. In order to combine the best of both worlds, our aim is to define a similar syntax for (a suitable restriction of) \pbpostrong{} in the future.

\subsection{Relabeling}
\label{sec:discussion:relabeling}

The coupling of \pbpostrong{} and \GraphLattice{(\labels, \leq)} allows relabeling and modeling sorts and variables with relative ease, and does not require a modification of the rewriting framework. Most existing approaches study these topics in the context of DPO, where the requirement to ensure the unique existence of a pushout complement requires restricting the method and proving non-trivial properties:
\begin{itemize}
    \item 
      Parisi-Presicce et al.~\cite{parisi1986graph} limit DPO rules $L \leftarrow K \to R$ to ones where $K \to R$ is monic (meaning merging is not possible), and where some set-theoretic consistency condition is satisfied. Moreover, the characterization of the existence of rewrite step has been shown to be incorrect~\cite{habel2002relabelling}, supporting the idea that pushout complements are not easy to reason about.
    \item 
      Habel and Plump~\cite{habel2002relabelling} study relabeling using the category of partially labeled graphs. They allow non-monic morphisms $K \to R$, but they nonetheless add two restrictions to the definition of a DPO rewrite rule. Among others, these conditions do not allow hard overwriting arbitrary labels as in Example~\ref{example:hard:overwriting}. Moreover, the pushouts of the DPO rewrite step must be restricted to pushouts that are also pullbacks.
      Finally, unlike the approach suggested by Parisi-Presice et al., Habel and Plump's approach does not support modeling notions of sorts and variables.
      
      Our result that \pbpostrong{} can model DPO in $\Graph$ extends to this relabeling approach in the following sense: given a DPO rule over graphs partially labeled from $\labels$ that moreover satisfies the criteria of~\cite{habel2002relabelling}, we conjecture that there exists a \pbpostrong{} rule in \GraphLattice{(\labels^{\bot,\top}, \leq)} that models the same rewrite relation when restricting to graphs totally labeled over the base label set $\labels$.
\end{itemize}
Later publications largely appear to build on the approach~\cite{habel2002relabelling} by Habel and Plump. For example, Schneider~\cite{schneider2005changing} gives a non-trivial categorical formulation; Hoffman~\cite{hoffman2005graph} proposes a two-layered (set-theoretic) approach to support variables; and Habel and Plump~\cite{habel2012adhesive} generalize their approach to $\mathcal{M}, \mathcal{N}$-adhesive systems (again restricting $K \to R$ to monic arrows). 

Independent of Habel and Plump's approach, there also exists an approach for DPO by Kahl, in which attributed graphs are considered as coalgebras~\cite{kahl2014graph}, providing support for relabeling operations. Relating Kahl's approach to ours is outside the scope of the present paper.

The transformation of attributed structures has been explored in a very general setting by Corradini et al.~\cite{corradini2019pbpo}, which involves an elegant comma category construction and suitable restrictions of the PBPO notions of rewrite rule and rewrite step. We leave relating their and our approach to future work.

\subsubsection*{Acknowledgments} We thank Nicolas Behr for inspiring discussions on quasitoposes and \pbpostrong, and for pointing us to some important references; Andrea Corradini for providing guidance on AGREE and PBPO, and for useful feedback on the conference version of this paper; Clemens Grabmayer for helpful commentary on the subsumption result; and Michael Shulman, who identified the sufficient conditions for materialization for us~\cite{shulman2021mathoverflow}. We are also grateful to anonymous reviewers for many useful suggestions and corrections.

The authors received funding from the Netherlands Organization for Scientific Research (NWO) under the Innovational Research Incentives Scheme Vidi (project.\ No.\ VI.Vidi.192.004).

\bibliographystyle{unsrt}
\bibliography{main}

\end{document}